\documentclass[runningheads,a4paper]{llncs}
\usepackage[letterpaper,margin=1in]{geometry}
\usepackage{graphicx}
\usepackage{amsmath}
\usepackage{amssymb}
\usepackage{tikz}
\usepackage{float}
\usepackage[ruled,vlined,english]{algorithm2e}
\usepackage{subcaption}
\captionsetup{compatibility=false}

\usetikzlibrary{arrows}

\let\doendproof\endproof
\renewcommand\endproof{~\hfill$\blacksquare$\doendproof}

\parskip=4pt

\title{The maximum time of 2-neighbour bootstrap percolation in grid graphs and some parameterized results}

\author{Thiago Marcilon \and Rudini Sampaio}

\institute{Dept. Computa\c c\~ao, Universidade Federal do Cear\'a, Fortaleza, Brazil\\\email{\{thiagomarcilon,rudini\}@lia.ufc.br}}

\begin{document}
\mainmatter
\maketitle

\begin{abstract}
In 2-neighborhood bootstrap percolation on a graph $G$, an infection spreads according to the following deterministic rule: infected vertices of $G$ remain infected forever and in consecutive rounds healthy vertices with at least two already infected neighbors become infected. Percolation occurs if eventually every vertex is infected. The maximum time $t(G)$ is the maximum number of rounds needed to eventually infect the entire vertex set. In 2013, it was proved by Benevides et al \cite{eurocomb13} that $t(G)$ is NP-hard for planar graphs and that deciding whether $t(G)\geq k$ is polynomial time solvable for $k\leq 2$, but is NP-complete for $k\geq 4$. They left two open problems about the complexity for $k=3$ and for planar bipartite graphs. In 2014,  we solved the first problem\cite{wg2014}. In this paper, we solve the second one by proving that $t(G)$ is NP-complete even in grid graphs with maximum degree 3. We also prove that $t(G)$ is polynomial time solvable for solid grid graphs with maximum degree 3. Moreover, we prove that the percolation time problem is W[1]-hard on the treewidth of the graph, but it is fixed parameter tractable with parameters treewidth$+k$ and maxdegree$+k$.
\end{abstract}

\keywords{2-neighbor bootstrap percolation, maximum percolation time, grid graph, fixed parameter tractability, treewidth}

% ----------------------------------------------------------------------------------
% ----------------------------------------------------------------------------------
% ----------------------------------------------------------------------------------
% ----------------------------------------------------------------------------------
\section{Introduction}

We consider a problem in which an infection spreads over the vertices of a connected simple graph $G$ following a deterministic spreading rule in such a way that an infected vertex will remain infected forever. Given a set $S \subseteq V(G)$ of initially infected vertices, we build a sequence $S_0, S_1, S_2, \ldots$ in which $S_0=S$ and $S_{i+1}$ is obtained from $S_i$ using such spreading rule.

Under $r$-neighbor bootstrap percolation on a graph $G$, the spreading rule is a threshold rule in which $S_{i+1}$ is obtained from $S_i$ by adding to it the vertices of $G$ which have at least $r$ neighbors in $S_i$. We say that a set $S$ infects a vertex $v$ at time $i$ if  $v \in S_i \setminus S_{i-1}$.
Let, for any set of vertices $S$ and vertex $v$ of $G$, $t_r(G,S,v)$ be the minimum $t$ such that $v$ belongs to $S_t$ or, if there is no $t$ such that $v$ belongs to $S_t$, then $t_r(G,S,v) = \infty$. 
Also, we say that a set $S_0$ infects $G$, or that $S_0$ is a percolating set of $G$, if eventually every vertex of $G$ becomes infected, that is, there exists a $t$ such that $S_t = V(G)$. If $S$ is a percolating set of $G$, then we define $t_r(G,S)$ as the minimum $t$ such that $S_t = V(G)$. Also, define the {\em percolation time of $G$} as $t_r(G) = \max \{t_r(G,S) : S \text{ is a percolating set of } G\}$. In this paper, we shall focus on the case where $r=2$ and in such case we omit the subscript of the notations $t_r(G,S)$ and $t_r(G)$. Also, from the notation $t(G,S)$ and $t(G,S,v)$, when the parameter $G$ is clear from context, it will be omitted.

Bootstrap percolation was introduced by Chalupa, Leath and Reich \cite{chalupa} as a model
for certain interacting particle systems in physics. Since then it has found applications
in clustering phenomena, sandpiles \cite{sandpiles}, and many other areas of statistical physics, as well as in neural networks \cite{neural2} and computer science \cite{cs1}.

There are two broad classes of questions one can ask about bootstrap percolation. The
first, and the most extensively studied, is what happens when the initial configuration $S_0$ is chosen randomly under some probability distribution? For example, vertices are included in $S_0$ independently with some fixed probability $p$. One would like to know how likely percolation is to occur, and if it does occur, how long it takes. The answer to these questions is now well understood for various types of graphs \cite{holroyd,balogh,balogh2,balogh4,bollobasHolmgrenSmithUzzell}.

The second broad class of questions is the one of extremal questions. For example, what is the smallest or largest size of a percolating set with a given property? The size of the smallest percolating set in the $d$-dimensional grid, $[n]^d$, was studied by Pete and a summary can be found in \cite{BaloghPete}. Morris \cite{morris} and Riedl \cite{riedl} studied the maximum size of minimal percolating sets on the square grid $[n]^2$ and the hypercube $\{0,1\}^d$, respectively, answering a question posed by Bollob\'as. However, the problem of finding the smallest percolating set is NP-hard even on subgraphs of the square grid \cite{jayme13} and it is APX-hard even for bipartite graphs with maximum degree four \cite{waoa-13}. Moreover, it is hard \cite{ningchen} to approximate within a ratio $O(2^{\log^{1-\varepsilon}n})$, for any $\varepsilon>0$, unless $NP\subseteq DTIME(n^{polylog(n)})$.

Another type of question is: what is the minimum or maximum time that percolation can take, given that $S_0$ satisfies certain properties? Recently, Przykucki \cite{Przykucki} determined the precise value of the maximum percolation time on the hypercube $2^{[n]}$ as a function of $n$, and Benevides and Przykucki \cite{fabricio1,fabriciob1} have similar results for the square grid $[n]^2$, also answering a question posed by Bollob\'as. In particular, they have a polynomial time dynamic programming algorithm to compute the maximum percolation time on rectangular grids \cite{fabricio1}.

Here, we consider the decision version of the Percolation Time Problem, as stated below.

\vspace{5 pt}%\bigskip
\noindent{\sc Percolation Time} \\
{\em Input:} A graph $G$ and an integer $k$. \\
{\em Question:} Is $t(G) \geq k$?
\vspace{5 pt}%\bigskip

In 2013, Benevides et.al. \cite{eurocomb13}, among other results, proved that the Percolation Time Problem is polynomial time solvable for $k\leq 2$, but is NP-complete for $k\geq 4$ and, when restricted to bipartite graphs, it is NP-complete for $k\geq 7$. Moreover, it was proved that the Percolation Time Problem is NP-complete for planar graphs. They left three open questions about the complexity for $k=3$ in general graphs, the complexity for $3\leq k\leq 6$ in bipartite graphs and the complexity for planar bipartite graphs.

In 2014, the first and the second questions were solved \cite{wg2014}: it was proved that the Percolation Time Problem is $O(mn^5)$-time solvable for $k=3$ in general graphs and, when restricted to bipartite graphs, it is $O(mn^3)$-time solvable for $k=3$, it is $O(m^2n^9)$-time solvable for $k=4$ and it is NP-complete for $k\geq 5$.

In this paper, we solve the third question of \cite{eurocomb13}.
We prove that the Percolation Time Problem is NP-complete for planar bipartite graphs. In fact, we prove a stronger result: the NP-completeness for grid graphs, which are induced subgraphs of grids, with maximum degree 3.

There are NP-hard problems in grid graphs which are polynomial time solvable for solid grid graphs. For example, the Hamiltonian cycle problem is NP-complete for grid graphs \cite{itai82}, but it is polynomial time solvable for solid grid graphs \cite{lenhart97}. Motivated by the work of \cite{fabricio1} for rectangular grids, we obtain in this paper a polynomial time algorithm for solid grid graphs with maximum degree 3. %However, the problem for general solid grid graphs is still open.

Finally, we prove several complexity results for $t(G)$ in graphs with bounded maximum degree and bounded treewidth, some of which implies fixed parameter tractable algorithms for the Percolation Time Problem. Moreover, we obtain polynomial time algorithms for $(q,q-4)$-graphs, for any fixed $q$, which are the graphs such that every subset of at most $q$ vertices induces at most $q-4$ $P_4$'s. Cographs and $P_4$-sparse graphs are exactly the $(4,0)$-graphs and the $(5,1)$-graphs, respectively. These algorithms are fixed parameter tractable on the parameter $q$.

\section{Percolation Time Problem in grid graphs with $\Delta = 3$}
\label{griddelta3}

In this section, we prove that the Percolation Time Problem is NP-complete in grid graphs with maximum degree $\Delta=3$.
We also show that, when the graph is a grid graph with $\Delta=3$ and $k=O(\log n)$, the Percolation Time Problem can be solved in polynomial time. But, first, let us define a $S$-infection path and, then, prove two lemmas that will be useful in the proofs.

Let $S$ be a percolating set. A path $P = v_0,v_1,\hdots,v_n$ is a $S$-\emph{infection path} if, for every $0\leq i\leq n-1$, $t(S,v_i) < t(S,v_{i+1})$.
Notice that, if $t(S,v)=k$, then there is a $S$-infection path $v_0,v_1,\hdots,v_k=v$, where $t(S,v_i)=i$ for each $0\leq i\leq k$.
%Due to space restrictions, its proof will be omitted.

Roughly speaking, the next lemma shows that, if the $S$-infection time of a vertex $v$ is decreased by the inclusion of a vertex $v'$ to $S$, then there is an infection path starting at $v'$ and ending at $v$.
\begin{lemma}
\label{lemacaminho} 
Let $S$ be a subset of $V(G)$, $v,v'$ be vertices of $G$ and $S' = S\cup\{v'\}$. If $t(S',v)<t(S,v)$, then there is a $S'$-infection path starting in $v'$ and ending in $v$.
\end{lemma}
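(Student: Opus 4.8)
The plan is to induct on $t(S',v)$, the (reduced) infection time of $v$ with respect to $S' = S \cup \{v'\}$. Write $t'(\cdot) = t(S',\cdot)$ and $t(\cdot) = t(S,\cdot)$ throughout. If $t'(v) = 0$ then $v \in S'$; since $t'(v) < t(v)$ we have $v \notin S$, so $v = v'$ and the trivial path $v'$ works. So assume $t'(v) = j \geq 1$ and that the statement holds for every vertex whose reduced infection time is smaller than $j$.

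Since $v$ is infected at time $j$ in the process started from $S'$, it has at least two neighbours in $S'_{j-1}$, i.e.\ at least two neighbours $u$ with $t'(u) \leq j-1 < j = t'(v)$. The key observation is that at least one such neighbour $u$ must also satisfy $t'(u) < t(u)$: if we had $t'(u) = t(u)$ for \emph{all} neighbours $u$ of $v$ with $t'(u) \leq j-1$, then those neighbours are infected no later than time $j-1$ already in the $S$-process, so $v$ would be infected by time $j$ in the $S$-process as well, contradicting $t(v) > t'(v) = j$. Pick such a neighbour $u$; then $t'(u) < \min\{j, t(u)\} \le t(u)$, and in particular $t'(u) < t(u)$, so by the induction hypothesis there is an $S'$-infection path $P$ from $v'$ to $u$. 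Appending the edge $uv$ gives a walk from $v'$ to $v$; since $t'(u) < t'(v)$ along the last edge and $t'$ is strictly increasing along $P$ (hence the vertices of $P$ are distinct and none equals $v$, because all of them have $t'$-value $< t'(u) < t'(v)$), this walk is in fact an $S'$-infection path, completing the induction.

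The only subtle point — and the step I would be most careful about — is the "key observation" that guarantees the chosen neighbour $u$ genuinely has a decreased infection time rather than merely an early one; this is what lets the induction hypothesis fire, and it is exactly where the hypothesis $t(v) > t'(v)$ gets used. A secondary point to check is that the concatenated object is a genuine path (no repeated vertices): this follows because strict monotonicity of $t'$ along an infection path forces all vertices to be distinct, and $v$ cannot already appear on $P$ since every vertex of $P$ has strictly smaller $t'$-value than $v$. No properties of grids or of $\Delta = 3$ are needed; the argument is purely about the monotonicity of the bootstrap process under adding a seed.
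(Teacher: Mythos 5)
Your proposal is correct and follows essentially the same route as the paper's proof: induction on $t(S',v)$, locating a neighbour $u$ of $v$ whose infection time strictly decreased (justified by the observation that otherwise $v$ would be infected just as early under $S$), invoking the induction hypothesis on $u$, and appending the edge $uv$. Your write-up is in fact slightly more careful than the paper's at two points the paper glosses over — the degenerate case $v=v'$ with $t(S',v)=0$, and the verification that the concatenation has no repeated vertices — but these are elaborations of the same argument, not a different one.
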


\begin{proof}
Notice that, since $t(S',v)<t(S,v)$, then $t(S',v)<\infty$, i.e., $S'$ infects $v$. Also, $t(S',v)$ cannot be 0; otherwise,  $v\in S'$ and then $S=S'$ and $t(S,v)=t(S',v)$, a contradiction. Thus $t(S',v) \geq 1$.

Let us prove by induction on $t(S',v)$ that there is a $S'$-infection path starting in $v'$ and ending in $v$. For $t(S',v) = 1$, we have that $v$ must be neighbor of two vertices in $S'$, where one of these two vertices must be $v'$ because, otherwise, $t(S,v) = 1 = t(S',v)$. Thus, the path $v',v$ is a $S'$-infection path.

Now, suppose that the theorem holds for all values less than $k$. Let us prove that the theorem still holds if $t(S',v) = k$. We have that $v$ must have a neighbor $u$ such that $t(S',u) < t(S,u)$ and $t(S',u) < k$ because, otherwise, we would have that $t(S',v) = t(S,v)$. Thus, by our inductive hypothesis, there is a $S'$-infection path from some vertex $v'$ to $u$ and, since $t(S',u) < t(S',v) = k$, there is a $S'$-infection path from $v'$ to $v$.
\end{proof}

The next lemma, which is valid for every graph with maximum degree 3, is the main technical lemma of this section.
\begin{lemma}
\label{lemainfcam}
Let $G$ be a connected graph with $\Delta(G)=3$ and $k$ a non-negative integer. Then, $t(G)\geq k$ if and only if $G$ has an induced path $P$ where either all vertices of $V(P)$ have degree 3 and $|E(P)|\geq 2k-2$ or all vertices of $V(P)$ have degree 3, except for one of his extremities, which has degree 2, and $|E(P)|\geq k-1$.
\end{lemma}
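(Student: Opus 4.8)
The proof splits into the two implications, the ``only if'' one carrying the weight. For the ``if'' direction, given such an induced path $P=v_0,\dots,v_m$, I would take $S=V(G)\setminus V(P)$ as the percolating set and follow the infection along $P$. The point of the degree-$3$ hypothesis is that each $v_i$ has exactly one neighbour outside $P$ — two, for the degree-$2$ extremity in case~(b) — and every neighbour outside $P$ lies in $S$, hence is infected at time $0$. So in case~(a) both endpoints $v_0,v_m$ are infected at time $1$, the infection proceeds inward with $v_i$ infected at time $\min\{i,m-i\}+1$, and the last vertex infected is the midpoint, at time $\lfloor m/2\rfloor+1\ge k$ since $m\ge 2k-2$. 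In case~(b) only the degree-$3$ extremity, say $v_m$, is infected at time $1$; then $v_{m-j}$ is infected at time $j+1$, and $v_0$ — the degree-$2$ end, which must wait for \emph{both} of its neighbours — is infected at time $m+1\ge k$ since $m\ge k-1$. In both cases $S$ percolates and $t(G,S)\ge k$, hence $t(G)\ge k$.

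For the ``only if'' direction, fix a percolating set $S$ and a vertex $v$ with $t(S,v)=t:=t(G,S)\ge k$, and take an $S$-infection path $w_0,w_1,\dots,w_t=v$ with $w_0\in S$ and $t(S,w_i)=i$; such a path exists because a vertex infected at time $j\ge 1$ always has a neighbour infected at time exactly $j-1$. The key structural fact — and the only place $\Delta(G)=3$ is needed — is that \emph{every internal vertex $w_i$, $1\le i\le t-1$, has degree exactly $3$}: being infected at time $i$, $w_i$ has at least two neighbours infected by time $i-1$, but its successor $w_{i+1}$ is infected at time $i+1$ and so is not one of them, which forces $w_i$ to have two neighbours besides $w_{i+1}$. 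A companion observation is that every chord of $w_0,\dots,w_t$ is incident to $w_0$: a chord $w_iw_j$ with $1\le i<j$ would require $w_i$ (infected at time $i$) to have two neighbours infected by time $i-1$ among $\{w_{i-1},w_{i+1},w_j\}$, but $w_{i+1}$ and $w_j$ are infected at times $i+1$ and $j\ge i+2$, too late. Deleting $w_0$ thus leaves an induced path $w_1,\dots,w_t$ with $t-1\ge k-1$ edges, every vertex of which has degree $3$ except possibly $v=w_t$. If $\deg(v)=2$ this is exactly case~(b).

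It remains to treat $\deg(v)=3$, where the path must be doubled. Ordering the infection times of the three neighbours of $v$, the middle one is exactly $t-1$ and the largest is $t-1$ or $t$; a short count then shows that either $v$ has two neighbours infected at time $t-1$, or $v$ has a neighbour $c$ with $t(S,c)=t$ which in turn has a neighbour infected at time $t-1$. Running the construction above from this second source — the second neighbour of $v$, or that neighbour of $c$ — and gluing the two induced paths at $v$ (respectively across the edge $vc$) produces a path all of whose vertices have degree $3$ and which has at least $2t-2\ge 2k-2$ edges, giving case~(a). The delicate point, which I expect to be the main obstacle, is that this glued walk need not be an induced path: the two branches might share a vertex or be joined by an edge. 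I would control this with the same timing argument — a shared vertex, or an edge joining the branches, is pinned down by the infection times, and from the first level at which the branches interact they must coincide all the way to $S$ — so that either the branches are vertex-disjoint and chordless, in which case we are done, or one truncates at that first level and re-roots the whole argument there, the remaining work being to check that this truncation never destroys more than an allowable number of edges, so that case~(a) or~(b) still applies.
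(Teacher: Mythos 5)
Your overall strategy is the same as the paper's: for the ``if'' direction you take $S=V(G)\setminus V(P)$ (which coincides with the paper's $S'\cup Y$) and track the infection along $P$, and for the ``only if'' direction you extract one infection path ending at $v$ when $\deg(v)=2$ and glue two infection paths through $v$ (or through the edge to a neighbour also infected at time $t$) when $\deg(v)=3$. The ``if'' direction, the degree-count showing internal vertices of an infection path have degree exactly $3$, the chordlessness of a single infection path, and the case analysis on the neighbours of $v$ are all correct and match the paper.

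The genuine gap is exactly where you flag it: the inducedness of the glued path in the degree-$3$ case. You leave this open (``the remaining work being to check that this truncation never destroys more than an allowable number of edges''), and the repair you sketch rests on a wrong picture. You claim that from the first level at which the two branches interact ``they must coincide all the way to $S$,'' i.e.\ that the shared portion sits near time $0$ and the branches diverge later, which is what would force a truncate-and-re-root step. Under $\Delta(G)=3$ the propagation goes the \emph{other} way: if $v_i=v'_i=u$ with $v_{i+1}\neq v'_{i+1}$, then $u$ would have two distinct neighbours infected at time $i+1$ on top of the two neighbours infected by time $i-1$ that it needs, giving degree at least $4$. So coincidence propagates \emph{toward} $v$, forcing the second branch to end at $v_{t-1}$ or at $v$ itself --- both excluded once you choose the second branch's endpoint to be a neighbour of $v$ distinct from $v_{t-1}$ (such a neighbour always exists by your own count of infection times). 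Hence the branches are vertex-disjoint outright and no truncation is ever needed. Cross-edges are then killed by the observation you already made for chords: for an internal path vertex $w_i$, every neighbour other than its successor is infected at time at most $i-1$; so a cross-edge $v_iv'_j$ with neither endpoint the other's successor yields $j\le i-1$ and $i\le j-1$ simultaneously, and the residual endpoint cases ($i=t$ or $j=t'$) collapse to the excluded identifications $v'_{t'}=v_{t-1}$ or $v'_{t'}=v$. This is precisely the content of the paper's (admittedly terse) remark that a violation would force $v_{t-1}=v'_{t'}$ or $v_t=v'_{t'}$. With that substitution your argument closes; as written, the truncation fallback is both unproved and unnecessary.
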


\begin{proof}
First, suppose that $t(G)\geq k$. Let us prove that $G$ has an induced path $P$ where either all vertices in $V(P)$ have degree 3 and $|E(P)| \geq 2k-2$ or all vertices in $V(P)$ have degree 3, except for one of his extremities, which has degree 2, and $|E(P)| \geq k-1$.

Since $t(G) \geq k$, there is a percolating set $S$ such that $t(S) \geq k$. Let $t=t(S)$ and let $v$ be a vertex that is infected by $S$ at time $t$. Note that $v$ cannot have degree 1, because otherwise $v\in S$, a contradiction. So, let us divide the proof in 2 cases.

The first case occurs when $v$ has degree 2. In this case, let $P= v_1,\hdots,v_{t-1},v_t = v$ be a $S$-infection path where each $v_i$ is infected at time $i$ by $S$. Thus, we have that all vertices $v_1,v_2,\hdots,v_{t-2},v_{t-1}$ have degree three because each vertex $v_i$, for $1 \leq i \leq t-1$, must have two neighbors infected by $S$ at time $\leq i-1$ and, additionally, $v_i$ also has $v_{i+1}$, which is the next vertex in $P$, as his neighbor. Thus, since $\Delta = 3$, each vertex $v_i \in V(P)$, for $1 \leq i \leq t-1$, has exactly one neighbor infected at time $i+1$ by $S$, which is the vertex $v_{i+1}$, and has no neighbor infected at time $\geq i+2$ by $S$, which implies that no two consecutive vertices in $P$ are neighbors. Therefore, we have that $P$ is an induced path in $G$ such that $|E(P)| = t-1 \geq k-1$. 

The second case occurs when $v$ has degree 3. Thus, since there is no vertex infected by $S$ at time greater than the time of $v$, we have that $v$ has one neighbor infected by $S$ at time $\leq t-1$, another neighbor infected by $S$ at time $t-1$ and yet another neighbor infected by $S$ at time either $t-1$ or $t$. Let $t'$ be the infection time of the neighbor of $v'$ that is infected by $S$ at the greatest time among the neighbors of $v$, which may be $t$ or $t-1$. Let $P_1 = v_1,\hdots,v_{t-1},v_t = v$ and $P_2 = v'_1,\hdots,v'_{t'-1},v'_{t'}$ be two $S$-infection paths where each $v_i$ and $v'_i$ is infected at time $i$ by $S$ and $v'_{t'}$ is the neighbor of $v$ that is infected by $S$ at time $t'$. By the same arguments used in the first case, we have that both $P_1$ and $P_2$ are induced paths in $G$. Additionally, no vertex of $P_1$ is adjacent to a vertex of $P_2$, except the vertex $v$ that is adjacent to $v'_{t'}$ because, otherwise, we would have either that $v_{t-1} = v'_{t'}$ or that $v_t = v'_{t'}$.
Thus, let $P$ be the path $v_1,\hdots,v_{t-1},v, v'_{t'},\hdots,v'_2,v'_1$, which is the resulting path from the union of the induced paths $P_1$ and $P_2$ through the edge between the vertices $v$ and $v'_{t'}$. We have that $P$ is an induced path in $G$ because there is no edge between two non-consecutive vertices of $P$. We also have that all vertices in $P$ have degree 3 and $|E(P)| = t + t' - 1 \geq t + (t-1) - 1 \geq 2k-2$.

Now, suppose that $G$ has an induced path $P$ where either all of his vertices has degree 3 and $|E(P)| \geq 2k-2$ or all of his vertices has degree 3, except for one of his extremities, which has degree 2, and $|E(P)| \geq k-1$. Let us prove that $t(G) \geq k$.

Suppose that $G$ has an induced path $P = v_1,v_2,\hdots,v_t$, for $t \geq k$, where all of his vertices have degree 3, except for $v_t$, which has degree 2. Since each vertex $v_i$ has degree 3, except $v_t$, which has degree 2, and $P$ is an induced path, then each vertex $v_i$ has exactly one neighbor outside of $P$, except for $v_1$, which has two neighbors outside of $P$. Let $S'$ be the set of neighbors of each $v_i$ that is not in $V(P)$. It is easy to see that $S'$ infects all vertices in $P$. Since $v_t$ has only two neighbors and one of them is in $S'$, then $t(S',v_t) = t(S',v_{t-1}) + 1$. Since $v_{t-1}$ has 3 neighbors, where one of them is in $S'$ and the other is $v_t$, which is infected by $S'$ after $v_{t-1}$, then $t(S',v_{t-1}) = t(S',v_{t-2}) + 1$. Therefore, by this same argument, we have that, for all $2 \leq i \leq t$, we have that $t(S',v_i) = t(S',v_{i-1}) + 1$. Since $v_1$ has two neighbors in $S'$, then $v_1$ is infected by $S'$ at time 1 and, hence, for all $1 \leq i \leq t$, $t(S',v_i)=i$. Thus, there is a set $S'$ that infects $v_t$ at time $t$.

Now, suppose that $G$ has an induced path $P = v_1,v_2,\hdots,v_{2t-2},v_{2t-1}$, where all of his vertices have degree 3. Since each vertex $v_i$ has degree 3 and $P$ is an induced path, then each vertex $v_i$ has exactly one neighbor outside of $P$, except for both $v_1$ and $v_{2t-1}$, which have two neighbors outside of $P$. Let $S'$ be the set of neighbors of each $v_i$ that is not in $V(P)$. Again, it is easy to see that $S'$ infects all vertices in $P$. Also, similarly to the prior case, it is not hard to see that all vertices at distance $d$ of $v_t$ are infected by $S'$ at time $t - d$, i.e., $S'$ infects a vertex $v_i$ at time $t-|t-i|$.

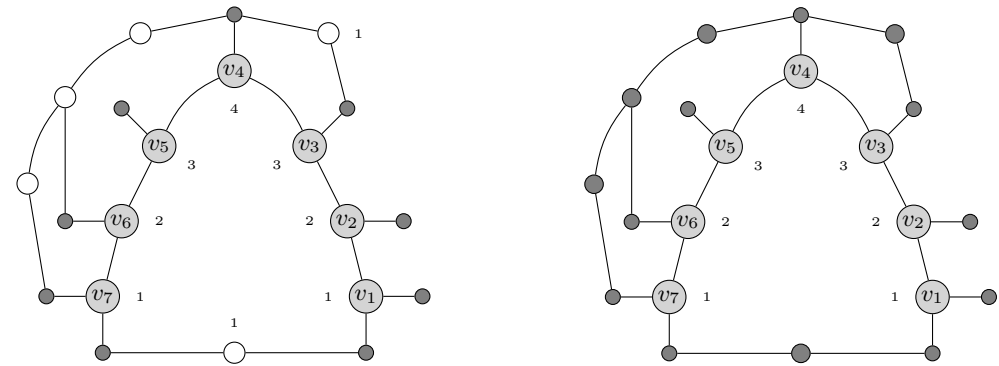
\begin{figure}[ht]
\centering
\begin{minipage}[c]{.45\textwidth}
\centering
\begin{tikzpicture}[scale=1]
\tikzstyle{vertex}=[draw,circle,fill=black!17,minimum size=10pt,inner sep=1pt]
\tikzstyle{nvertex}=[draw,circle,fill=black!50,minimum size=2pt,inner sep=2pt]
\tikzstyle{overtex}=[draw,circle,fill=white,minimum size=8pt,inner sep=2pt]

\node[vertex] (v1) at (1.75,-3) {$v_1$};
\node[vertex] (v2) at (1.5,-2) {$v_2$};
\node[vertex] (v3) at (1,-1) {$v_3$};
\node[vertex] (v4) at (0,0) {$v_4$};
\node[vertex] (v5) at (-1,-1) {$v_5$};
\node[vertex] (v6) at (-1.5,-2) {$v_6$};
\node[vertex] (v7) at (-1.75,-3) {$v_7$};

\node[shift={(180:0.5cm)}] (v1l) at (v1) {\tiny{1}};
\node[shift={(180:0.5cm)}] (v2l) at (v2) {\tiny{2}};
\node[shift={(210:0.5cm)}] (v3l) at (v3) {\tiny{3}};
\node[shift={(270:0.5cm)}] (v4l) at (v4) {\tiny{4}};
\node[shift={(330:0.5cm)}] (v5l) at (v5) {\tiny{3}};
\node[shift={(0:0.5cm)}] (v6l) at (v6) {\tiny{2}};
\node[shift={(0:0.5cm)}] (v7l) at (v7) {\tiny{1}};

\node[nvertex] (v1n1) at (1.75,-3.75) {};
\node[nvertex] (v1n2) at (2.5,-3) {};
\node[nvertex] (v2n) at (2.25,-2) {};
\node[nvertex] (v3n) at (1.5,-0.5) {};
\node[nvertex] (v4n) at (0,0.75) {};
\node[nvertex] (v5n) at (-1.5,-0.5) {};
\node[nvertex] (v6n) at (-2.25,-2) {};
\node[nvertex] (v7n1) at (-2.5,-3) {};
\node[nvertex] (v7n2) at (-1.75,-3.75) {};

\node[overtex] (vo1) at (-1.25,0.5) {};
\node[overtex] (vo2) at (1.25,0.5) {};
\node[overtex] (vo3) at (-2.75,-1.5) {};
\node[overtex] (vo4) at (0,-3.75) {};
\node[overtex] (vo5) at (-2.25,-0.35) {};

\node[shift={(0:0.4cm)}] (v6l) at (vo2) {\tiny{1}};
\node[shift={(90:0.4cm)}] (v7l) at (vo4) {\tiny{1}};

\draw[-] (v1) to (v2);
\draw[-] (v2) to (v3);
\draw[-] (v3) to [bend right=20] (v4);
\draw[-] (v4) to [bend right=20] (v5);
\draw[-] (v5) to (v6);
\draw[-] (v6) to (v7);

\draw[-] (v1) to (v1n1);
\draw[-] (v1) to (v1n2);
\draw[-] (v2) to (v2n);
\draw[-] (v3) to (v3n);
\draw[-] (v4) to (v4n);
\draw[-] (v5) to (v5n);
\draw[-] (v6) to (v6n);
\draw[-] (v7) to (v7n1);
\draw[-] (v7) to (v7n2);

\draw[-] (vo1) to (v4n);
\draw[-] (vo1) to [bend right=15] (vo5);
\draw[-] (vo2) to (v3n);
\draw[-] (vo2) to (v4n);
\draw[-] (vo3) to (v7n1);
\draw[-] (vo4) to (v1n1);
\draw[-] (vo4) to (v7n2);
\draw[-] (vo5) to [bend right=15] (vo3);
\draw[-] (vo5) to (v6n);

\end{tikzpicture}
\end{minipage} 
%\hfill
\begin{minipage}[c]{.45\textwidth}
\centering
\begin{tikzpicture}[scale=1]
\tikzstyle{vertex}=[draw,circle,fill=black!17,minimum size=10pt,inner sep=1pt]
\tikzstyle{nvertex}=[draw,circle,fill=black!50,minimum size=2pt,inner sep=2pt]
\tikzstyle{overtex}=[draw,circle,fill=black!50,minimum size=7pt,inner sep=2pt]

\node[vertex] (v1) at (1.75,-3) {$v_1$};
\node[vertex] (v2) at (1.5,-2) {$v_2$};
\node[vertex] (v3) at (1,-1) {$v_3$};
\node[vertex] (v4) at (0,0) {$v_4$};
\node[vertex] (v5) at (-1,-1) {$v_5$};
\node[vertex] (v6) at (-1.5,-2) {$v_6$};
\node[vertex] (v7) at (-1.75,-3) {$v_7$};

\node[shift={(180:0.5cm)}] (v1l) at (v1) {\tiny{1}};
\node[shift={(180:0.5cm)}] (v2l) at (v2) {\tiny{2}};
\node[shift={(210:0.5cm)}] (v3l) at (v3) {\tiny{3}};
\node[shift={(270:0.5cm)}] (v4l) at (v4) {\tiny{4}};
\node[shift={(330:0.5cm)}] (v5l) at (v5) {\tiny{3}};
\node[shift={(0:0.5cm)}] (v6l) at (v6) {\tiny{2}};
\node[shift={(0:0.5cm)}] (v7l) at (v7) {\tiny{1}};

\node[nvertex] (v1n1) at (1.75,-3.75) {};
\node[nvertex] (v1n2) at (2.5,-3) {};
\node[nvertex] (v2n) at (2.25,-2) {};
\node[nvertex] (v3n) at (1.5,-0.5) {};
\node[nvertex] (v4n) at (0,0.75) {};
\node[nvertex] (v5n) at (-1.5,-0.5) {};
\node[nvertex] (v6n) at (-2.25,-2) {};
\node[nvertex] (v7n1) at (-2.5,-3) {};
\node[nvertex] (v7n2) at (-1.75,-3.75) {};

\node[overtex] (vo1) at (-1.25,0.5) {};
\node[overtex] (vo2) at (1.25,0.5) {};
\node[overtex] (vo3) at (-2.75,-1.5) {};
\node[overtex] (vo4) at (0,-3.75) {};
\node[overtex] (vo5) at (-2.25,-0.35) {};

\draw[-] (v1) to (v2);
\draw[-] (v2) to (v3);
\draw[-] (v3) to [bend right=20] (v4);
\draw[-] (v4) to [bend right=20] (v5);
\draw[-] (v5) to (v6);
\draw[-] (v6) to (v7);

\draw[-] (v1) to (v1n1);
\draw[-] (v1) to (v1n2);
\draw[-] (v2) to (v2n);
\draw[-] (v3) to (v3n);
\draw[-] (v4) to (v4n);
\draw[-] (v5) to (v5n);
\draw[-] (v6) to (v6n);
\draw[-] (v7) to (v7n1);
\draw[-] (v7) to (v7n2);

\draw[-] (vo1) to (v4n);
\draw[-] (vo1) to [bend right=15] (vo5);
\draw[-] (vo2) to (v3n);
\draw[-] (vo2) to (v4n);
\draw[-] (vo3) to (v7n1);
\draw[-] (vo4) to (v1n1);
\draw[-] (vo4) to (v7n2);
\draw[-] (vo5) to [bend right=15] (vo3);
\draw[-] (vo5) to (v6n);

\end{tikzpicture}
\end{minipage}

\caption{A graph with $\Delta = 3$ infected by the set $S'$ to the left and by the percolating set $S$ to the right.}
\label{infcammesmo}
\end{figure}

Thus, in both cases, we have that $S'$ infects each vertex $v_i$ in $P$ at time $t-|t-i|$ (in the first case, since $1 \leq i \leq t$, we have that $i = t - |t-i|$). Let $Y$ be the set of vertices that are neither in $V(P)$ nor in the neighborhood of any vertex in $V(P)$. Let $S = S' \cup Y$. We have that $S$ is a percolating set because all vertices in $V(G)$ are either in $V(P)$ or in the neighborhood of some vertex in $V(P)$ or in $Y$, and $S'$ infects all vertices that are either in $V(P)$ or in the neighborhood of some vertex in $V(P)$. Also, since all neighbors of the vertices in $V(P)$ are in $S'$, $S$ cannot possible infect the vertices in $V(P)$ in a different time than $S'$, as exemplified in Figure \ref{infcammesmo}. Thus, since $S$ is a percolating set that infects $v_t$ at time $t$, we have that $t(S) \geq t \geq k$ and, hence, $t(G) \geq k$.

\end{proof}

% ----------------------------------------------------------------------------------
% ----------------------------------------------------------------------------------
% ----------------------------------------------------------------------------------
% ----------------------------------------------------------------------------------
Before proving the NP-completeness result of this section, we use Lemma \ref{lemainfcam} to show that the Percolation Time Problem is polynomial time solvable for $k=O(\log n)$ when the graph has maximum degree 3.

\begin{theorem}\label{teo-d3-logn}
If $G$ is a graph with maximum degree 3, then deciding whether $t(G)\geq k$ can be done in time $O(n^{2c+3})$ for $k\leq c\cdot\log_2 n$ and $c>0$.
\end{theorem}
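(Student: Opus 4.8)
The plan is to turn Lemma~\ref{lemainfcam} into a bounded-length search. By that lemma, $t(G)\ge k$ holds if and only if $G$ contains one of two kinds of induced path: (a) an induced path with at least $2k-2$ edges all of whose vertices have degree~$3$ in $G$, or (b) an induced path with at least $k-1$ edges all of whose vertices have degree~$3$ except for one endpoint, which has degree~$2$. The first thing I would check is that both conditions are monotone under taking subpaths: if $P$ is of type~(a) with more than $2k-2$ edges, then any subpath of $P$ with exactly $2k-2$ edges is still induced and still has all of its vertices of degree~$3$ in $G$ (degrees are measured in $G$, not in $P$); and if $P$ is of type~(b) with more than $k-1$ edges, the subpath of $P$ with exactly $k-1$ edges that still contains the degree-$2$ endpoint is induced, has that endpoint of degree~$2$ and all other vertices of degree~$3$. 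Hence it suffices to decide whether $G$ has an induced path with \emph{exactly} $2k-2$ edges of type~(a) or one with \emph{exactly} $k-1$ edges of type~(b).

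To look for these paths I would run, from each vertex $v$ of $G$, a depth-bounded depth-first search enumerating the paths $v=v_0,v_1,\dots,v_j$ that could still be completed to a witness, pruning a branch as soon as the path stops being induced or a vertex violates the required degree pattern: for type~(a) start only from degree-$3$ vertices and require every vertex met to have degree~$3$, stopping successfully when $j=2k-2$; for type~(b) start only from degree-$2$ vertices and require every later vertex to have degree~$3$, stopping successfully at $j=k-1$. Because $\Delta(G)=3$, from $v_j$ there are at most two extensions (we may not return to $v_{j-1}$), so the search tree rooted at a fixed start vertex has at most $3\cdot 2^{\,2k-3}$ leaves for type~(a) (at most $3\cdot 2^{\,k-2}$ for type~(b)); summing over the $\le n$ start vertices gives $O(n\,2^{2k})$ explored paths in total. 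Keeping a boolean array marking the vertices currently on the path lets us test in $O(1)$ time both whether a newly appended vertex creates a chord (it has at most two neighbours besides its predecessor, each looked up in the array) and whether its degree is as required, so each node of the search tree costs constant time.

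Finally I would plug in the bound on $k$: for $k\le c\log_2 n$ we have $2^{2k}\le n^{2c}$, so the enumeration, with its $O(1)$ work per explored path, runs in $O(n^{2c+1})$ time; adding the cost of reading $G$ and, if one wants the crudest bookkeeping, re-verifying each reported witness from scratch only contributes polynomial overhead, keeping the total within the claimed $O(n^{2c+3})$. Disconnected inputs are handled by running the procedure on each component and taking the disjunction, since $t(G)=\max_i t(G_i)$, and the degenerate cases $k\le 1$ are covered as well (a witness is then just a single vertex of the appropriate degree). The step that requires care is the subpath-monotonicity reduction of the first paragraph — in particular, retaining the degree-$2$ endpoint when truncating in case~(b) — because without an a priori length bound, searching for long induced paths is intractable; once the target length is fixed at $O(\log n)$, the running time is controlled entirely by $\Delta(G)=3$ and the remaining argument is routine.
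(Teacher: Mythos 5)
Your proposal is correct and follows essentially the same route as the paper: reduce the question to the existence of the two kinds of induced paths via Lemma~\ref{lemainfcam}, then enumerate candidate paths by a depth-bounded depth-first search from each vertex, using $\Delta=3$ to bound the number of explored paths by $O(2^{2k})=O(n^{2c})$ per start vertex. Your additional observations (subpath monotonicity to justify searching for exact lengths, and the $O(1)$-per-step chord/degree test, which actually yields a slightly better exponent than the paper's $O(n^2)$-per-path accounting) are sound refinements of the same argument.
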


\begin{proof}[sketch of the proof]
We can decide whether $t(G) \geq k$ by making use of a modified version of the depth-first search. This version of the depth-first search with maximum search depth $\ell$ traverses all paths with $\ell+1$ vertices starting from some vertex $v$. For each $v \in V(G)$, we will run this version of the depth-first search starting in $v$. If $d(v) = 2$, we run the modified depth-first search with maximum search depth $k-1$. If $d(v) = 3$, we run the modified depth-first search with maximum search depth $2k-2$. If there is a vertex $v$ such that the depth-first search that starts in $v$ finds a path that is an induced path, reaches the maximum depth and passes only by vertices of degree 3, except maybe for $v$, then, by Lemma \ref{lemainfcam}, $t(G) \geq k$. Otherwise, $t(G) < k$.

Now, let us show that this algorithm runs in polynomial time. For each vertex $v$ in $G$, there are at most $3\cdot 2^{\ell-2}$ paths of length $\ell$ in $G$ that starts in $v$, for any $\ell$. In this case, since $\ell\leq 2k-2$, there are at most $3\cdot 2^{2k-2}=3n^{2c}/4$ paths of length $\ell$ in $G$ that starts in $v$. Therefore, since we have $n$ vertices  and we take time $O(n^2)$ to obtain each path, the algorithm runs in time $O(n^{2c+3})$.
\end{proof}

% ----------------------------------------------------------------------------------
% ----------------------------------------------------------------------------------
% ----------------------------------------------------------------------------------
% ----------------------------------------------------------------------------------
Thus, if $k=O(\log n)$, we can find whether $t(G) \geq k$ in polynomial time for every graph $G$ with $\Delta(G) = 3$. However, the following theorem states that the Percolation Time Problem is NP-complete, even when $G$ is restricted to be a grid graph with $\Delta = 3$.

\begin{theorem}
\label{teogrid}
Deciding whether $t(G)\geq k$ is NP-complete when the input $G$ is restricted to be a grid graph with $\Delta(G)\leq 3$.
\end{theorem}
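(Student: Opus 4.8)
The plan is to establish membership in NP and then NP-hardness via a gadget reduction, in both parts leaning on Lemma~\ref{lemainfcam}. For membership, note that for a grid graph $m=O(n)$, and given $G$, $k$ and a candidate percolating set $S$ one can check in polynomial time that $S$ percolates and that $t(S)\ge k$: simulate the $2$-neighbour process round by round (it stabilises in at most $n-1$ rounds, each costing $O(n+m)$). Equivalently, by Lemma~\ref{lemainfcam} one may take the certificate to be an induced path of the prescribed type, which is checkable directly; either way the problem is in NP. One may also assume throughout that $G$ is connected with $\Delta(G)=3$, since disconnected instances reduce to their components and the cases $\Delta(G)\le 2$ (paths and cycles, where $t(G)\le 1$) are trivial.

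For hardness the central observation is exactly Lemma~\ref{lemainfcam}: on graphs of maximum degree $3$, $t(G)\ge k$ holds iff $G$ has an induced path all of whose vertices have degree $3$ with $|E(P)|\ge 2k-2$, or an induced path all of whose vertices have degree $3$ except one degree-$2$ endpoint with $|E(P)|\ge k-1$. Thus it suffices to build, from an instance of a suitable NP-complete problem, a \emph{grid graph} $G$ with $\Delta(G)\le 3$ in which the longest induced path through full-degree vertices (allowing one degree-$2$ endpoint) has length $\ge 2k-2$ (resp. $\ge k-1$) precisely when the instance is positive. A natural source is a Hamiltonicity/longest-induced-path problem on (planar) cubic graphs — close in spirit to the grid Hamiltonian-cycle hardness of \cite{itai82} — which lets the ``path witness'' of Lemma~\ref{lemainfcam} directly encode the object being searched for; a restricted SAT variant with explicit variable and clause gadgets is an equally viable starting point.

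The construction lays the source instance into the integer grid using \emph{wire} gadgets (long corridors of degree-$3$ vertices, freely routable and turnable, that force any long induced degree-$3$ path to follow them), \emph{choice} gadgets at the variables (two induced-path routes of equal length, one per truth value), and \emph{checking} gadgets at the clauses (which grant the extra path length only when at least one incident literal wire is ``on''), all joined by degree-$3$ junction/turn gadgets, plus \emph{crossover} gadgets to let wires cross at degree $\le 3$ without introducing a chord if the source problem is not already planar. Each gadget is a grid graph of constant (or polynomial) size with maximum degree $3$, so $G$ has polynomial size; choosing $k$ as the length implied by a consistent traversal of every gadget, correctness splits into the two expected directions: a positive instance yields, by concatenating the intended routes, an induced degree-$3$ path of the required length, hence $t(G)\ge k$ by Lemma~\ref{lemainfcam}; and conversely, from such a path (supplied by Lemma~\ref{lemainfcam}) a gadget-by-gadget case analysis shows it must enter and leave each gadget through the designated ports consistently, whence one reads off a solution of the source instance.

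The main obstacle is the gadget design under the twin constraints of Lemma~\ref{lemainfcam}: every vertex of the witness path must have degree exactly $3$ and the path must be \emph{induced}. This makes the standard grid-embedding machinery delicate — crossover, turn, clause and variable gadgets must be drawn so that every ``wrong'' route is strictly shorter, or passes through a degree-$2$ (or forbidden degree-$4$) vertex, or creates a chord, while every ``right'' route remains an induced degree-$3$ path — and the heart of the proof is the (figure-assisted) verification that no spurious long induced degree-$3$ path exists anywhere in $G$. Doing this uniformly for both cases of Lemma~\ref{lemainfcam} and pinning $k$ so that the $2k-2$ and $k-1$ thresholds are simultaneously tight is the remaining bookkeeping.
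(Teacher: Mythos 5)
Your overall strategy is the right one --- establish membership in NP, then use Lemma~\ref{lemainfcam} to convert the percolation-time question into the existence of a long induced path of degree-3 vertices (possibly with one degree-2 endpoint), and reduce a hard path problem to that. This is exactly the skeleton of the paper's proof. However, your proposal stops at the point where the actual proof begins: the entire construction is described only as a to-be-designed family of wire, choice, clause, crossover and turn gadgets, and you yourself flag the verification that no spurious long induced degree-3 path exists as ``the remaining bookkeeping.'' That verification, and the gadgets themselves, are the whole content of the hardness argument; as written, there is no reduction to check. In particular, a SAT-style layout with crossover gadgets under the twin constraints of Lemma~\ref{lemainfcam} (every witness-path vertex must have degree exactly 3, the path must be induced, and degree 4 is forbidden everywhere) is genuinely delicate, and it is not at all clear that such crossovers exist; you give no evidence that they do.

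The idea you are missing is to choose a source problem that is \emph{already embedded in the grid}, so that no layout, routing or crossover machinery is needed. The paper reduces from Longest Path restricted to grid graphs with maximum degree 3 (NP-complete because Hamiltonian Path is NP-complete on such graphs), and the reduction is a purely local inflation: triple the scale of the grid, replace each edge by a path on 4 vertices, pad each original vertex up to degree 3 with auxiliary vertices, and attach extra neighbours in prescribed grid positions so that every original and auxiliary vertex has degree exactly 3 while each original vertex acquires exactly one degree-2 ``corner'' vertex at distance 2. Then a path of length $k$ in $G$ corresponds to an induced path of length $3k+1$ in $G'$ ending at a corner vertex, and Lemma~\ref{lemainfcam} gives $t(G')\ge 3k+2$; the converse direction is a short case analysis on which of the two path types the lemma returns. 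None of your proposed gadgets (variables, clauses, crossovers) appear, because the source instance already carries its own grid embedding. Without either this choice of source problem or fully specified and verified gadgets, your argument has a genuine gap.
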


\begin{proof}
Clearly, the problem is in NP. To prove that the problem is also NP-hard, we obtained a reduction from the Longest Path problem with input restricted to be grid graphs with maximum degree 3. The Longest Path problem with input restricted to be grid graphs with maximum degree 3 is a NP-complete problem because the Hamiltonian Path Problem with input restricted to be grid graphs with maximum degree 3 is also NP-complete \cite{papa1} and there is a trivial reduction from the Hamiltonian Path Problem to the Longest Path problem that does not change the input graph: $G$ has an Hamiltonian Path if and only if $G$ has a path greater or equal to $n-1$.

\begin{figure}[ht]
\centering
\begin{tikzpicture}[scale=1]
\tikzstyle{vertex}=[draw,circle,fill=black!25,minimum size=10pt,inner sep=1pt]

\def \scale {1.5}

\node[vertex] (v1) at (0*\scale,0*\scale) {};
\node[vertex] (v2) at (1*\scale,0*\scale) {};
\node[vertex] (v3) at (2*\scale,0*\scale) {};
\node[vertex] (v4) at (3*\scale,0*\scale) {};

\node[vertex] (v5) at (0*\scale,1*\scale) {};
\node[vertex] (v6) at (1*\scale,1*\scale) {};
\node[vertex] (v7) at (2*\scale,1*\scale) {};
\node[vertex] (v8) at (3*\scale,1*\scale) {};

\node[vertex] (v9) at (0*\scale,2*\scale) {};
\node[vertex] (v10) at (3*\scale,2*\scale) {};

\node[vertex] (v11) at (0*\scale,3*\scale) {};
\node[vertex] (v12) at (1*\scale,3*\scale) {};
\node[vertex] (v13) at (2*\scale,3*\scale) {};
\node[vertex] (v14) at (3*\scale,3*\scale) {};

\node[vertex] (v15) at (0*\scale,4*\scale) {};
\node[vertex] (v16) at (1*\scale,4*\scale) {};
\node[vertex] (v17) at (2*\scale,4*\scale) {};
\node[vertex] (v18) at (3*\scale,4*\scale) {};
\node[vertex] (v19) at (-1*\scale,2*\scale) {};

\draw[-] (v1) to (v2);
\draw[-] (v2) to (v3);
\draw[-] (v3) to (v4);
\draw[-] (v5) to (v6);
\draw[-] (v6) to (v7);
\draw[-] (v7) to (v8);

\draw[-] (v1) to (v5);
\draw[-] (v2) to (v6);
\draw[-] (v3) to (v7);
\draw[-] (v4) to (v8);

\draw[-] (v5) to (v9);
\draw[-] (v9) to (v11);
\draw[-] (v8) to (v10);
\draw[-] (v10) to (v14);

\draw[-] (v11) to (v12);
\draw[-] (v12) to (v13);
\draw[-] (v13) to (v14);
\draw[-] (v15) to (v16);
\draw[-] (v16) to (v17);
\draw[-] (v17) to (v18);

\draw[-] (v11) to (v15);
\draw[-] (v12) to (v16);
\draw[-] (v13) to (v17);
\draw[-] (v14) to (v18);

\draw[-] (v9) to (v19);

\end{tikzpicture}
\caption{\label{exemplo}Grid graph with $\Delta = 3$}
\end{figure}
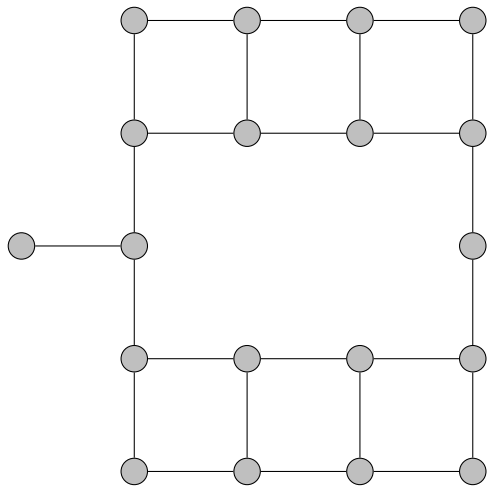

Consider the following reduction from the Longest Path Problem's instance $(G,k)$ where $G$ is restricted to be a grid graph with maximum degree 3 to the Percolation Time Problem's instance $(G',3k+2)$ where $G'$ is also a grid graph with maximum degree 3: Multiply the scale of the grid $G$ by three. Each edge in $G$ becomes a path in $G'$ with 4 vertices where the vertices at the extremities are vertices that were originally in $G$. Let us call an \textit{original vertex} the vertices in $G'$ that were originally in $G$. After that, for each original vertex $v$, if $d(v) < 3$, add to $G'$ $3 - d(v)$ vertices in any free position in the grid adjacent to $v$ and link them to $v$. Thus, after we do that, each original vertex has degree 3 in $G'$. Henceforth, if a vertex in $G'$ is not an original vertex at this point, then we will call it an \textit{auxiliary vertex}. Note that each auxiliary vertex is adjacent to exactly one original vertex and each original vertex is adjacent to 3 auxiliary vertices.

After that, for each auxiliary vertex $v$, add a new vertex adjacent to $v$ in the following manner: if the original neighbor of $v$ is located above it, add a vertex adjacent to $v$ at his left position, if there is not one there already, and link it to $v$. If the original neighbor of $v$ is located below it, add a vertex adjacent to $v$ at his right position, if there is not one there already, and link it to $v$. If the original neighbor of $v$ is located at his left position, add a vertex adjacent to $v$ at the position below it, if there is not one there already, and link it to $v$. If the original neighbor of $v$ is located at his right position, add a vertex adjacent to $v$ at the position above it, if there is not one there already, and link it to $v$. The Figure \ref{bloco} show how a 4x4 block will look like in $G'$ before and after we add these vertices.

\begin{figure}[ht]
\centering
\begin{minipage}[c]{.30\textwidth}
\centering
\begin{tikzpicture}[scale=1]
\tikzstyle{vertex}=[draw,circle,fill=black!25,minimum size=10pt,inner sep=1pt]
\tikzstyle{intvertex}=[draw,circle,fill=black!50,minimum size=2pt,inner sep=2pt]

\node[vertex] (ce) at (0,2) {};
\node[vertex] (cd) at (2,2) {};
\node[vertex] (be) at (0,0) {};
\node[vertex] (bd) at (2,0) {};

\node[intvertex] (ce1cd) at (0.67,2) {};
\node[intvertex] (ce2cd) at (1.33,2) {};
\node[intvertex] (cd1bd) at (2,1.33) {};
\node[intvertex] (cd2bd) at (2,0.67) {};
\node[intvertex] (bd1be) at (1.33,0) {};
\node[intvertex] (bd2be) at (0.67,0) {};
\node[intvertex] (be1ce) at (0,0.67) {};
\node[intvertex] (be2ce) at (0,1.33) {};

\draw[-] (ce) to (ce1cd);
\draw[-] (ce1cd) to (ce2cd);
\draw[-] (ce2cd) to (cd);
\draw[-] (cd) to (cd1bd);
\draw[-] (cd1bd) to (cd2bd);
\draw[-] (cd2bd) to (bd);
\draw[-] (bd) to (bd1be);
\draw[-] (bd1be) to (bd2be);
\draw[-] (bd2be) to (be);
\draw[-] (be) to (be1ce);
\draw[-] (be1ce) to (be2ce);
\draw[-] (be2ce) to (ce);

\end{tikzpicture}
\end{minipage} 
%\hfill
\begin{minipage}[c]{.30\textwidth}
\centering
\begin{tikzpicture}[scale=1]
\tikzstyle{vertex}=[draw,circle,fill=black!25,minimum size=10pt,inner sep=1pt]
\tikzstyle{intvertex}=[draw,circle,fill=black!50,minimum size=2pt,inner sep=2pt]

\def \scale {0.667}

\node[vertex] (ce) at (0*\scale,3*\scale) {};
\node[vertex] (cd) at (3*\scale,3*\scale) {};
\node[vertex] (be) at (0*\scale,0*\scale) {};
\node[vertex] (bd) at (3*\scale,0*\scale) {};

\node[intvertex] (ce1cd) at (1*\scale,3*\scale) {};
\node[intvertex] (ce2cd) at (2*\scale,3*\scale) {};
\node[intvertex] (cd1bd) at (3*\scale,2*\scale) {};
\node[intvertex] (cd2bd) at (3*\scale,1*\scale) {};
\node[intvertex] (bd1be) at (2*\scale,0*\scale) {};
\node[intvertex] (bd2be) at (1*\scale,0*\scale) {};
\node[intvertex] (be1ce) at (0*\scale,1*\scale) {};
\node[intvertex] (be2ce) at (0*\scale,2*\scale) {};

\node[intvertex] (c1) at (1*\scale,2*\scale) {};
\node[intvertex] (c2) at (2*\scale,1*\scale) {};

\node[intvertex] (ce2cdf) at (2*\scale,4*\scale) {};
\node[intvertex] (cd1bdf) at (4*\scale,2*\scale) {};
\node[intvertex] (bd2bef) at (1*\scale,-1*\scale) {};
\node[intvertex] (be1cef) at (-1*\scale,1*\scale) {};

\draw[-] (ce) to (ce1cd);
\draw[-] (ce1cd) to (ce2cd);
\draw[-] (ce2cd) to (cd);
\draw[-] (cd) to (cd1bd);
\draw[-] (cd1bd) to (cd2bd);
\draw[-] (cd2bd) to (bd);
\draw[-] (bd) to (bd1be);
\draw[-] (bd1be) to (bd2be);
\draw[-] (bd2be) to (be);
\draw[-] (be) to (be1ce);
\draw[-] (be1ce) to (be2ce);
\draw[-] (be2ce) to (ce);

\draw[-] (c1) to (ce1cd);
\draw[-] (c1) to (be2ce);
\draw[-] (c2) to (cd2bd);
\draw[-] (c2) to (bd1be);

\draw[-] (ce2cdf) to (ce2cd);
\draw[-] (cd1bdf) to (cd1bd);
\draw[-] (bd2bef) to (bd2be);
\draw[-] (be1cef) to (be1ce);

\end{tikzpicture}
\end{minipage}

\caption{4x4 block before and after addition of the auxiliary vertices' neighbors}
\label{bloco}
\end{figure}
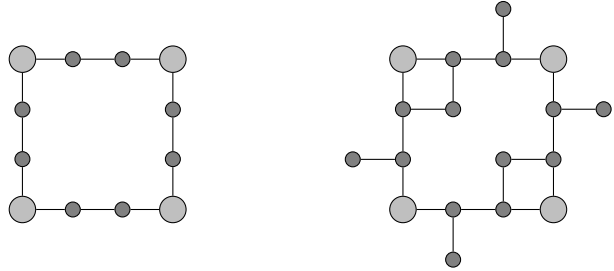

Then, for each auxiliary vertex $v$, if $d(v) = 2$, add a new vertex adjacent to $v$ in the following position: if the original neighbor of $v$ is at the left position of $v$, add a vertex adjacent to $v$ at his right position. If the original neighbor of $v$ is at the right position of $v$, add a vertex adjacent to $v$ at his left position. If the original neighbor of $v$ is below $v$, add a vertex adjacent to $v$ above $v$. If the original neighbor of $v$ is above $v$, add a vertex adjacent to $v$ below $v$. 

Thus, the construction of $G'$ is finished. Since $G$ is a grid graph and, every time an original vertex and an auxiliary vertex are in adjacent positions in the grid, they are linked, then $G'$ is a grid graph.

Note that all original and auxiliary vertices have degree 3 and they are the only vertices that have degree 3. Let us call \textit{corner vertex} all the vertices that have degree 2 in $G'$. Also, note that, for each corner vertex, there is exactly one original vertex at distance 2 of it, and, for each original vertex, there is exactly one corner vertex at distance 2 of it. This happens because each original vertex has degree exactly three. Let $f$ be the bijective function that maps each original vertex to the corner vertex that is at distance 2 of it. The Figure \ref{exemplototal} shows the reduction applied to the grid graph of the Figure \ref{exemplo}. It is worth noting that, in $G'$, a path $P$ that has only original and auxiliary vertices and starts with an original vertex, it has length multiple of 3 if and only if it ends in an original vertex. Also, for each 3 consecutive vertices of this path, two are auxiliary vertices and one is an original vertex.

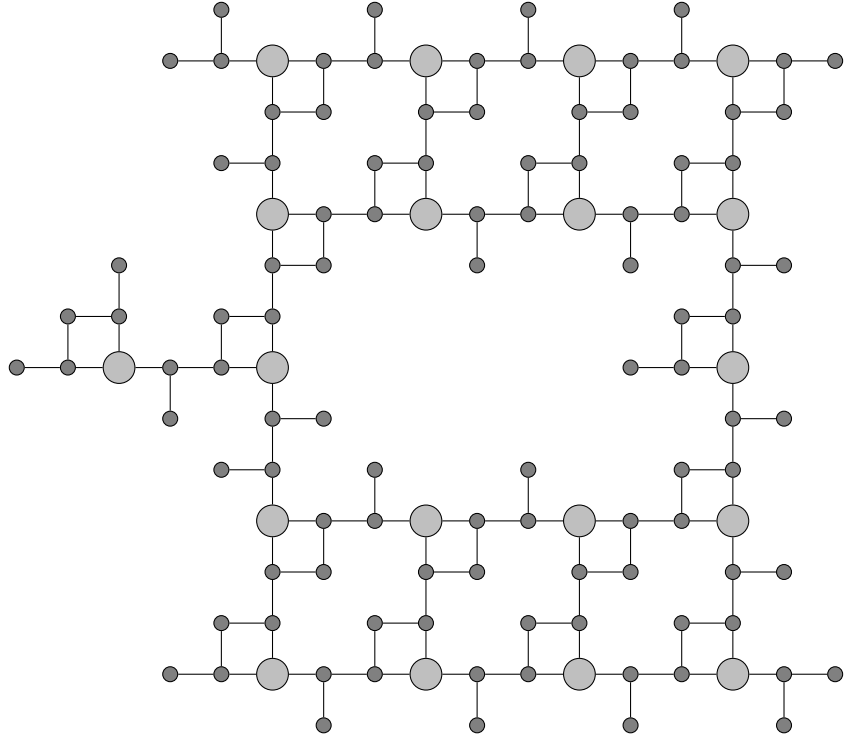
\begin{figure}[ht]
\centering
\begin{tikzpicture}[scale=0.85]
\tikzstyle{vertex}=[draw,circle,fill=black!25,minimum size=12pt,inner sep=1pt]
\tikzstyle{intvertex}=[draw,circle,fill=black!50,minimum size=2pt,inner sep=2pt]

\def \scale {2.4}
\def \scalei {0.8}

\node[vertex] (v1) at (0*\scale,0*\scale) {};
\node[vertex] (v2) at (1*\scale,0*\scale) {};
\node[vertex] (v3) at (2*\scale,0*\scale) {};
\node[vertex] (v4) at (3*\scale,0*\scale) {};

\node[vertex] (v5) at (0*\scale,1*\scale) {};
\node[vertex] (v6) at (1*\scale,1*\scale) {};
\node[vertex] (v7) at (2*\scale,1*\scale) {};
\node[vertex] (v8) at (3*\scale,1*\scale) {};

\node[vertex] (v9) at (0*\scale,2*\scale) {};
\node[vertex] (v10) at (3*\scale,2*\scale) {};

\node[vertex] (v11) at (0*\scale,3*\scale) {};
\node[vertex] (v12) at (1*\scale,3*\scale) {};
\node[vertex] (v13) at (2*\scale,3*\scale) {};
\node[vertex] (v14) at (3*\scale,3*\scale) {};

\node[vertex] (v15) at (0*\scale,4*\scale) {};
\node[vertex] (v16) at (1*\scale,4*\scale) {};
\node[vertex] (v17) at (2*\scale,4*\scale) {};
\node[vertex] (v18) at (3*\scale,4*\scale) {};
\node[vertex] (v19) at (-1*\scale,2*\scale) {};

\node[intvertex] (vi1) at (1*\scalei,0*\scalei) {};
\node[intvertex] (vi2) at (2*\scalei,0*\scalei) {};
\node[intvertex] (vi3) at (4*\scalei,0*\scalei) {};
\node[intvertex] (vi4) at (5*\scalei,0*\scalei) {};
\node[intvertex] (vi5) at (7*\scalei,0*\scalei) {};
\node[intvertex] (vi6) at (8*\scalei,0*\scalei) {};
\node[intvertex] (vi7) at (1*\scalei,3*\scalei) {};
\node[intvertex] (vi8) at (2*\scalei,3*\scalei) {};
\node[intvertex] (vi9) at (4*\scalei,3*\scalei) {};
\node[intvertex] (vi10) at (5*\scalei,3*\scalei) {};
\node[intvertex] (vi11) at (7*\scalei,3*\scalei) {};
\node[intvertex] (vi12) at (8*\scalei,3*\scalei) {};
\node[intvertex] (vi13) at (0*\scalei,1*\scalei) {};
\node[intvertex] (vi14) at (0*\scalei,2*\scalei) {};
\node[intvertex] (vi15) at (3*\scalei,1*\scalei) {};
\node[intvertex] (vi16) at (3*\scalei,2*\scalei) {};
\node[intvertex] (vi17) at (6*\scalei,1*\scalei) {};
\node[intvertex] (vi18) at (6*\scalei,2*\scalei) {};
\node[intvertex] (vi19) at (9*\scalei,1*\scalei) {};
\node[intvertex] (vi20) at (9*\scalei,2*\scalei) {};

\node[intvertex] (vi21) at (0*\scalei,4*\scalei) {};
\node[intvertex] (vi22) at (0*\scalei,5*\scalei) {};
\node[intvertex] (vi23) at (9*\scalei,4*\scalei) {};
\node[intvertex] (vi24) at (9*\scalei,5*\scalei) {};
\node[intvertex] (vi25) at (0*\scalei,7*\scalei) {};
\node[intvertex] (vi26) at (0*\scalei,8*\scalei) {};
\node[intvertex] (vi27) at (9*\scalei,7*\scalei) {};
\node[intvertex] (vi28) at (9*\scalei,8*\scalei) {};

\node[intvertex] (vi29) at (1*\scalei,9*\scalei) {};
\node[intvertex] (vi30) at (2*\scalei,9*\scalei) {};
\node[intvertex] (vi31) at (4*\scalei,9*\scalei) {};
\node[intvertex] (vi32) at (5*\scalei,9*\scalei) {};
\node[intvertex] (vi33) at (7*\scalei,9*\scalei) {};
\node[intvertex] (vi34) at (8*\scalei,9*\scalei) {};
\node[intvertex] (vi35) at (1*\scalei,12*\scalei) {};
\node[intvertex] (vi36) at (2*\scalei,12*\scalei) {};
\node[intvertex] (vi37) at (4*\scalei,12*\scalei) {};
\node[intvertex] (vi38) at (5*\scalei,12*\scalei) {};
\node[intvertex] (vi39) at (7*\scalei,12*\scalei) {};
\node[intvertex] (vi40) at (8*\scalei,12*\scalei) {};
\node[intvertex] (vi41) at (0*\scalei,10*\scalei) {};
\node[intvertex] (vi42) at (0*\scalei,11*\scalei) {};
\node[intvertex] (vi43) at (3*\scalei,10*\scalei) {};
\node[intvertex] (vi44) at (3*\scalei,11*\scalei) {};
\node[intvertex] (vi45) at (6*\scalei,10*\scalei) {};
\node[intvertex] (vi46) at (6*\scalei,11*\scalei) {};
\node[intvertex] (vi47) at (9*\scalei,10*\scalei) {};
\node[intvertex] (vi48) at (9*\scalei,11*\scalei) {};

\node[intvertex] (vi49) at (-1*\scalei,6*\scalei) {};
\node[intvertex] (vi50) at (-2*\scalei,6*\scalei) {};

\node[intvertex] (vix13) at (1*\scalei,-1*\scalei) {};
\node[intvertex] (vx1) at (-1*\scalei,0*\scalei) {};
\node[intvertex] (vxx1) at (-2*\scalei,0*\scalei) {};
\node[intvertex] (vc2) at (2*\scalei,1*\scalei) {};
\node[intvertex] (vc3) at (5*\scalei,1*\scalei) {};
\node[intvertex] (vc4) at (8*\scalei,1*\scalei) {};
\node[intvertex] (vx4) at (10*\scalei,0*\scalei) {};
\node[intvertex] (vxx4) at (11*\scalei,0*\scalei) {};
\node[intvertex] (vc5) at (1*\scalei,2*\scalei) {};
\node[intvertex] (vc6) at (4*\scalei,2*\scalei) {};
\node[intvertex] (vc7) at (7*\scalei,2*\scalei) {};
\node[intvertex] (vc8) at (8*\scalei,4*\scalei) {};

\node[intvertex] (vix3) at (4*\scalei,-1*\scalei) {};
\node[intvertex] (vix5) at (7*\scalei,-1*\scalei) {};
\node[intvertex] (vc1) at (-1*\scalei,1*\scalei) {};
\node[intvertex] (vix20) at (10*\scalei,2*\scalei) {};
\node[intvertex] (vix8) at (2*\scalei,4*\scalei) {};
\node[intvertex] (vix10) at (5*\scalei,4*\scalei) {};
\node[intvertex] (vix21) at (-1*\scalei,4*\scalei) {};
\node[intvertex] (vix24) at (10*\scalei,5*\scalei) {};
\node[intvertex] (vc9) at (-1*\scalei,7*\scalei) {};
\node[intvertex] (vix28) at (10*\scalei,8*\scalei) {};

\node[intvertex] (vix22) at (1*\scalei,5*\scalei) {};
%\node[intvertex] (vxx9) at (2*\scalei,6*\scalei) {};
\node[intvertex] (vc10) at (8*\scalei,7*\scalei) {};
\node[intvertex] (vx10) at (8*\scalei,6*\scalei) {};
\node[intvertex] (vxx10) at (7*\scalei,6*\scalei) {};

\node[intvertex] (vc11) at (1*\scalei,8*\scalei) {};
\node[intvertex] (vc12) at (2*\scalei,10*\scalei) {};
\node[intvertex] (vc13) at (5*\scalei,10*\scalei) {};
\node[intvertex] (vc14) at (8*\scalei,10*\scalei) {};
\node[intvertex] (vc15) at (1*\scalei,11*\scalei) {};
\node[intvertex] (vx15) at (-1*\scalei,12*\scalei) {};
\node[intvertex] (vxx15) at (-2*\scalei,12*\scalei) {};
\node[intvertex] (vc16) at (4*\scalei,11*\scalei) {};
\node[intvertex] (vc17) at (7*\scalei,11*\scalei) {};
\node[intvertex] (vix48) at (8*\scalei,13*\scalei) {};
\node[intvertex] (vx18) at (10*\scalei,12*\scalei) {};
\node[intvertex] (vxx18) at (11*\scalei,12*\scalei) {};
\node[intvertex] (vc19) at (-4*\scalei,7*\scalei) {};
\node[intvertex] (vx191) at (-4*\scalei,6*\scalei) {};
\node[intvertex] (vxx191) at (-5*\scalei,6*\scalei) {};
\node[intvertex] (vx192) at (-3*\scalei,7*\scalei) {};
\node[intvertex] (vxx192) at (-3*\scalei,8*\scalei) {};

\node[intvertex] (vix31) at (4*\scalei,8*\scalei) {};
\node[intvertex] (vix33) at (7*\scalei,8*\scalei) {};
\node[intvertex] (vix36) at (2*\scalei,13*\scalei) {};
\node[intvertex] (vix38) at (5*\scalei,13*\scalei) {};
\node[intvertex] (vc18) at (10*\scalei,11*\scalei) {};
\node[intvertex] (vix41) at (-1*\scalei,10*\scalei) {};

\node[intvertex] (vix50) at (-2*\scalei,5*\scalei) {};
\node[intvertex] (vix51) at (10*\scalei,-1*\scalei) {};
\node[intvertex] (vix52) at (-1*\scalei,13*\scalei) {};

\draw[-] (v1) to (vi1);
\draw[-] (vi1) to (vi2);
\draw[-] (vi2) to (v2);
\draw[-] (v2) to (vi3);
\draw[-] (vi3) to (vi4);
\draw[-] (vi4) to (v3);
\draw[-] (v3) to (vi5);
\draw[-] (vi5) to (vi6);
\draw[-] (vi6) to (v4);

\draw[-] (v5) to (vi7);
\draw[-] (vi7) to (vi8);
\draw[-] (vi8) to (v6);
\draw[-] (v6) to (vi9);
\draw[-] (vi9) to (vi10);
\draw[-] (vi10) to (v7);
\draw[-] (v7) to (vi11);
\draw[-] (vi11) to (vi12);
\draw[-] (vi12) to (v8);

\draw[-] (v1) to (vi13);
\draw[-] (vi13) to (vi14);
\draw[-] (vi14) to (v5);
\draw[-] (v2) to (vi15);
\draw[-] (vi15) to (vi16);
\draw[-] (vi16) to (v6);
\draw[-] (v3) to (vi17);
\draw[-] (vi17) to (vi18);
\draw[-] (vi18) to (v7);
\draw[-] (v4) to (vi19);
\draw[-] (vi19) to (vi20);
\draw[-] (vi20) to (v8);

\draw[-] (v5) to (vi21);
\draw[-] (vi21) to (vi22);
\draw[-] (vi22) to (v9);
\draw[-] (v8) to (vi23);
\draw[-] (vi23) to (vi24);
\draw[-] (vi24) to (v10);
\draw[-] (v9) to (vi25);
\draw[-] (vi25) to (vi26);
\draw[-] (vi26) to (v11);
\draw[-] (v10) to (vi27);
\draw[-] (vi27) to (vi28);
\draw[-] (vi28) to (v14);

\draw[-] (v11) to (vi29);
\draw[-] (vi29) to (vi30);
\draw[-] (vi30) to (v12);
\draw[-] (v12) to (vi31);
\draw[-] (vi31) to (vi32);
\draw[-] (vi32) to (v13);
\draw[-] (v13) to (vi33);
\draw[-] (vi33) to (vi34);
\draw[-] (vi34) to (v14);

\draw[-] (v15) to (vi35);
\draw[-] (vi35) to (vi36);
\draw[-] (vi36) to (v16);
\draw[-] (v16) to (vi37);
\draw[-] (vi37) to (vi38);
\draw[-] (vi38) to (v17);
\draw[-] (v17) to (vi39);
\draw[-] (vi39) to (vi40);
\draw[-] (vi40) to (v18);

\draw[-] (v11) to (vi41);
\draw[-] (vi41) to (vi42);
\draw[-] (vi42) to (v15);
\draw[-] (v12) to (vi43);
\draw[-] (vi43) to (vi44);
\draw[-] (vi44) to (v16);
\draw[-] (v13) to (vi45);
\draw[-] (vi45) to (vi46);
\draw[-] (vi46) to (v17);
\draw[-] (v14) to (vi47);
\draw[-] (vi47) to (vi48);
\draw[-] (vi48) to (v18);

\draw[-] (v9) to (vi49);
\draw[-] (vi49) to (vi50);
\draw[-] (vi50) to (v19);

\draw[-] (v1) to (vx1);
\draw[-] (v4) to (vx4);
%\draw[-] (v9) to (vx9);
\draw[-] (v10) to (vx10);
\draw[-] (v15) to (vx15);
\draw[-] (v18) to (vx18);
\draw[-] (v19) to (vx191);
\draw[-] (v19) to (vx192);

\draw[-] (vx1) to (vc1);
\draw[-] (vx1) to (vxx1);
\draw[-] (vi1) to (vix13);
\draw[-] (vi2) to (vc2);
\draw[-] (vi3) to (vix3);
\draw[-] (vi4) to (vc3);
\draw[-] (vi5) to (vix5);
\draw[-] (vi6) to (vc4);

\draw[-] (vi7) to (vc5);
\draw[-] (vi8) to (vix8);
\draw[-] (vi9) to (vc6);
\draw[-] (vi10) to (vix10);
\draw[-] (vi11) to (vc7);
\draw[-] (vi12) to (vc8);

\draw[-] (vi13) to (vc1);
\draw[-] (vi15) to (vc2);
\draw[-] (vi17) to (vc3);
\draw[-] (vi19) to (vc4);
\draw[-] (vi14) to (vc5);
\draw[-] (vi16) to (vc6);
\draw[-] (vi18) to (vc7);
\draw[-] (vi20) to (vix20);

\draw[-] (vx4) to (vxx4);
\draw[-] (vx4) to (vix51);
\draw[-] (vx10) to (vc10);
\draw[-] (vx10) to (vxx10);
\draw[-] (vx15) to (vxx15);
\draw[-] (vx15) to (vix52);
\draw[-] (vx18) to (vc18);
\draw[-] (vx18) to (vxx18);

\draw[-] (vx191) to (vc19);
\draw[-] (vx191) to (vxx191);
\draw[-] (vx192) to (vc19);
\draw[-] (vx192) to (vxx192);

\draw[-] (vi21) to (vix21);
\draw[-] (vi22) to (vix22);
\draw[-] (vi23) to (vc8);
\draw[-] (vi24) to (vix24);

\draw[-] (vi25) to (vc9);
\draw[-] (vi26) to (vc11);
\draw[-] (vi27) to (vc10);
\draw[-] (vi28) to (vix28);

\draw[-] (vi29) to (vc11);
\draw[-] (vi30) to (vc12);
\draw[-] (vi31) to (vix31);
\draw[-] (vi32) to (vc13);
\draw[-] (vi33) to (vix33);
\draw[-] (vi34) to (vc14);

\draw[-] (vi35) to (vc15);
\draw[-] (vi36) to (vix36);
\draw[-] (vi37) to (vc16);
\draw[-] (vi38) to (vix38);
\draw[-] (vi39) to (vc17);
\draw[-] (vi40) to (vix48);

\draw[-] (vi41) to (vix41);
\draw[-] (vi42) to (vc15);
\draw[-] (vi43) to (vc12);
\draw[-] (vi44) to (vc16);
\draw[-] (vi45) to (vc13);
\draw[-] (vi46) to (vc17);
\draw[-] (vi47) to (vc14);
\draw[-] (vi48) to (vc18);

\draw[-] (vi49) to (vc9);
\draw[-] (vi50) to (vix50);

\end{tikzpicture}
\caption{\label{exemplototal}Grid graph resulting from the reduction applied to the grid graph of the Figure \ref{exemplo}.}
\end{figure}

Now, let us prove that $G$ has a path of length $\geq k$ if and only if $t(G') \geq 3k+2$.

Suppose that $G$ is a grid graph with maximum degree 3 that has a path of length $\geq k$. Let us prove that $t(G') \geq 3k+2$. Since $G$ has a path $P$ of length $\geq k$, we have that $G'$ has an induced path $P$ of length $\geq 3k$ that passes by the same path that $P$ passes, which implies that $P$ passes only by original and auxiliary vertices. Note that, when an auxiliary vertex is in $P$, his auxiliary neighbor is also in $P$.

Let $v$ and $v'$ be the extremities of $P$ and $f(v') = q'$. Since $v$ is an original vertex, then let $w$ be any auxiliary neighbor of $v$ that is not in $V(P)$. Note that all neighbors of $w$, except $v$, are not in $V(P)$. Let $r$ be the vertex auxiliary neighbor of $v'$ that is in $P$ and let $P'$ be the induced path that we obtain from $P$ by adding $w$, by removing $v'$ and by adding all vertices in any smallest path between $r$ and $q'$, excluding $r$, that only have vertices not adjacent to $w$ and passes only by original and auxiliary vertices. Since $P$ is an induced path and we removed one vertex and added only one induced path that has either 1 or 3 vertices to create $P'$, we have that $P'$ is an induced path with length $\geq 3k+1$ where all of its vertices have degree 3, except for $q'$, which has degree 2. Therefore, by Lemma \ref{lemainfcam}, we have that $t(G') \geq 3k+2$.

Now, suppose that $G$ is a grid graph with maximum degree 3 such that, when we apply the reduction to $G$ to create $G'$, we have that $t(G') \geq 3k+2$. Let us prove that $G$ has a path of length $\geq k$. Since $t(G') \geq 3k+2$, applying the Lemma \ref{lemainfcam}, we have that $G'$ has an induced path $P$ where either all vertices in $V(P)$ have degree 3 and $|E(P)| \geq 6k+2$ or all vertices in $V(P)$ have degree 3, except for one of his extremities, which has degree 2, and $|E(P)| \geq 3k+1$.

Firstly, suppose that $G'$ has an induced path $P$ where all vertices in $V(P)$ have degree 3 and $|E(P)| \geq 6k+2$. Since, the only vertices that have degree 3 are the original and auxiliary vertices and for each three consecutive vertices in $P$ there is one original vertex and two auxiliary vertices, it is easy to see that $P$ has at least $k+1$ original vertices and, thus, there is a path in $G$ of length at least $k$.

Finally, suppose that $G'$ has an induced path $P$ where all vertices in $V(P)$ have degree 3, except for one of his extremities, which has degree 2, and $|E(P)| \geq 3k+1$. It is enough to analyze the case $|E(P)| = 3k+1$ because, if $|E(P)| > 3k+1$, any subpath of $P$ of length $3k+1$ that starts at the extremity of $P$ that have degree 2 is an induced path where all of his vertices have degree 3, except for one of his extremities, which has degree 2, and has length $3k+1$. So, let us say that $P$ starts in the vertex that has degree 2. Since the only vertices that have degree 2 are corner vertices, then $P$ starts with a corner vertex. Let $q$ be that corner vertex, let $q' = f^{-1}(q)$ and let $v$ be the other extremity of $P$. 

Suppose that $P$ passes by $q'$. Since $P$ is an induced path, then $q'$ is the third vertex of $P$. Since $q$ and $q'$ are at distance 2 of each other and $|E(P)| = 3k+1$, then $v$ is an auxiliary vertex which his neighbor that is an original vertex, say $v'$, is not in $P$. Let us append $v'$ to $P$ and remove all vertices between $q$ and $q'$, including $q$ and excluding $q'$. So, since $P$ starts at $q'$, an original vertex, ends in $v'$, another original vertex, and has length $3k$, then there is a path in $G$ of length greater or equal to $k$.
%Since $P$ passes only by original and auxiliary vertices and, for each three consecutive vertices in $P$, there is one original vertex and two auxiliary vertices, then there is $k+1$ original vertices in $P$, which means that there is a path in $G$ of length greater or equal to $k$.

Now, suppose that $P$ does not pass by $q'$. Since $|E(P)| = 3k+1$, then $v$ is an auxiliary vertex which his neighbor that is an original vertex, say $v'$, is in $P$. Let us remove $q$, appending $q'$ in his place, and $v$ from $P$. Thus, since $P$ starts at $q'$, an original vertex, ends in $v'$, another original vertex, and has length $3k$, then there is a path in $G$ of length greater or equal to $k$.
%Since $P$ passes only by original and auxiliary vertices and, for each three consecutive vertices in $P$, there is one original vertex and two auxiliary vertices, then there is $k+1$ original vertices in $P$, which means that there is a path in $G$ of length greater or equal to $k$.
\end{proof}

% ----------------------------------------------------------------------------------
% ----------------------------------------------------------------------------------
% ----------------------------------------------------------------------------------
% ----------------------------------------------------------------------------------
\section{Percolation Time Problem in solid grid graphs with $\Delta = 3$}
\label{solidgriddelta3}

A solid grid graph is a grid graph in which all of his bounded faces have area one. There are NP-hard problems in grid graphs that are polynomial time solvable for solid grid graphs. For example, since 1982 it is known that the hamiltonian cycle problem is NP-hard for grid graphs \cite{itai82}, but, in 1997, it was proved that it is polynomial time solvable for solid grid graphs \cite{lenhart97}. Motivated by the work of \cite{fabricio1} on the maximum percolation time for rectangular grids, we obtain in this section a polynomial time algorithm for solid grid graphs with maximum degree 3. 
However, the Percolation Time Problem for solid grid graphs with maximum degree 4 is still open.

%In this section, we obtain a polynomial time algorithm that solves the optimization version of the Percolation Time Problem in solid grid graphs with $\Delta = 3$. A solid grid graph \cite{lenhart97} is a grid graph with no holes.

\begin{theorem}
For any solid grid graph $G$  with $\Delta = 3$, $t(G)$ can be found in $O(n^2)$ time.
\end{theorem}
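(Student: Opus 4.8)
The plan is to use Lemma~\ref{lemainfcam} to turn the computation of $t(G)$ into a longest-induced-path problem with a degree side condition, and then to solve that problem by exploiting the very rigid block structure of solid grid graphs of maximum degree $3$. Concretely, set
\[
\ell_A=\max\{\,|E(P)|:\ P\text{ an induced path of }G\text{ whose vertices all have degree }3\,\},
\]
\[
\ell_B=\max\{\,|E(P)|:\ P\text{ an induced path of }G,\ \text{all of }V(P)\text{ of degree }3\text{ except one extremity of degree }2\,\}.
\]
Since $t(G)\ge k$ holds exactly when $\ell_A\ge 2k-2$ or $\ell_B\ge k-1$, Lemma~\ref{lemainfcam} gives the closed form $t(G)=\max\{\lfloor\ell_A/2\rfloor+1,\ \ell_B+1\}$, with the convention that a term is dropped when $G$ has no path of the corresponding type (as $\Delta(G)=3$, at least $\ell_A$ always exists). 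So it suffices to compute $\ell_A$ and $\ell_B$ in time $O(n^2)$.

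Next I would pin down the structure of connected solid grid graphs with $\Delta\le 3$. A $2\times 2$ block of unit cells, an ``L''\nobreakdash-tromino of cells, or two diagonally touching cells each force a lattice point of degree $4$; hence in a solid graph with $\Delta\le 3$ the union of the (unit) bounded faces is a disjoint union of straight horizontal or vertical strips of cells, and any cycle encloses such a strip. Consequently every biconnected block of $G$ is either a bridge or a $2\times m$ (or $m\times 2$) ``ladder'', and $G$ is a tree of such blocks glued at cut vertices, every cut vertex being a corner of its ladder(s) or a branch point among bridges; in particular $\mathrm{tw}(G)\le 2$, the block--cut tree has $O(n)$ blocks, and $\sum_B|V(B)|=O(n)$. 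Moreover, inside a $2\times m$ ladder an induced path cannot make a ``U\nobreakdash-turn'', so it is a monotone staircase: each of the two rows is traversed in a single direction, with at most one switch between them. Hence, for each of the $O(1)$ choices of entry/exit corners (or of a free internal endpoint), the longest induced path of a ladder realising that choice — taking into account which of its vertices are interior (hence degree $3$), which are corners, and which corners are cut vertices of $G$ — can be found in time $O(|V(B)|^2)$ by direct inspection.

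I would then run a dynamic program over the block--cut tree. What makes this correct is that both ``being an induced path'' and ``having degree $3$ in $G$'' decompose over blocks: two blocks meet only in a cut vertex, a candidate path meets each block in a single subpath, inducedness can be checked block by block, and the $G$-degree of a ladder vertex depends only on whether it is interior or a corner and on whether it is a cut vertex. Rooting the tree, for each block I would compute the longest ``path fragment starting at the parent cut vertex'' and the longest ``fragment contained in the subtree'' (and the degree-$2$-extremity variants needed for $\ell_B$), combining child fragments at cut vertices exactly as in the usual longest-path-in-a-tree recursion and combining fragments inside a ladder by the staircase analysis above. Summing the $O(|V(B)|^2)$ work over blocks gives $\sum_B|V(B)|^2\le\bigl(\sum_B|V(B)|\bigr)^2=O(n^2)$, and finally $t(G)$ is read off from $\ell_A,\ell_B$.

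The main obstacle I expect is not the recursion but the bookkeeping: enumerating correctly the induced staircase paths through a $2\times m$ ladder for every combination of entry/exit corners; recording at each cut vertex which incident edge a fragment uses, so that joining two fragments neither creates a chord nor turns a required degree-$3$ endpoint into a second degree-$2$ endpoint; and handling the two regimes of Lemma~\ref{lemainfcam} (all vertices of degree $3$ versus exactly one degree-$2$ extremity) within a single DP. The work of the proof is in getting these finitely many cases right rather than in any single conceptual step.
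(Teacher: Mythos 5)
Your proposal is correct and follows essentially the same route as the paper: both reduce $t(G)$ via Lemma~\ref{lemainfcam} to longest-induced-path computations with degree constraints, both exploit the decomposition of a solid grid graph with $\Delta=3$ into ladders and paths together with closed-form longest induced paths inside each ladder, and both arrive at $O(n^2)$. The only difference is organizational: the paper contracts each ladder to a weighted $K_4$ (so that induced paths between vertices become unique) and runs a modified depth-first search from every vertex, whereas you arrange the same search as a dynamic program over the block--cut tree.
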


\begin{proof}
If a solid grid graph has $\Delta = 3$, then, since it is $K_{1,4}$-free, it becomes a graph formed only by ladders $L_k$, which are grid graphs with dimensions $2 \times k$, and by paths, possibly linking these ladders by the vertices in their extremities. Let the extremities of a ladder be the four vertices that have only two neighbors in the ladder and let all the other vertices be the vertices internal to the ladder. In Figure \ref{solidgrid}, there is an example of solid grid graph with $\Delta = 3$.

\begin{figure}[ht]
\centering
\begin{tikzpicture}[scale=0.75]
\tikzstyle{vertex}=[draw,circle,fill=black!25,minimum size=6pt,inner sep=1pt]
\tikzstyle{intvertex}=[draw,circle,fill=black!50,minimum size=2pt,inner sep=2pt]

\node[vertex] (v1) at (2,0) {};

\foreach \i in {2,...,11}{
	\node[vertex] (v\i) at (\i-2,1) {};
}
\draw[-] (v1) to (v4);
\foreach \i [evaluate=\i as \j using (\i-1)] in {3,...,11}{
	\draw[-] (v\i) to (v\j);
}

\foreach \i in {12,...,16}{
	\node[vertex] (v\i) at (\i-8,2) {};
}
\foreach \i [evaluate=\i as \j using (\i-1)] in {13,...,16}{
	\draw[-] (v\i) to (v\j);
}
\foreach \i/\j in {6/12,7/13,8/14,9/15,10/16}{
	 \draw[-] (v\i) to (v\j);
}

\foreach \i in {17,...,19}{
	\node[vertex] (v\i) at (0,\i-15) {};
}
\draw[-] (v2) to (v17);
\foreach \i/\j in {17/18,18/19}{
	 \draw[-] (v\i) to (v\j);
}

\foreach \i in {20,...,26}{
	\node[vertex] (v\i) at (1,\i-18) {};
}
\node[vertex] (v27) at (5,6) {};
\node[vertex] (v28) at (0,6) {};

\draw[-] (v3) to (v20);
\foreach \i/\j in {20/21,21/22,22/23,23/24,24/25,25/26}{
	 \draw[-] (v\i) to (v\j);
}

\foreach \i/\j in {17/20,18/21,19/22}{
	 \draw[-] (v\i) to (v\j);
}

\draw[-] (v24) to (v28);

\foreach \i in {29,...,31}{
	\node[vertex] (v\i) at (\i-27,7) {};
}
\draw[-] (v25) to (v29);
\foreach \i/\j in {29/30,30/31}{
	 \draw[-] (v\i) to (v\j);
}

\foreach \i in {32,...,34}{
	\node[vertex] (v\i) at (\i-30,8) {};
}
\draw[-] (v26) to (v32);
\foreach \i/\j in {32/33,33/34}{
	 \draw[-] (v\i) to (v\j);
}

\foreach \i/\j in {29/32,30/33,31/34}{
	 \draw[-] (v\i) to (v\j);
}

\node[vertex] (v36) at (5,7) {};

 \draw[-] (v31) to (v36);

\foreach \i in {38,...,40}{
	\node[vertex] (v\i) at (\i-32,7) {};
}

\draw[-] (v38) to (v36);

\foreach \i in {41,...,43}{
	\node[vertex] (v\i) at (\i-35,8) {};
}

\foreach \i/\j in {38/39,39/40}{
	 \draw[-] (v\i) to (v\j);
}

\foreach \i/\j in {41/42,42/43}{
	 \draw[-] (v\i) to (v\j);
}

\foreach \i/\j in {38/41,39/42,40/43}{
	 \draw[-] (v\i) to (v\j);
}

\foreach \i in {44,...,47}{
	\node[vertex] (v\i) at (\i-35,7) {};
}
\foreach \i in {48,...,51}{
	\node[vertex] (v\i) at (\i-39,6) {};
}

\draw[-] (v40) to (v44);

\foreach \i/\j in {44/45,45/46,46/47}{
	 \draw[-] (v\i) to (v\j);
}
\foreach \i/\j in {48/49,49/50,50/51}{
	 \draw[-] (v\i) to (v\j);
}
\foreach \i/\j in {44/48,45/49,46/50,47/51}{
	 \draw[-] (v\i) to (v\j);
}

\node[vertex] (v52) at (9,5) {};
\node[vertex] (v53) at (8,5) {};

\draw[-] (v48) to (v52);
\draw[-] (v52) to (v53);

\foreach \i in {54,...,58}{
	\node[vertex] (v\i) at (\i-45,4) {};
}
\foreach \i in {59,...,61}{
	\node[vertex] (v\i) at (\i-50,3) {};
}
\node[vertex] (v62) at (11,2) {};

\draw[-] (v52) to (v54);

\foreach \i/\j in {54/55,55/56,56/57,57/58}{
	 \draw[-] (v\i) to (v\j);
}
\foreach \i/\j in {59/60,60/61}{
	 \draw[-] (v\i) to (v\j);
}
\foreach \i/\j in {54/59,55/60,56/61}{
	 \draw[-] (v\i) to (v\j);
}

\draw[-] (v61) to (v62);

\foreach \i in {65,...,67}{
	\node[vertex] (v\i) at (75-\i,9) {};
}
\foreach \i in {68,...,70}{
	\node[vertex] (v\i) at (78-\i,10) {};
}

\foreach \i/\j in {65/66,66/67}{
	 \draw[-] (v\i) to (v\j);
}
\foreach \i/\j in {68/69,69/70}{
	 \draw[-] (v\i) to (v\j);
}
\foreach \i/\j in {65/68,66/69,67/70}{
	 \draw[-] (v\i) to (v\j);
}

\draw[-] (v67) to (v43);
\draw[-] (v36) to (v27);

\node[vertex] (v71) at (11,10) {};
\node[vertex] (v72) at (5,5) {};
\node[vertex] (v73) at (6,9) {};

\draw[-] (v68) to (v71);
\draw[-] (v27) to (v72);
\draw[-] (v41) to (v73);

\end{tikzpicture}
\caption{\label{solidgrid}A solid grid graph with maximum degree 3.}
\end{figure}
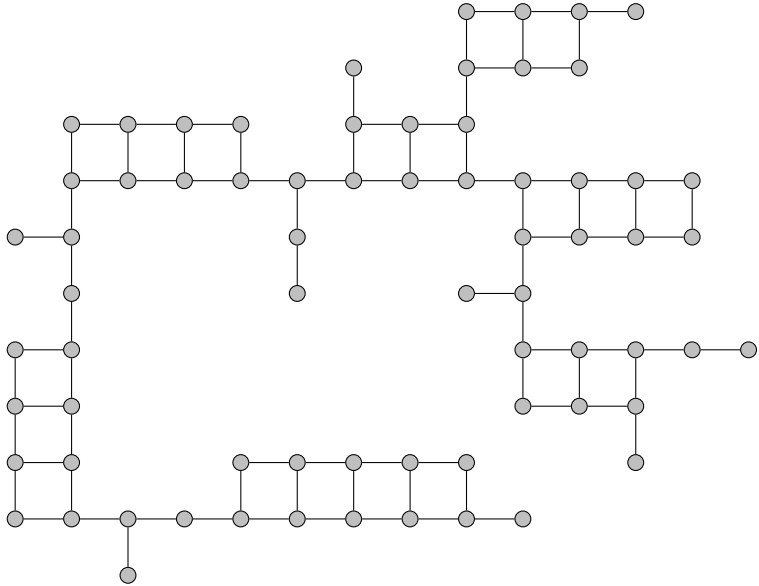

To find the percolation time of $G$, according to Lemma \ref{lemainfcam}, it is enough to find both the longest induced path that starts with a degree 2 vertex and, then, passes only by vertices with degree 3, and the longest induced path that passes only by vertices with degree 3. Thus, since all bounded faces of $G$ are squares with area one and $G$ is composed only by paths and ladders, the only difficulty to calculate $t(G)$ is to find the longest induced paths in the ladders between any two extremities that passes only by vertices with degree 3. However, one can easily calculate the longest induced paths between any two extremities of a ladder $L_k$: if the two extremities are neighbors, the length of the longest induced paths between them is 1; if the two extremities are at distance $k-1$, the length of the longest induced paths between them is $(k-t) + 2 \cdot \lfloor (k-t+1)/4 \rfloor - 1 + t$; if the two extremities are at distance $k$, the length of the longest induced paths between them is $(k-t) + 2 \cdot \lfloor (k-t-1)/4 \rfloor + t$, where $t$ is how many of the two others extremities have degree 2.

\begin{figure}[ht]
\centering
\begin{tikzpicture}[scale=0.75]
\tikzstyle{vertex}=[draw,circle,fill=black!25,minimum size=7pt,inner sep=1pt]
\tikzstyle{intvertex}=[draw,circle,fill=black!50,minimum size=2pt,inner sep=2pt]
\tikzstyle{weight} = [font=\tiny,fill=white,minimum size=1pt,inner sep=1.25pt]
\def \scale {1.15}

\node[vertex] (v1) at (0*\scale,1*\scale) {};
\node[vertex] (v2) at (1*\scale,1*\scale) {};
\node[vertex] (v3) at (0*\scale,2*\scale) {};
\node[vertex] (v4) at (1*\scale,2*\scale) {};

\draw (v1) -- node[weight] {1} (v2);
\draw (v1) -- node[weight] {5} (v3);
\draw (v1) -- node[weight] {4} (v4);
\draw (v2) -- node[weight] {4} (v3);
\draw (v2) -- node[weight] {3} (v4);
\draw (v3) -- node[weight] {1} (v4);

\node[vertex] (v5) at (2*\scale,1*\scale) {};
\node[vertex] (v6) at (2*\scale,0*\scale) {};
\node[vertex] (v7) at (3*\scale,1*\scale) {};

\draw[-] (v2) to (v5);
\draw[-] (v5) to (v6);
\draw[-] (v5) to (v7);

\node[vertex] (v8) at (4*\scale,1*\scale) {};
\node[vertex] (v9) at (5*\scale,1*\scale) {};
\node[vertex] (v10) at (4*\scale,2*\scale) {};
\node[vertex] (v11) at (5*\scale,2*\scale) {};

\draw (v8)  -- node[weight] {6} (v9);
\draw (v8)  -- node[weight] {1} (v10);
\draw (v8)  -- node[weight] {5} (v11);
\draw (v9)  -- node[weight] {5} (v10);
\draw (v9)  -- node[weight] {1} (v11);
\draw (v10) -- node[weight] {6} (v11);

\draw[-] (v7) to (v8);

\node[vertex] (v12) at (6*\scale,1*\scale) {};

\draw[-] (v9) to (v12);

\node[vertex] (v13) at (1*\scale,3*\scale) {};
\node[vertex] (v14) at (1*\scale,4*\scale) {};
\node[vertex] (v15) at (0*\scale,4*\scale) {};

\draw[-] (v4) to (v13);
\draw[-] (v13) to (v14);
\draw[-] (v14) to (v15);

\node[vertex] (v16) at (1*\scale,5*\scale) {};
\node[vertex] (v17) at (2*\scale,5*\scale) {};
\node[vertex] (v18) at (1*\scale,6*\scale) {};
\node[vertex] (v19) at (2*\scale,6*\scale) {};

\draw[-] (v14) to (v16);

\draw (v16)  -- node[weight] {3} (v17);
\draw (v16)  -- node[weight] {1} (v18);
\draw (v16)  -- node[weight] {4} (v19);
\draw (v17)  -- node[weight] {4} (v18);
\draw (v17)  -- node[weight] {1} (v19);
\draw (v18)  -- node[weight] {5} (v19);

\node[vertex] (v20) at (3*\scale,5*\scale) {};
\node[vertex] (v21) at (3*\scale,4*\scale) {};
\node[vertex] (v22) at (3*\scale,3*\scale) {};

\draw[-] (v17) to (v20);
\draw[-] (v20) to (v21);
\draw[-] (v21) to (v22);

\node[vertex] (v23) at (4*\scale,5*\scale) {};
\node[vertex] (v24) at (5*\scale,5*\scale) {};
\node[vertex] (v25) at (4*\scale,6*\scale) {};
\node[vertex] (v26) at (5*\scale,6*\scale) {};

\draw[-] (v20) to (v23);

\draw (v23)  -- node[weight] {4} (v24);
\draw (v23)  -- node[weight] {1} (v25);
\draw (v23)  -- node[weight] {3} (v26);
\draw (v24)  -- node[weight] {3} (v25);
\draw (v24)  -- node[weight] {1} (v26);
\draw (v25)  -- node[weight] {4} (v26);

\node[vertex] (v45) at (3*\scale,7*\scale) {};
\draw[-] (v25) to (v45);

\node[vertex] (v27) at (5*\scale,7*\scale) {};
\node[vertex] (v28) at (6*\scale,7*\scale) {};
\node[vertex] (v29) at (5*\scale,8*\scale) {};
\node[vertex] (v30) at (6*\scale,8*\scale) {};

\draw[-] (v26) to (v27);

\draw (v27)  -- node[weight] {2} (v28);
\draw (v27)  -- node[weight] {1} (v29);
\draw (v27)  -- node[weight] {3} (v30);
\draw (v28)  -- node[weight] {3} (v29);
\draw (v28)  -- node[weight] {1} (v30);
\draw (v29)  -- node[weight] {2} (v30);

\node[vertex] (v31) at (7*\scale,8*\scale) {};

\draw[-] (v30) to (v31);

\node[vertex] (v32) at (6*\scale,5*\scale) {};
\node[vertex] (v33) at (7*\scale,5*\scale) {};
\node[vertex] (v34) at (6*\scale,4*\scale) {};
\node[vertex] (v35) at (7*\scale,4*\scale) {};

\draw[-] (v24) to (v32);

\draw (v32)  -- node[weight] {5} (v33);
\draw (v32)  -- node[weight] {1} (v34);
\draw (v32)  -- node[weight] {4} (v35);
\draw (v33)  -- node[weight] {4} (v34);
\draw (v33)  -- node[weight] {1} (v35);
\draw (v34)  -- node[weight] {5} (v35);

\node[vertex] (v36) at (7*\scale,3*\scale) {};
\node[vertex] (v37) at (6*\scale,3*\scale) {};

\draw[-] (v34) to (v36);
\draw[-] (v36) to (v37);

\node[vertex] (v38) at (8*\scale,2*\scale) {};
\node[vertex] (v39) at (9*\scale,2*\scale) {};
\node[vertex] (v40) at (8*\scale,1*\scale) {};
\node[vertex] (v41) at (9*\scale,1*\scale) {};

\draw[-] (v36) to (v38);

\draw (v38)  -- node[weight] {2} (v39);
\draw (v38)  -- node[weight] {1} (v40);
\draw (v38)  -- node[weight] {3} (v41);
\draw (v39)  -- node[weight] {3} (v40);
\draw (v39)  -- node[weight] {1} (v41);
\draw (v40)  -- node[weight] {4} (v41);

\node[vertex] (v42) at (10*\scale,2*\scale) {};
\node[vertex] (v43) at (11*\scale,2*\scale) {};
\node[vertex] (v44) at (9*\scale,0*\scale) {};

\draw[-] (v41) to (v44);
\draw[-] (v39) to (v42);
\draw[-] (v42) to (v43);

\end{tikzpicture}
\caption{\label{solidgridtransf}The resulting graph of the transformation applied to the graph in the Figure \ref{solidgrid}.}
\end{figure}
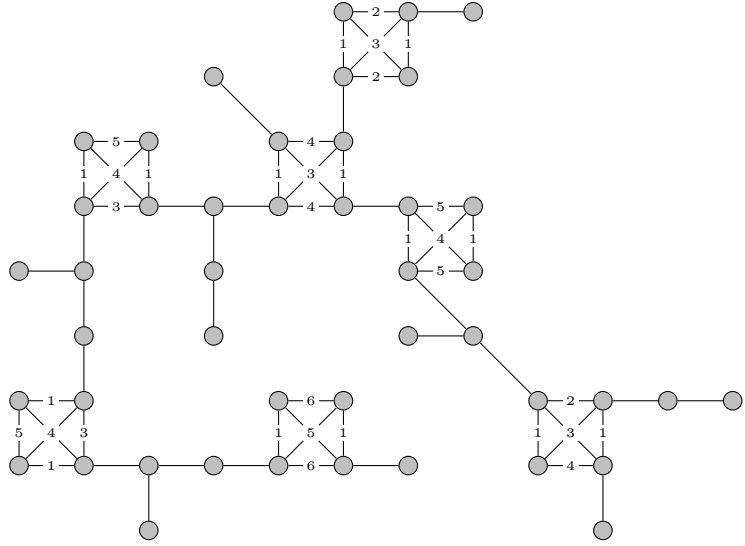

So, first, we will transform $G$ in a weighted graph $G'$ where $G'$ is the same graph as $G$ only with all the ladders replaced by weighted $K_4$'s, where the weight of an edge between two vertices in a $K_4$ represents the length of a longest induced path between the corresponding extremities of the ladders in $G$ that passes only by vertices with degree 3. The weight of all the other edges is 1. The Figure \ref{solidgridtransf} represents the transformation applied in the graph of the Figure \ref{solidgrid}. Note that there is exactly one induced path between any two vertices in $G'$, which length is equal to the longest induced path between the same two vertices in $G$. It is not hard to see that this transformation from $G$ to $G'$ can be done in linear time.

%This transformation from $G$ to $G'$ can be done in linear time. This is because, for each vertex that we do not know yet whether or not it is part of a ladder, initially, we can check whether if it is part of a square. If it is not, then it is not part of a ladder. If it is, then we can try to extend the square to the largest ladder that contains it, marking the vertices that belong to the ladder. Since $\Delta = 3$, then we can check in constant time if a given vertex is belongs to a square and, also in constant time, we can try to extend the ladder in one unit.

\begin{algorithm}[ht]
\DontPrintSemicolon
\SetKwFunction{algo}{MaximumTimeSolidGrid$\Delta{}3$}
\SetKwFunction{funo}{\texttt{LongestInducedPathFrom}}

%\SetKwProg{funLP}{Function}{}{}
%\funLP{\funo{$G',u,\text{Mark}$}}{
	%\If{Mark[u] $>$ 0}{
		%\Return {$-1$}
	%}
	%maxLength $= 0$\\
	%Mark[u] = Mark[u] + 1\\
	%$\forall w \in N(u) \bullet$ Mark[w] = Mark[w] + 1\\
	%\ForAll{$v \in N(u)$}{
		%Mark[v] = Mark[v] - 1\\
		%length $= w(u,v) +$ \funo{$G',v,\text{Mark}$}\\
		%\If{maxLength $<$ length}{
			%maxLength = length\\
		%}
		%Mark[v] = Mark[v] + 1\\
	%}
	%$\forall w \in N(u) \bullet$ Mark[w] = Mark[w] - 1\\
	 %Mark[u] = Mark[u] - 1\\
	%\Return {maxLength}
%}

\SetKwProg{myalg}{Algorithm}{}{}
\myalg{\algo{G}}{
	$G' =$ Transform$(G)$\\
	maxPercTime $= 0$\\
	\ForAll{$u \in V(G')$ such that $d_G(u) \geq 2$}{
		\If{$d_G(u) = 2$}{
			percTimeU = \funo{$G',u$}$ + 1$\\
		}
		\Else{
			percTimeU $= \lfloor($\funo{$G',u$}$ + 2)/2\rfloor$\\
		}
		\If{maxPercTime $<$ percTimeU}{
			maxPercTime $=$ percTimeU\\
		}
	}
	\Return {maxPercTime}
}

\caption{\label{algsolgriddelta3}Algorithm that finds $t(G)$ for any solid grid graph $G$ with $\Delta = 3$}
\end{algorithm}

In Algorithm \ref{algsolgriddelta3}, let $w(u,v)$ be the weight of the edge $(u,v)$. The algorithm, for each vertex $u \in V(G')$ such that $d_G(u) \geq 2$, calls the function \texttt{LongestInducedPathFrom}, which do a Depth-First Search to find the longest induced path in $G'$ from $u$ such that the last vertex is the only vertex in the path that either has degree $\leq 2$, besides perhaps the vertex $u$, or is in the neighborhood of a vertex already in the path, and, then, it subtracts the length of the found path by one. This is necessary because a longest induced path from some vertex $u$ in $G$ can end in a vertex $v$ internal to a ladder, but internal vertices of a ladder are not represented in $G'$. However, if that happens, since all vertices internal to a ladder have degree 3, then $v$ must be adjacent to some vertex at the extremity of the ladder that has degree 2.

In any case, the resulting length corresponds to the length of the longest induced path in $G$ beginning in $u$, which last vertex has degree 3 and is not in the neighborhood of any vertex already in the path. Then, it compares all these values, according to the Lemma \ref{lemainfcam}, to find $t(G)$.

Since there is only one induced path between any two vertices in $G'$, we have that the recursive function \texttt{LongestInducedPathFrom} takes the same time as any Depth-First Search algorithm. Thus, since $m = O(n)$, the Function \texttt{LongestInducedPathFrom} takes $O(n)$ time. Therefore, the Algorithm \ref{algsolgriddelta3} takes $O(n^2)$ time. 
\end{proof}

% ----------------------------------------------------------------------------------
% ----------------------------------------------------------------------------------
% ----------------------------------------------------------------------------------
% ----------------------------------------------------------------------------------
\section{Percolation Time Problem in graphs with bounded maximum degree}
\label{delta4}

In Section 2 (Theorem \ref{teo-d3-logn}), we proved that the Percolation Time Problem is polynomial time solvable in grid graphs with $\Delta(G)\leq 3$ for $k=O(\log n)$. In this section, we prove on Theorem \ref{teodelta4log} that this not happen for general graphs with fixed maximum degree $\Delta\geq 4$, unless P=NP.
However, if the percolation time $k$ is also fixed, then we prove in Theorem \ref{teo-fpt-delta} that the Percolation Time Problem is solvable in quadratic time (in other words, it is fixed parameter tractable on $\Delta(G)+k$).

%we prove that in general graphs with bounded maximum degree $\Delta\geq 4$ fixed, the Percolation Time Problem is NP-complete for any $k\geq\log_{\Delta-2}n$.

\begin{theorem}\label{teodelta4log}
Let $\Delta\geq 4$ be fixed.
Deciding whether $t(G)\geq k$ is NP-complete for graphs with bounded maximum degree $\Delta$ and any $k \geq \log_{\Delta-2}n$.
\end{theorem}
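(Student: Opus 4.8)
First, note that the problem is in NP: a percolating set $S$ of $G$ together with, for each vertex, the round in which it becomes infected, is a certificate of $t(G)\ge k$ that is checked in polynomial time. For NP-hardness the plan is to reduce from the fixed-threshold version of the Percolation Time Problem --- ``is $t(G_0)\ge k_0$?'' for a constant $k_0$ --- which is NP-complete by \cite{eurocomb13} and which we may assume to be posed on graphs of bounded maximum degree (either because that reduction already produces such graphs, or after a standard local degree-reduction). The whole difficulty is that in the target instance the threshold has to be as small as $\log_{\Delta-2}n$, so the constructed graph must have $n=(\Delta-2)^{\Theta(k)}$ vertices; the mechanism that makes this possible is that a vertex of degree $\Delta$ which has been infected through two of its neighbours still has $\Delta-2$ neighbours left, so the infected set can be made to branch by a factor of about $\Delta-2$ per round.

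The construction proceeds in two steps. First, attach to $G_0$ a pendant path whose far end is the centre of a star on $\Theta(n_0)$ leaves; these leaves are forced into every percolating set, and one checks easily that this addition does not change the percolation time, so now $G_0$ has a vertex of degree $\Theta(n_0)$. Second, replace every vertex $v$ by a balanced \emph{$(\Delta-2)$-ary branching gadget} $T_v$: a rooted tree in which every internal node has $\Delta-2$ children and carries one extra pendant vertex (so its degree is $(\Delta-2)+1+1=\Delta$, which is exactly why the branching factor is $\Delta-2$), and whose $\deg(v)$ leaves carry the edges originally incident with $v$; all leaves and all pendant vertices will belong to the initial set. This makes $\Delta(G)\le\Delta$, and the crucial property of $T_v$ is that infection entering through its leaves needs exactly $\operatorname{depth}(T_v)=\Theta(\log_{\Delta-2}\deg(v))$ rounds to reach its root and cannot do so any faster, even with outside help arriving through the root. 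Consequently the gadget of the large-degree vertex has depth $\Theta(\log_{\Delta-2}n_0)$ and the percolation time of $G$ equals $\Theta(k_0\cdot\log_{\Delta-2}n_0)=\Theta(\log_{\Delta-2}|V(G)|)$, while $|V(G)|=\operatorname{poly}(n_0)$. A little tuning --- subdividing a few gadget edges, or appending a few dummy gadgets of strictly smaller depth --- lets us force the output threshold to equal $k:=\lceil\log_{\Delta-2}|V(G)|\rceil$ on the nose and to arrange that $t(G)\ge k$ if and only if $t(G_0)\ge k_0$.

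Correctness splits into two inclusions. For the forward direction one takes the percolating set of $G_0$ witnessing $t(G_0)\ge k_0$, adds all leaves and pendant vertices of all gadgets, and checks that the resulting dynamics on $G$ runs the gadgets and the ``$G_0$-part'' essentially independently, so that $t(G)\ge k$. For the backward direction one uses, as in the proof of Lemma~\ref{lemainfcam}, the notion of an $S$-infection path: a percolating set with $t(G,S)\ge k$ yields an $S$-infection path of length $\ge k$, and since inside a gadget such a path descends at most one level per round it can spend only $\operatorname{depth}(T_v)$ of its rounds in gadget $T_v$; routing the remaining $k_0$ rounds through the contracted copy of $G_0$ then gives $t(G_0)\ge k_0$. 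The reduction is polynomial because $k_0$ is a fixed constant, so every gadget has $\operatorname{poly}(n_0)$ vertices. The hardest part of the argument is the analysis of the gadget substitution: because bootstrap percolation is a global process, one must prove that attaching a gadget neither speeds anything up (no internal node of a gadget gets infected before its depth because of infection ``coming back down'' from the rest of the graph) nor stalls the percolation, and that the maximum percolation time of a $(\Delta-2)$-ary branching gadget over all its percolating sets is exactly its depth; pinning the vertex count into the narrow window $(\Delta-2)^{k-1}<|V(G)|\le(\Delta-2)^{k}$ needed so that $k=\lceil\log_{\Delta-2}|V(G)|\rceil$ is a secondary, purely bookkeeping, difficulty.
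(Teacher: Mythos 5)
Your proposal has a fatal flaw at its very first step: the problem you propose to reduce from does not appear to be NP-hard. You want to start from ``is $t(G_0)\geq k_0$?'' for a \emph{constant} $k_0$ on graphs that you ``may assume'' have bounded maximum degree. But Theorem~\ref{teo-fpt-delta} of this very paper shows that the Percolation Time Problem is fixed parameter tractable in $\Delta(G)+k$, with running time $O(2^{\Delta^k}k\Delta\cdot n^2)$; for fixed $\Delta$ and fixed $k_0$ this is polynomial, so your source problem is in P and the reduction cannot establish NP-hardness. The hardness results of \cite{eurocomb13} for fixed $k\geq 4$ necessarily use unbounded degree, and there is no ``standard local degree-reduction'' for bootstrap percolation that preserves a constant percolation time: any such gadget stretches distances and hence the time. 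If instead you start from an unbounded-degree instance and substitute each vertex $v$ by a $(\Delta-2)$-ary tree $T_v$ of depth $\Theta(\log_{\Delta-2}\deg(v))$, the correctness argument collapses for a different reason: you place \emph{all} leaves of every gadget in the initial set, but the leaves are exactly where the original edges of $G_0$ attach, so every gadget then percolates bottom-up independently of the rest of the graph and $t(G)$ becomes the maximum gadget depth, carrying no information about $t(G_0)$. Even repairing that, an infection path in $G$ accrues a degree-dependent delay $\Theta(\log_{\Delta-2}\deg(v))$ at \emph{each} vertex it visits, so $t(G)$ is a sum of heterogeneous gadget depths rather than $k_0$ plus a controlled offset; a $\Theta(\cdot)$ estimate cannot pin the threshold to the exact value $\lceil\log_{\Delta-2}|V(G)|\rceil$, and the ``little tuning'' you defer is precisely where all the difficulty lives.

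For contrast, the paper does not perform vertex substitution on a percolation instance at all: it reduces from 3-SAT with each variable in at most four clauses, uses clause gadgets whose timing is controlled by explicit path lengths $M_i^a,M_i^b$, aggregates the conflict vertices through a single full $(\Delta-2)$-ary tree rooted at $z$ (so the logarithmic depth appears once, globally, not per vertex), and then appends two padding paths whose lengths $r$ and $\beta$ are chosen so that $n$ lands exactly where $\lceil\log_{\Delta-2}n\rceil$ equals the intended threshold. The one ingredient your sketch shares with the paper is the use of a $(\Delta-2)$-ary tree together with padding to express the threshold as $\log_{\Delta-2}n$; the reduction source and the correctness mechanism are otherwise different, and yours as written does not go through.
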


\begin{proof}

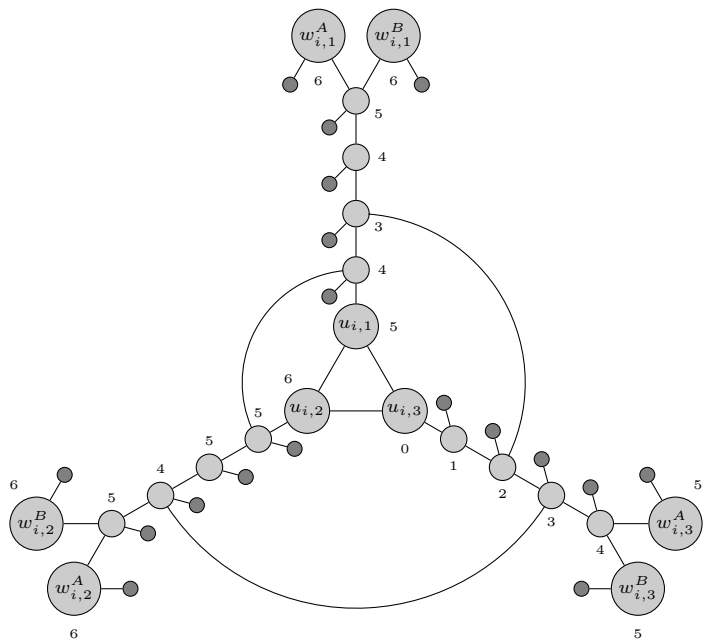
\begin{figure}[ht]
\centering
\begin{tikzpicture}[scale=0.5]
\tikzstyle{every node}=[font=\scriptsize]
\tikzstyle{vertex}=[draw,circle,fill=black!20,minimum size=10pt,inner sep=1pt]
\tikzstyle{leafvertex}=[draw,circle,fill=black!50,minimum size=2pt,inner sep=2pt]
\def \dist {1.5cm}
\foreach \i [evaluate=\i as \angle using {(\i*120 - 30)}] in {1,2,3}{
		\node[vertex] (u\i) at (\angle:\dist) {$u_{i,\i}$};
		\node[vertex] (backbone1\i) at (\angle:2*\dist) {};
		\node[leafvertex,shift={(\angle + 135:0.5cm)}] (backbone1l\i) at (backbone1\i) {};
		\node[vertex] (backbone2\i) at (\angle:3*\dist) {};
		\node[leafvertex,shift={(\angle + 135:0.5cm)}] (backbone2l\i) at (backbone2\i) {};
		\node[vertex] (backbone3\i) at (\angle:4*\dist) {};
		\node[leafvertex,shift={(\angle + 135:0.5cm)}] (backbone3l\i) at (backbone3\i) {};
		\node[vertex] (backbone4\i) at (\angle:5*\dist) {};
		\node[leafvertex,shift={(\angle + 135:0.5cm)}] (backbone4l\i) at (backbone4\i) {};
			
    \node[vertex,shift={(\angle + 30:1cm)}] (wA\i) at (backbone4\i) {$w^A_{i,\i}$};
		\node[leafvertex,shift={(\angle + 150:0.75cm)}] (wAl\i) at (wA\i) {};
		\node[vertex,shift={(\angle - 30:1cm)}] (wB\i) at (backbone4\i) {$w^B_{i,\i}$};
		\node[leafvertex,shift={(\angle -150 :0.75cm)}] (wBl\i) at (wB\i) {};
		
		\draw[-] (u\i) to (backbone1\i);
		\draw[-] (backbone1\i) to (backbone2\i);
		\draw[-] (backbone2\i) to (backbone3\i);
		\draw[-] (backbone3\i) to (backbone4\i);
		\draw[-] (backbone4\i) to (wA\i);
		\draw[-] (backbone4\i) to (wB\i);
		
		\draw[-] (backbone1\i) to (backbone1l\i);
		\draw[-] (backbone2\i) to (backbone2l\i);
		\draw[-] (backbone3\i) to (backbone3l\i);
		\draw[-] (backbone4\i) to (backbone4l\i);
		
		\draw[-] (wA\i) to (wAl\i);
		\draw[-] (wB\i) to (wBl\i);
		
	}
	
	\draw[-] (u1) to (u2);
	\draw[-] (u2) to (u3);
	\draw[-] (u3) to (u1);
	
	\draw[-] ({90 + 6.5:2*\dist}) arc (90 + 6:210 - 6.5:2*\dist);
	\draw[-] ({210 + 3.2:4*\dist}) arc (210 + 3.2:330 - 3.2:4*\dist);
	\draw[-] ({330 + 4.2:3*\dist}) arc (330 + 4.2:450 - 4.2:3*\dist);
	
	\node[shift={(270:0.5cm)}] (nu3) at (u3) {\tiny{0}};
	\node[shift={(0:0.5cm)}] (nu1) at (u1) {\tiny{5}};
	\node[shift={(120:0.5cm)}] (nu2) at (u2) {\tiny{6}};
	
	\node[shift={(270:0.6cm)}] (nwA1) at (wA1) {\tiny{6}};
	\node[shift={(270:0.6cm)}] (nwB1) at (wB1) {\tiny{6}};
	\node[shift={(270:0.6cm)}] (nwA2) at (wA2) {\tiny{6}};
	\node[shift={(120:0.6cm)}] (nwB2) at (wB2) {\tiny{6}};
	\node[shift={(60:0.6cm)}] (nwA3) at (wA3) {\tiny{5}};
	\node[shift={(270:0.6cm)}] (nwB3) at (wB3) {\tiny{5}};
	
	\node[shift={(270:0.35cm)}] (nbackbone13) at (backbone13) {\tiny{1}};
	\node[shift={(270:0.35cm)}] (nbackbone23) at (backbone23) {\tiny{2}};
	\node[shift={(270:0.35cm)}] (nbackbone33) at (backbone33) {\tiny{3}};
	\node[shift={(270:0.35cm)}] (nbackbone43) at (backbone43) {\tiny{4}};
	
	\node[shift={(0:0.35cm)}] (nbackbone11) at (backbone11) {\tiny{4}};
	\node[shift={(330:0.35cm)}] (nbackbone21) at (backbone21) {\tiny{3}};
	\node[shift={(0:0.35cm)}] (nbackbone31) at (backbone31) {\tiny{4}};
	\node[shift={(330:0.35cm)}] (nbackbone41) at (backbone41) {\tiny{5}};
	
	\node[shift={(90:0.35cm)}] (nbackbone12) at (backbone12) {\tiny{5}};
	\node[shift={(90:0.35cm)}] (nbackbone22) at (backbone22) {\tiny{5}};
	\node[shift={(90:0.35cm)}] (nbackbone32) at (backbone32) {\tiny{4}};
	\node[shift={(90:0.35cm)}] (nbackbone42) at (backbone42) {\tiny{5}};
		
\end{tikzpicture}
\caption{\label{gadget-delta3} Gadget with infection times for each clause $C_i$.}
\end{figure}

We obtain a reduction from the variation of the $\mathbf{SAT}$ problem where each clause has exactly three literals, each variable appears in at most four clauses \cite{craig1984}.

Given $M$ clauses $\mathcal{C}=\{C_1,\ldots,C_M\}$ on $N$ variables $X=\{x_1, \ldots,x_N\}$ as an instance of $\mathbf{SAT}$, we denote the three literals of $C_i$ by $\ell_{i,1}$, $\ell_{i,2}$ and $\ell_{i,3}$. Note, since any variable can only appear in at most 4 clauses, that $N/3 \leq M \leq 4N/3$. So, first, let us show how to construct a graph $G$ with maximum degree $\Delta$. For each clause $C_i$ of $\mathcal{C}$, add to $G$ a gadget as the one in Figure~\ref{gadget-delta3}. Then, for each pair of literals $\ell_{i,a}, \ell_{j,b}$ such that one is the negation of the other, add a vertex $y_{(i,a),(j,b)}$, link it to either $w^A_{i,a}$ or $w^B_{i,a}$ and link it to either $w^A_{j,b}$ or $w^B_{j,b}$, but always respecting the restriction of degree at most 4 for the vertices $w^A_{i,a}$, $w^B_{i,a}$, $w^A_{j,b}$ and $w^B_{j,b}$. Since each variable can appear in at most 4 clauses, it is always possible to do that. Let $Y$ be the set of all vertices $y_{(i,a),(j,b)}$ created this way. Notice that $y=|Y| \leq 4N$.

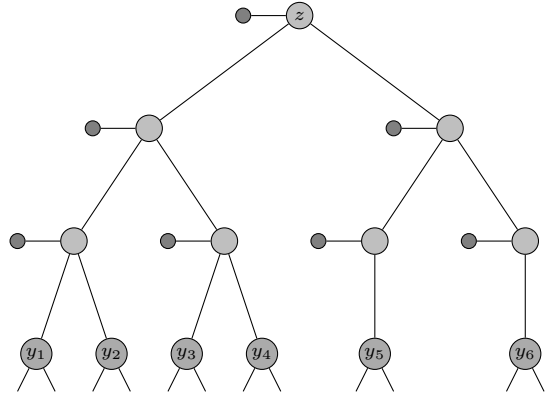
\begin{figure}[ht]
\centering
\begin{tikzpicture}[scale=0.5]
\tikzstyle{every node}=[font=\scriptsize]
\tikzstyle{vertex}=[draw,circle,fill=black!25,minimum size=10pt,inner sep=1pt]
\tikzstyle{vertexy}=[draw,circle,fill=black!33,minimum size=10pt,inner sep=1pt]
\tikzstyle{leafvertex}=[draw,circle,fill=black!50,minimum size=2pt,inner sep=2pt]

		\node[vertex] (z) at (0,9) {$z$};
		\node[leafvertex,shift={(180:0.75cm)}] (zl) at (z) {};
		\node[vertex] (z1) at (-4,6) {};
		\node[leafvertex,shift={(180:0.75cm)}] (z1l) at (z1) {};
		\node[vertex] (z2) at (4,6) {};
		\node[leafvertex,shift={(180:0.75cm)}] (z2l) at (z2) {};
		\node[vertex] (z3) at (-6,3) {};
		\node[leafvertex,shift={(180:0.75cm)}] (z3l) at (z3) {};
		\node[vertex] (z4) at (-2,3) {};
		\node[leafvertex,shift={(180:0.75cm)}] (z4l) at (z4) {};
		\node[vertex] (z5) at (2,3) {};
		\node[leafvertex,shift={(180:0.75cm)}] (z5l) at (z5) {};
		\node[vertex] (z6) at (6,3) {};
		\node[leafvertex,shift={(180:0.75cm)}] (z6l) at (z6) {};
		
		\node[vertexy] (y1) at (-7,0) {$y_1$};
		\node[vertexy] (y2) at (-5,0) {$y_2$};
		\node[vertexy] (y3) at (-3,0) {$y_3$};
		\node[vertexy] (y4) at (-1,0) {$y_4$};
		\node[vertexy] (y5) at (2,0) {$y_5$};
		\node[vertexy] (y6) at (6,0) {$y_6$};
		
		\draw[-] (z) to (z1);
		\draw[-] (z) to (z2);
		\draw[-] (z1) to (z3);
		\draw[-] (z1) to (z4);
		\draw[-] (z2) to (z5);
		\draw[-] (z2) to (z6);
		\draw[-] (z3) to (y1);
		\draw[-] (z3) to (y2);
		\draw[-] (z4) to (y3);
		\draw[-] (z4) to (y4);
		\draw[-] (z5) to (y5);
		\draw[-] (z6) to (y6);
		
		\draw[-] (z) to (zl);
		\draw[-] (z1) to (z1l);
		\draw[-] (z2) to (z2l);
		\draw[-] (z3) to (z3l);
		\draw[-] (z4) to (z4l);
		\draw[-] (z5) to (z5l);
		\draw[-] (z6) to (z6l);
		
		\draw[-] (y1) to (-7.5,-1);
		\draw[-] (y1) to (-6.5,-1);
		\draw[-] (y2) to (-5.5,-1);
		\draw[-] (y2) to (-4.5,-1);
		\draw[-] (y3) to (-3.5,-1);
		\draw[-] (y3) to (-2.5,-1);
		\draw[-] (y4) to (-1.5,-1);
		\draw[-] (y4) to (-0.5,-1);
		\draw[-] (y5) to (1.5,-1);
		\draw[-] (y5) to (2.5,-1);
		\draw[-] (y6) to (5.5,-1);
		\draw[-] (y6) to (6.5,-1);
\end{tikzpicture}
\caption{\label{linkYtoTree}Example of the full $(\Delta-2)$-ary tree added to $G$, for $\Delta = 4$ and $y = 6$.}
\end{figure}

Then, add the maximum full $(\Delta-2)$-ary tree with root $z$ such that the number of leaves is less than $y$ and, then, add a new vertex of degree one adjacent to each vertex of the tree. Let $T$ be this tree and $t=|V(T)|$. After that, link each leaf to at least one and at most $\Delta-2$ vertices in $Y$. Thus, each vertex in the tree has degree $\Delta$, except for the leaves, which have degree at most $\Delta$, and $z$, which has degree $\Delta-1$.
Figure \ref{linkYtoTree} shows an example for $\Delta=4$ and $y=6$.
Notice that $t \leq 2 \cdot \frac{(\Delta - 2)y - 1}{\Delta - 3}\leq 16N$ and $height(T) = \lceil \log_{\Delta - 2} y \rceil$.

Let $c = (\Delta-2)^{-8}$, $\alpha = xc$, where $x = \max(41 + \lceil c \rceil,\lceil 1/c \rceil)$, $r = \lceil \log_{\Delta - 2} (4N\alpha) \rceil - \lceil \log_{\Delta - 2} y \rceil$ and $\beta = 4Nx - (39M + y + t + 2 + 2r)$.
It is not difficult to see that $r$ and $\beta$ are non-negative integers.

If $r>0$, add a path with $r$ vertices, link one end to $z$ and let $q$ be the other end. Also, add a new neighbor of degree one to each vertex that belongs to the path. Let $P'$ be the set of vertices in this path and his neighbors of degree one. If $r = 0$, let $q = z$.

Finally, add a path with $\beta+2$ vertices and link one end to $q$. Let $P$ be the set of vertices in this path and let $x$ be the vertex in $P$ that is adjacent to $q$. By our construction, since $\Delta \geq 4$, we have that $G$ is a graph in which every vertex has degree at most $\Delta$.

Notice that any percolating set must contain a vertex of $\{u_{i,1},u_{i,2},u_{i,3}\}$ for each clause $C_i$ of $\mathcal{C}$ and all vertices that have degree 1.
Thus, following similar arguments presented in \cite{wg2014}, it is possible to prove that the maximum percolation time of the vertex $z$ is $height(T)+7$ if and only if $\mathcal{C}$ is satisfiable, which implies that  the maximum percolation time of the vertex $x$ is $ \lceil \log_{\Delta - 2} (4N\alpha) \rceil+8$ if and only if $\mathcal{C}$ is satisfiable. 

Then, we have that $\mathcal{C}$ is satisfiable if and only if $t(G) \geq \lceil \log_{\Delta - 2} (4N\alpha) \rceil+8$, but, since $n = |V(G)| = 39M + y + t + 2 + 2r + \beta$, then $4N\alpha = c \cdot n$. Therefore, since $c=(\Delta - 2)^{-8}$, $\mathcal{C}$ is satisfiable if and only if $t(G) \geq \lceil \log_{\Delta - 2} n \rceil$.
\end{proof}

We already saw that the Percolation Time Problem is NP-hard for graphs with maximum degree $\Delta(G)\geq 3$.
In this section, we prove that the Percolation Time Problem is FPT on $\Delta(G)+k$. In fact, we prove a stronger result:

\begin{theorem}\label{teo-fpt-delta}
Percolation Time Problem is fixed parameter tractable with parameter $\Delta(G)+k$.
Moreover, for fixed $\Delta$, the Percolation Time Problem is polynomial time solvable in graphs with bounded maximum degree $\Delta$ for $k=\log_\Delta O(\log n)$, if $\Delta\geq 4$, and for $k=O(\log n)$, if $\Delta = 3$.
\end{theorem}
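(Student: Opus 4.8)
The plan is to bound the number of ``relevant'' percolating sets as a function of $\Delta$ and $k$ alone, and then search over them. The key observation is that if $t(G,S)\geq k$, then some vertex $v$ is infected exactly at time $k$, and by Lemma~\ref{lemacaminho} (applied iteratively) the whole infection history of $v$ is witnessed inside a bounded-size ball: every vertex that contributes to infecting $v$ at time $k$ lies within distance $k$ of $v$, so it lies in the ball $B(v,k)$, which has at most $1+\Delta+\Delta(\Delta-1)+\cdots+\Delta(\Delta-1)^{k-1}\leq \Delta^{k+1}$ vertices. Call this bound $N(\Delta,k)$. First I would make precise the statement ``$t(G)\geq k$ iff there is a vertex $v$ and a subset $S_v\subseteq B(v,k)$ that, when viewed as the initial infection inside the induced subgraph $G[B(v,k)]$ with all boundary vertices left healthy, infects $v$ at time exactly $k$ and no earlier''; the ``only if'' direction uses that infection is monotone and that whatever happens outside $B(v,k)$ cannot reach $v$ before time $k+1$, and the ``if'' direction is immediate by taking $S = S_v \cup (V(G)\setminus B(v,k))$, i.e., pre-infecting everything far away, exactly as in the proof of Lemma~\ref{lemainfcam}.

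Second I would turn this into the algorithm: for each $v\in V(G)$, enumerate all $2^{|B(v,k)|}\leq 2^{N(\Delta,k)}$ subsets $S_v$ of $B(v,k)$; for each, run the (linear-time in $|B(v,k)|$) bootstrap process restricted to $G[B(v,k)]$ and check whether $v$ is infected for the first time at round $k$. If some $(v,S_v)$ passes, output ``$t(G)\geq k$''; otherwise output ``no''. Correctness is the equivalence above. The running time is $O\!\left(n\cdot 2^{N(\Delta,k)}\cdot N(\Delta,k)^2\right) = f(\Delta,k)\cdot n^{O(1)}$, which is exactly fixed-parameter tractability in $\Delta(G)+k$. (One can tighten the $n$-dependence to quadratic by noting $B(v,k)$ is built by a BFS, but for the FPT claim the crude bound suffices.)

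For the ``moreover'' part, I would simply read off for which $k$ the factor $2^{N(\Delta,k)}$ stays polynomial in $n$. Since $N(\Delta,k)\leq \Delta^{k+1}$, we have $2^{N(\Delta,k)}\leq n^{O(1)}$ as soon as $\Delta^{k+1} = O(\log n)$, i.e., $k+1 \leq \log_\Delta O(\log n)$, which gives the stated range $k = \log_\Delta O(\log n)$ for $\Delta\geq 4$. For $\Delta = 3$ this would only give $k = \log_3 O(\log n) = O(\log\log n)$, which is weaker than what we want; here I would instead invoke Theorem~\ref{teo-d3-logn}, which already gives polynomial time for $k = O(\log n)$ when $\Delta = 3$ via the induced-path characterization of Lemma~\ref{lemainfcam} rather than brute-force enumeration. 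Combining the two statements yields the claim.

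The main obstacle I expect is the ``only if'' direction of the ball characterization: one must argue carefully that restricting attention to $G[B(v,k)]$ with a healthy boundary does not change the infection time of $v$ from what it is under $S$ in the full graph. The point is that a vertex at distance exactly $d$ from $v$ can be infected no earlier than time $\max(0, (\text{its own }S\text{-time}))$, but to influence $v$ at time $\leq k$ it must itself be infected by time $k-d$; an induction on $k-d$ shows every such vertex already lies in $B(v,k)$ and is infected at the same time whether or not the graph is truncated, because none of its ``useful'' infecting neighbors lie outside $B(v,k)$ either. This is the same locality phenomenon underlying Lemma~\ref{lemacaminho}, and once it is stated cleanly the rest is routine bookkeeping.
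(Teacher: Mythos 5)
Your proposal matches the paper's proof essentially step for step: both reduce the question to enumerating the $2^{O(\Delta^{k})}$ initial configurations inside the radius-$k$ ball around each vertex while pre-infecting everything farther away, bound the ball size by $\Delta^{O(k)}$ to get the FPT and quasi-polynomial claims, and dispatch the $\Delta=3$ case to Theorem~\ref{teo-d3-logn}. The only cosmetic difference is that you justify the locality claim directly via Lemma~\ref{lemacaminho} (an infection path from a far-away added vertex is too long to speed $v$ up below time $k$), whereas the paper cites the analogous monotonicity result from \cite{wg2014}; like the paper, you leave implicit the routine extra check that the accepted witness set actually percolates.
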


\begin{proof}
Let $\Delta=\Delta(G)$ and let $u\in V(G)$. Then $|N_{\leq k}(u)|\leq\Delta^k$ and, consequently, the power set $2^{N_{\leq k}(u)}$ has $2^{|N_{\leq k}(u)|}\leq 2^{\Delta^k}$ sets.
We claim that $t(G)\geq k$ if and only if there is a vertex $u$ and a percolating set $S \supseteq N_{\geq k}(u)$ such that $t(G,S,u)=k$.

If $t(G)\geq k$, then there is a percolating set $S'$ that infects some vertex $u$ at time $k$.
In \cite{wg2014}, it was proved that, given a graph $G$, a set $Q\subseteq V(G)$ and a vertex $z\in V(G)\setminus S$, if $t(G,Q,w)\geq k$, then $t(G,Q,w)\geq t(G,Q\cup\{z\},w)\geq k$, for any $k$ and any $w\in N_{\geq k}(z)$.
Then, applying this result once for each vertex in $N_{\geq k}(u)$, the percolating set $S = S'\cup N_{\geq k}(u)$ infects $u$ also at time $k$.

On the other hand, if there is a percolating set $S \supseteq N_{\geq k}(u)$ such that $t(G,S,u)=k$, for some vertex $u$, then, trivially, $t(G)\geq k$. Then the claim is true.

Therefore, since for each vertex $u$ and set $S' \subseteq N_{\leq k-1}(u)$, it takes $O(km)$ time to know whether the set $S' \cup N_{\geq k}(u)$ infects $u$ at time $k$, this equivalence gives us an algorithm that decides whether $t(G)\geq k$ in time $n\cdot O(m + km\cdot 2^{\Delta^k}) = O(2^{\Delta^k}k\Delta \cdot n^2)$, since $m=O(\Delta n)$. Notice that, if $k=\log_\Delta O(\log n)$, then the time is polynomial in $n$. Moreover, if $\Delta = 3$, by Theorem \ref{teo-d3-logn}, we are done.
\end{proof}

% ----------------------------------------------------------------------------------
% ----------------------------------------------------------------------------------
% ----------------------------------------------------------------------------------
% ----------------------------------------------------------------------------------
\section{Fixed Parameter Tractability on the treewidth}
\label{fpt}

We say that a decision problem is \emph{fixed parameter tractable} (or just \emph{fpt}) on some parameter $\Psi$ if there exists an algorithm (called \emph{fpt-algorithm}) that solves the problem in time $f(\Psi)\cdot n^{O(1)}$, where $n$ is the size of the input and $f$ is an arbitrary function depending only on the parameter $\Psi$.

In this section, we obtain some results regarding the fixed parameter tractability of the Percolation Time Problem when the parameter is the treewidth $tw(G)$ of the graph. We prove on Lemma \ref{fpt-tw1} that the Percolation Time Problem is FPT on $tw(G)+k$ (that is, to decide if the time is at most $k$). In fact, we prove a stronger (but technical) result on Theorem \ref{fpt-tw3}: the Percolation Time Problem is polynomial time solvable for fixed $tw(G)$ and it is linear time solvable for fixed $tw(G)+k$. And what happens when $tw(G)$ is fixed, but $k$ is not fixed? The problem is FPT on the treewidth? We prove in Theorem \ref{fpt-tw2} that the Percolation Time Problem is W[1]-hard when parameterized by the treewidth.

\begin{lemma}\label{fpt-tw1}
The Percolation Time Problem is fixed parameter tractable with parameter $cwd(G)+k$, where $cwd(G)$ is the clique-width of $G$. Consequently, the Percolation Time Problem is fpt on $tw(G)+k$.
\end{lemma}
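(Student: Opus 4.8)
The plan is to express the percolation process as a formula in monadic second-order logic (MSO$_1$, i.e.\ quantification over vertex sets but not edge sets) with the integer $k$ as a fixed parameter, and then invoke Courcelle's theorem for clique-width — more precisely, the Courcelle--Makowsky--Rotics metatheorem, which states that any MSO$_1$-expressible property can be decided in time $f(cwd(G),|\varphi|)\cdot n$ once a $cwd(G)$-expression is given (and such an expression can itself be computed in fpt time in $cwd(G)$ by the approximation algorithm of Oum--Seymour). Since $tw(G)$ bounds $cwd(G)$ (by at most $3\cdot 2^{tw(G)-1}$), fpt on $cwd(G)+k$ immediately yields fpt on $tw(G)+k$, which is the ``consequently'' part of the statement.

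The concrete construction I would carry out: recall that $t(G)\geq k$ iff there is a percolating set $S$ and a vertex $v$ with $t(G,S,v)\geq k$. For a fixed $k$, I would introduce set variables $S_0,S_1,\dots,S_k$ intended to be the first $k$ levels of the infection process started from $S_0=S$, and write a formula
\[
\exists S_0\,\exists S_1\cdots\exists S_k\;\Big(\mathrm{Perc}(S_0)\ \wedge\ \bigwedge_{i=0}^{k-1}\mathrm{Step}(S_i,S_{i+1})\ \wedge\ \exists v\,\big(v\notin S_{k-1}\big)\Big),
\]
where $\mathrm{Step}(A,B)$ says ``$B$ is exactly $A$ together with every vertex having at least two neighbours in $A$'' and $\mathrm{Perc}(A)$ says ``the closure of $A$ under the $2$-neighbour rule is all of $V(G)$''. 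The step relation is first-order expressible: ``$x$ has $\geq 2$ neighbours in $A$'' is $\exists y\,\exists z\,(y\neq z\wedge \mathrm{adj}(x,y)\wedge\mathrm{adj}(x,z)\wedge y\in A\wedge z\in A)$, and then $B = A\cup\{x: \dots\}$ is a first-order condition on the pair $(A,B)$. The percolation predicate $\mathrm{Perc}(A)$ is the genuinely MSO part: ``$A$ percolates'' iff there is no proper superset $F\supsetneq A$ with $F\neq V(G)$ that is \emph{closed} (no vertex outside $F$ has two neighbours in $F$) — equivalently, every closed set containing $A$ equals $V(G)$ — which is $\forall F\,\big((A\subseteq F \wedge \mathrm{Closed}(F))\rightarrow F=V(G)\big)$ with $\mathrm{Closed}(F)$ first-order. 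The whole formula has size depending only on $k$, so its evaluation time depends only on $cwd(G)$ and $k$, giving the fpt bound.

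The main subtlety — and the place I would be most careful — is making sure every piece stays inside MSO$_1$ rather than MSO$_2$: we must never quantify over edge sets, only vertex sets, and ``degree $\geq 2$ in a set'' must be spelled out with two existential first-order variables as above rather than with any counting construct. A secondary point is that we need the \emph{maximum} time over all percolating sets, but since ``$t(G)\geq k$'' only asserts existence of \emph{some} percolating set achieving level $k$, the single existential block $\exists S_0\cdots$ suffices and we do not need to encode a maximization; this keeps the formula at constant size for fixed $k$. Once these are in place, the theorem follows by a direct appeal to the clique-width metatheorem, with no further combinatorial work.
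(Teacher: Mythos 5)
Your proposal is correct and follows essentially the same route as the paper: encode the first $k$ rounds of the infection by existentially quantified vertex sets $X_0,\dots,X_k$ in an MSO$_1$ sentence whose length depends only on $k$, and invoke the Courcelle-type metatheorem for clique-width (which subsumes the treewidth case). The only cosmetic difference is that you enforce percolation via a separate universally quantified ``closed set'' predicate and ask for a vertex outside $S_{k-1}$, whereas the paper instead demands $X_k=V(G)$ together with a witness in $X_k\setminus X_{k-1}$; the two formulations are equivalent.
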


\begin{proof}
A consequence of the Courcelle's theorem \cite{fpt1,fpt2} states that, if a decision problem on graphs can be expressed in a Monadic Second Order (MSO$_1$) sentence $\varphi$ (with quantification only over vertex subsets), then this problem is fixed parameter tractable in the parameter $cwd(G)+|\varphi|$, where $cwd(G)$ is the clique-width of $G$. It is known that fixed parameter tractability on the clique-width implies fixed parameter tractability on the treewidth. Moreover, the running time is linear on the size of the input. The Percolation Time Problem can be expressed by the following MSO$_1$-sentence:
\[
   maxtime_k\ :=\ \exists w,X_0,X_1,\ldots,X_k\ \forall x\ \Big(x\in X_k\Big)\ \wedge\ \left(\bigwedge_{0\leq i<k}(x\in X_i\to x\in X_{i+1})\right)\ \wedge
\]
\[
\wedge\left(\bigwedge_{0\leq i<k}(x\in X_{i+1}\setminus X_i)\to\exists y,z(Exy\wedge Exz\wedge (y\in X_i)\wedge (z\in X_i))\right)\wedge\ \Big(w\in X_k\setminus X_{k-1}\Big),
\]
where $X_i$ represents the set of vertices infected at time $i$, $Exy$ is true if $xy$ is an edge (and false, otherwise) and $\wedge$ is the \emph{and} operator.
This MSO$_1$ sentence asserts that all vertices are infected in time $k$, that a vertex infected in time $i$ remains infected in time $i+1$, that a vertex infected in time $i+1$, but not infected in time $i$, has two neighbors infected in time $i$, and that there exists a vertex $w$ infected in time $k$ but not infected in time $k-1$.
\end{proof}

The proof of the next theorem uses some technicalities inspired by \cite{Fellows11}.

\begin{theorem}\label{fpt-tw2}
The Percolation Time Problem is W[1]-hard with parameter $tw(G)$.
\end{theorem}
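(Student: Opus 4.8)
### Proof proposal for Theorem \ref{fpt-tw2}

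The plan is to reduce from a known W[1]-hard problem parameterized by a structural width measure, and the natural candidate is \textsc{Multicolored Clique} (equivalently, \textsc{Partitioned Clique}): given a graph $H$ whose vertex set is partitioned into $k$ color classes $V_1,\dots,V_k$, decide whether $H$ contains a clique with exactly one vertex in each class. Since the treewidth of the graph we build must be bounded by a function of $k$ alone (not of $|V(H)|$), I would \emph{not} embed $H$ directly; instead, following the ``edge representation'' technique of Fellows et al.\ \cite{Fellows11}, I would encode each vertex and each edge of $H$ by an integer and transmit these integers along long paths/gadgets whose only job is to carry a number via \emph{timing}. The key mechanism is Lemma \ref{lemainfcam} (and more basically the infection-path structure of Lemma \ref{lemacaminho}): in low-degree gadgets the percolation time of a distinguished vertex equals the length of a certain induced path, so by choosing path lengths we can make the arrival time of the infection at a ``junction'' vertex equal to a prescribed integer, and a junction vertex fires only when \emph{two} incoming signals arrive simultaneously. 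This simultaneity check is exactly what lets us verify ``the two endpoints selected are consistent with the selected edge''.

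Concretely, I would build, for each color class $i$, a \emph{selection gadget}: a single vertex whose choice among a bounded set of initially-infected configurations nondeterministically picks a vertex $v_i\in V_i$, and which then emits an infection signal that reaches a set of ``checker'' vertices at time proportional to the index of $v_i$. For each pair $(i,j)$ I would build an \emph{edge-selection gadget} that picks an edge $e_{ij}\in E(H)$ between $V_i$ and $V_j$ and emits, toward the same checkers, signals encoding the two endpoints of $e_{ij}$. A checker vertex for the pair $(i,j)$ receives one signal from the class-$i$ selector and one from the $(i,j)$-edge selector and is designed (using a threshold-2 vertex with exactly these two relevant neighbors, padded by degree-one vertices to kill unwanted infections) so that it contributes to the global percolation time being large \emph{iff} the two arrival times match, i.e.\ iff the endpoint of $e_{ij}$ in $V_i$ really is $v_i$. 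One final ``collector'' structure aggregates: the whole graph achieves percolation time $\geq k$ (for a suitable target $k$ built from the encoding) precisely when \emph{all} $O(k^2)$ checkers are satisfied, which happens iff the selected vertices and edges form a multicolored clique. The target value $k$ itself will be $\Theta(\log|V(H)|)$ or a fixed polynomial in the magnitudes used for encoding — crucially a function of the instance size, not of $k$; the \emph{parameter} of the produced instance is the treewidth, which I must bound.

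The treewidth bound is the heart of the argument and is where the Fellows-style construction earns its keep: because signals are carried by long \emph{paths} and the gadgets interact only at a bounded number of junction/checker vertices — bounded by a function of $k$ (roughly $O(k^2)$ selectors and checkers, each of bounded internal size) — one can exhibit a tree (indeed path) decomposition in which each bag contains the $O(k^2)$ junction vertices plus a constant-sized sliding window along whichever path-gadget the bag is currently traversing. Hence $tw(G')=O(k^2)$, independent of $|V(H)|$, giving a parameterized reduction. I would then just note that $G'$ is constructible in time polynomial in $|V(H)|$ (the paths have polynomial length because we may take the integer encodings to be the vertex/edge indices themselves, $O(|V(H)|^2)$ in magnitude), so the reduction is an fpt-reduction and W[1]-hardness transfers.

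The main obstacle I anticipate is not the high-level scheme but the \emph{local synchronization design}: making a threshold-2 infection behave like a clean two-input AND/equality gate. One must ensure that a checker vertex is infected ``early'' only via the intended pair of neighbors and cannot be shortcut by infections leaking in from the padding vertices, from the other gadgets, or from the percolation front running ``backwards'' along a path; this requires carefully inserting degree-one pendant vertices (forced into every percolating set) to neutralize alternative infection routes, and a careful bookkeeping of which vertex in each gadget is infected at which time, in the spirit of the timing analysis already used in the proof of Lemma \ref{lemainfcam} and in \cite{wg2014}. A secondary technical point is arranging the arithmetic so that ``all checkers satisfied'' translates into a single clean threshold $t(G')\geq k$ rather than a conjunction the percolation process cannot directly express — this is handled by funnelling all checker outputs through a balanced binary ``collector'' tree whose root fires late iff every leaf fired late, so that the global maximum percolation time is governed by the slowest (i.e.\ the failing) checker.
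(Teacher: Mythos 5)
Your high-level route coincides with the paper's: a parameterized reduction from the \textsc{Multicolored Clique} problem, with integers transmitted as infection arrival times along long paths decorated with pendant (forced-seed) vertices, and a treewidth bound coming from the fact that only a bounded-in-$k$ set of junction vertices is shared among the path gadgets (the paper exhibits the separator $T\cup\{p\}$ of size $2k+1$, whose removal leaves components of treewidth at most $2$, giving $tw(G')\le 2k+3$; your $O(k^2)$ bound would also suffice for an fpt-reduction). The difference in flavour --- you select an edge per colour pair and check endpoint consistency, whereas the paper places a choice gadget per \emph{non}-edge and checks that both endpoints of a non-edge are never simultaneously selected --- is immaterial.

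The genuine gap is in the central gate. You assert that a junction vertex ``fires only when two incoming signals arrive simultaneously.'' Under $2$-neighbour bootstrap percolation a threshold-$2$ vertex with two relevant inputs fires one step after its \emph{second} neighbour is infected, i.e.\ at $\max(t_1,t_2)+1$; it cannot distinguish arrival times $(t,t)$ from $(t-5,t)$, so no single such vertex implements an equality test. The paper circumvents this with a complementary pair encoding: $v_i$ is encoded by $(M_i^a,M_i^b)=(2n+2i-2,\,2n-2i+2)$ with constant sum, each coordinate is compared by its own max-gate ($a_i^j$, resp.\ $b_i^j$), and the two outcomes are renormalised by paths of lengths $E-M_i^a-2$ and $E-M_i^b-2$ feeding a further max-gate $l_i^j$; since any mismatch forces at least one of the two coordinates to be strictly late, $l_i^j$ fires at time $\ge E$ exactly when the match fails and at time $E-1$ when it succeeds. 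Your proposal contains no analogue of this two-sided encoding, and without it the equality check does not work. Relatedly, your final aggregation is inconsistent: a balanced binary tree of threshold-$2$ vertices computes a \emph{maximum} (its root is late iff \emph{some} leaf is late), which contradicts your own description ``fires late iff every leaf fired late,'' and ``governed by the slowest checker'' points the wrong way. The paper instead aggregates with a single \emph{min}-gate: $p$ is adjacent to all the $l_i^j$ and to one pendant seed, so it fires one step after the \emph{earliest} checker; a single violated (hence early) checker pulls $p$, and hence $z$, below the threshold $D+E+1$. As described, your construction would not separate yes-instances from no-instances.
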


\begin{proof}
Let us prove that the percolation time problem is $W[1]$-hard when parametrized by $treewidth$. We found a reduction from the Multicolored Clique Problem, which is $W[1]$-hard when parametrized by the number of colors $k$ \cite{multihard}. Given a graph $G$, an integer $k$ and a partition $V_1,V_2,\hdots,V_k$ of $V(G)$, which represent the colors of the vertices, the Multicolored Clique Problem asks for a $k$-clique containing exactly one vertex from each partition $V_i$.

Let us show a parameterized reduction from an instance $(G,k,C)$ of Multicolored Clique Problem to an instance $(G')$ of Percolation Time Problem, where the treewidth of $G'$ is bounded by $k' = 2k+3$.

To save some space in the figures, every time we have a path where every vertex of this path has one pendant neighbor, we will replace it by a weighted edge, like in Figure \ref{path}. So, if the vertex $i$ is infected at time $t$ by some set, then, if the infection follows the path $v_1,v_2,\hdots,v_{n-1}$, the vertex $v_{n-1}$ will be infected at time $t + (n-1)$ by the same set.

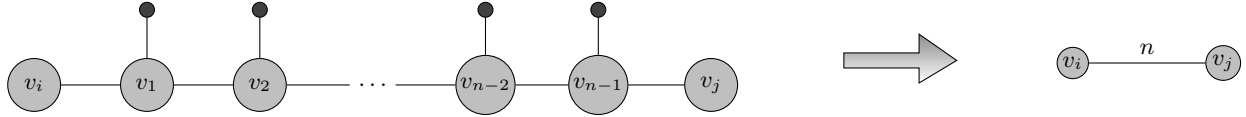
\begin{figure}[ht]

\begin{minipage}[c]{.60\textwidth}
\centering
\begin{tikzpicture}[scale=1]
\tikzstyle{vertex}=[draw,circle,fill=black!25,minimum size=20pt,inner sep=1pt]
\tikzstyle{pendant}=[draw,circle,fill=black!75,minimum size=2pt,inner sep=2pt]

\node[vertex] (vi)  at (-4,0) {$v_i$};
\node[vertex] (v1)  at (-2.5,0) {$v_1$};
\node[pendant,shift={(90:1)}] (v1p) at (v1) {};
\node[vertex] (v2)  at (-1,0) {$v_2$};
\node[pendant,shift={(90:1)}] (v2p) at (v2) {};
\node (dots) at (0.5,0) {$\hdots$};
\node[vertex] (vn2) at (2,0) {$v_{n-2}$};
\node[pendant,shift={(90:1)}] (vn2p) at (vn2) {};
\node[vertex] (vn1) at (3.5,0) {$v_{n-1}$};
\node[pendant,shift={(90:1)}] (vn1p) at (vn1) {};
\node[vertex] (vj)  at (5,0) {$v_j$};

\draw (vi) -- (v1) -- (v2) -- (dots) -- (vn2) -- (vn1) -- (vj);
\draw (v1) -- (v1p);
\draw (v2) -- (v2p);
\draw (vn2) -- (vn2p);
\draw (vn1) -- (vn1p);
\end{tikzpicture}
\end{minipage}%
\begin{minipage}[c]{.25\textwidth}
\centering
\begin{tikzpicture}[scale=1]
\def\arrow{
(-0.5,0.1) -- (0.5,0.1) -- (0.5,0.35) -- (1,0) -- (0.5,-0.3) -- (0.5,-0.1) -- (-0.5,-0.1) -- cycle
}
\draw[color=black, top color=gray] \arrow;

\end{tikzpicture}
\end{minipage}%
\begin{minipage}[c]{.15\textwidth}
\centering
\begin{tikzpicture}[scale=1,auto]
\tikzstyle{vertex}=[draw,circle,fill=black!25,minimum size=10pt,inner sep=1pt]

\node[vertex] (vi) at (-1,0) {$v_i$};
\node[vertex] (vj) at (1,0) {$v_j$};

\draw[-] (vi) edge node {$n$} (vj);

\end{tikzpicture}
\end{minipage}%
\caption{\label{path}A weighted edge on the right and the real structure on the left}
\end{figure}

Let us describe how to construct the graph $G'$. Let $n = |V(G)|$. First, assign the labels $v_1,v_2,\hdots, v_n$ arbitrarily to the vertices in $G$ and, for $1 \leq i \leq n$, let $M_i^a = 2n + 2i - 2$ and $M_i^b = 2n - 2i + 2$. For $1 \leq x \leq k$ we are going to add the gadget $Q^x$ in the Figure \ref{ColorGadget}, where the vertices $v_i, v_j, \hdots$ are the vertices in the partition $V_x$ in the graph $G$. Let $T = \bigcup_{1\leq i \leq k} \{a^i,b^i\}$. Also, let $V^x$ be the set of all vertices in $\{v_1,v_2,\hdots,v_n\} \cap V(Q^x)$.

\begin{figure}[ht]
\centering
\begin{tikzpicture}[scale=1,auto]

\def \scale {1}

\tikzstyle{vertex}=[draw,circle,fill=black!25,minimum size=10pt,inner sep=1pt]
\tikzstyle{pendant}=[draw,circle,fill=black!75,minimum size=5pt,inner sep=1pt]

\node[vertex] (v1) at (-3,2.5) {$v_i$};
\node[pendant] (v1o) at (-2.25,2.5) {};
\node[pendant,shift={(90:0.5)}] (v1op) at (v1o) {};
\node[vertex] (v2) at (-3,1) {$v_j$};
\node[pendant] (v2o) at (-2.25,1) {};
\node[pendant,shift={(90:0.5)}] (v2op) at (v2o) {};
\node (dots) at (-3,0) {$\vdots$};
\node[vertex] (v3) at (-3,-1) {$v_h$};
\node[pendant] (v3o) at (-2.25,-1) {};
\node[pendant,shift={(90:0.5)}] (v3op) at (v3o) {};

\node[vertex] (a) at (1.5,2.5) {$a^x$};
\node[shift={(170:0.4)}] (dots4) at (a) {\vdots};
\node[pendant,shift={(90:0.5)}] (ap) at (a) {};
\node[vertex] (b) at (1.5,-0.25) {$b^x$};
\node[shift={(170:0.4)}] (dots5) at (b) {\vdots};
\node[pendant,shift={(270:0.5)}] (bp) at (b) {};

\draw[-] (v1) to (v2);
\draw[-] (v2) to (-3,0.2);
\draw[-] (dots) to (v3);

\draw[-] (v1) to (v1o);
\draw[-] (v1o) to (v1op);
\draw[-] (v1o) edge [bend left=35] node {$M_i^a-1$} (a);
\draw[-] (v1o) edge node {$M_i^b-1$} (b);

\draw[-] (v2) to (v2o);
\draw[-] (v2o) to (v2op);

\draw[-] (v3) to (v3o);
\draw[-] (v3o) to (v3op);

\draw[-] (v2o) to (-1.5,1.3);
\draw[-] (v2o) to (-1.5,0.7);
\draw[-] (v3o) to (-1.5,-0.4);
\draw[-] (v3o) to (-1.5,-0.8);

\draw[-] (a) to (1,2.1);
\draw[-] (a) to (ap);
\draw[-] (b) to (1,-0.6);
\draw[-] (b) to (bp);

\end{tikzpicture}
\caption{\label{ColorGadget}Gadget $Q^x$ we put $k$ times where each $v_i$ represents the vertex $v_i \in V_x$ of the graph $G$.}
\end{figure}
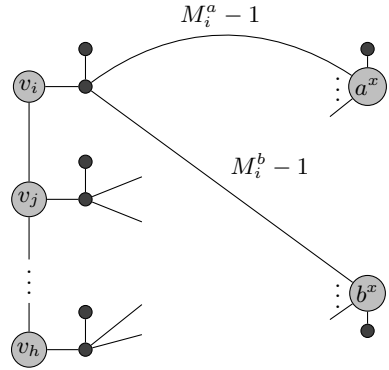

For each pair of vertices $v_i \in V_x$ and $v_j \in V_y$, for $x \neq y$, such that $(v_i,v_j) \notin E(G)$, we add a gadget such as the one in the Figure \ref{ChoiceGadget}. Also, we add six vertices $a_i^j,b_i^j,a_j^i,b_j^i$ and link $a_i^j$ to $a^x$, $b_i^j$ to $b^x$, $a_j^i$ to $a^y$ and $b_j^i$ to $b^y$. After that, link the vertex $vo_i^j$ to $a_i^j$ and to $b_i^j$ with edges with weights $M_i^a$ and $M_i^b$, respectively, and link the vertex $vo_j^i$ to $a_j^i$ and to $b_j^i$ with edges with weights $M_j^a$ and $M_j^b$, respectively. Let the set $VC_i^j = VC_j^i = \{vc_i^j,vc_j^i\}$.

In the Figure \ref{exemploChoiceGadget}, there is an example where $v_i,v_s \in V(Q^x)$ and $(v_i,v_j),(v_s,v_r) \notin E(G)$.

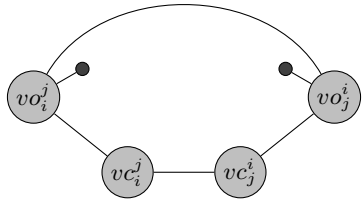
\begin{figure}[ht]
\centering
\begin{tikzpicture}[scale=1]

\tikzstyle{vertex}=[draw,circle,fill=black!25,minimum size=15pt,inner sep=1pt]
\tikzstyle{pendant}=[draw,circle,fill=black!75,minimum size=5pt,inner sep=1pt]

\node[vertex] (vc1) at (-0.75,-1) {$vc_i^j$};
\node[vertex] (vc2) at (0.75,-1) {$vc_j^i$};
\node[vertex] (vco1) at (-2,0) {$vo_i^j$};
\node[vertex] (vco2) at (2,0) {$vo_j^i$};
\node[pendant,shift={(30:0.75)}] (vco1p) at (vco1) {};
\node[pendant,shift={(150:0.75)}] (vco2p) at (vco2) {};

\draw[-] (vc1) to (vc2);
\draw[-] (vc1) to (vco1);
\draw[-] (vc2) to (vco2);
\draw[-] (vco1) to (vco1p);
\draw[-] (vco2) to (vco2p);
\draw[-] (vco1) to [bend left=60] (vco2);

\end{tikzpicture}
\caption{\label{ChoiceGadget}Gadget we put between the Graph Gadgets whenever there is no edge between the vertices $v_i$ and $v_j$.}
\end{figure}

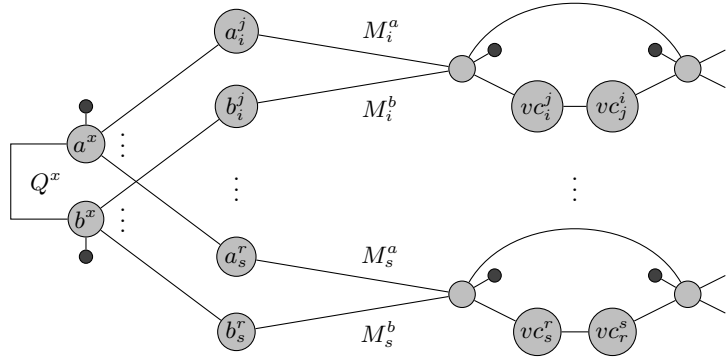
\begin{figure}[ht]
\centering
\begin{tikzpicture}[scale=1,auto]

\def \scale {1}

\tikzstyle{vertex}=[draw,circle,fill=black!25,minimum size=10pt,inner sep=1pt]
\tikzstyle{pendant}=[draw,circle,fill=black!75,minimum size=5pt,inner sep=1pt]

\node[vertex] (a) at (-3,0.5) {$a^x$};
\node[pendant,shift={(90:0.5)}] (ap) at (a) {};
\node[shift={(10:0.5)}] () at (a) {$\vdots$};
\node[vertex] (b) at (-3,-0.5) {$b^x$};
\node[pendant,shift={(270:0.5)}] (bp) at (b) {};
\node[shift={(10:0.5)}] () at (b) {$\vdots$};

\node[vertex] (aij) at (-1,2) {$a_i^j$};
\node[vertex] (bij) at (-1,1) {$b_i^j$};
\node () at (-1,0) {$\vdots$};
\node[vertex] (asr) at (-1,-1) {$a_s^r$};
\node[vertex] (bsr) at (-1,-2) {$b_s^r$};

\node[vertex] (vcix) at (3,1) {$vc_i^j$};
\node[vertex] (vcjy) at (4,1) {$vc_j^i$};
\node[vertex] (vcixo) at (2,1.5) {};
\node[vertex] (vcjyo) at (5,1.5) {};
\node[pendant,shift={(30:0.5)}] (vcixop) at (vcixo) {};
\node[pendant,shift={(150:0.5)}] (vcjyop) at (vcjyo) {};

\node () at (3.5,0) {$\vdots$};

\node[vertex] (vcsx) at (3,-2) {$vc_s^r$};
\node[vertex] (vcry) at (4,-2) {$vc_r^s$};
\node[vertex] (vcsxo) at (2,-1.5) {};
\node[vertex] (vcryo) at (5,-1.5) {};
\node[pendant,shift={(30:0.5)}] (vcsxop) at (vcsxo) {};
\node[pendant,shift={(150:0.5)}] (vcryop) at (vcryo) {};

\node[shift={(225:0.75)}] () at (a) {$Q^x$};

%------------------------------------

\draw[-] (a) to (aij);
\draw[-] (a) to (asr);
\draw[-] (a) to (ap);
\draw[-] (b) to (bij);
\draw[-] (b) to (bsr);
\draw[-] (b) to (bp);

\draw[-] (vcixo) edge node [swap] {$M_i^a$} (aij);
\draw[-] (vcixo) edge node {$M_i^b$} (bij);

\draw[-] (vcsxo) edge node [swap] {$M_s^a$} (asr);
\draw[-] (vcsxo) edge node {$M_s^b$} (bsr);

\draw[-] (vcix) to (vcjy);
\draw[-] (vcix) to (vcixo);
\draw[-] (vcjy) to (vcjyo);
\draw[-] (vcixo) to (vcixop);
\draw[-] (vcjyo) to (vcjyop);
\draw[-] (vcixo) to [bend left=60] (vcjyo);

\draw[-] (vcsx) to (vcry);
\draw[-] (vcsx) to (vcsxo);
\draw[-] (vcry) to (vcryo);
\draw[-] (vcsxo) to (vcsxop);
\draw[-] (vcryo) to (vcryop);
\draw[-] (vcsxo) to [bend left=60] (vcryo);

\draw[-] (vcjyo) to (5.5,1.75);
\draw[-] (vcjyo) to (5.5,1.25);

\draw[-] (vcryo) to (5.5,-1.75);
\draw[-] (vcryo) to (5.5,-1.25);

\draw (a) -- (-4,0.5) -- (-4,-0.5) -- (b);

\end{tikzpicture}
\caption{\label{exemploChoiceGadget}Example of Choice Gadgets that link the Gadget $Q^x$ to other gadgets.}
\end{figure}

Let $E$ be how many vertices we added until this point, including the vertices hidden by the weighted edges. We continue the construction of $G'$ by adding, for each vertex $a_i^j$ and $b_i^j$, a vertex $l_i^j$ and link both $a_i^j$ and $b_i^j$ to it with edges with weights $E-M_i^a-2$ and $E-M_i^b-2$, respectively. Let $L$ be the set of all vertices $l_i^j$ added in that manner. Then, add a vertex $p$, add another vertex and link it to $p$ and also link all vertices in $L$ to $p$. Let $D$ be how many vertices we added until this point. Finally, add a vertex $z$, add another vertex and link it to $z$ and link $p$ to $z$ with an edge with weight $D$.

In the Figure \ref{linksFinale} there is an example for this step of the construction. Since $V(G') = O(n^5)$, then $G'$ can be constructed in polynomial time.

Note that is necessary and sufficient to choose one vertex on each set $V^x$ and each set $VC_i^j$ to be part of a set $S$ in order to $S$ to infect the whole graph $G'$.

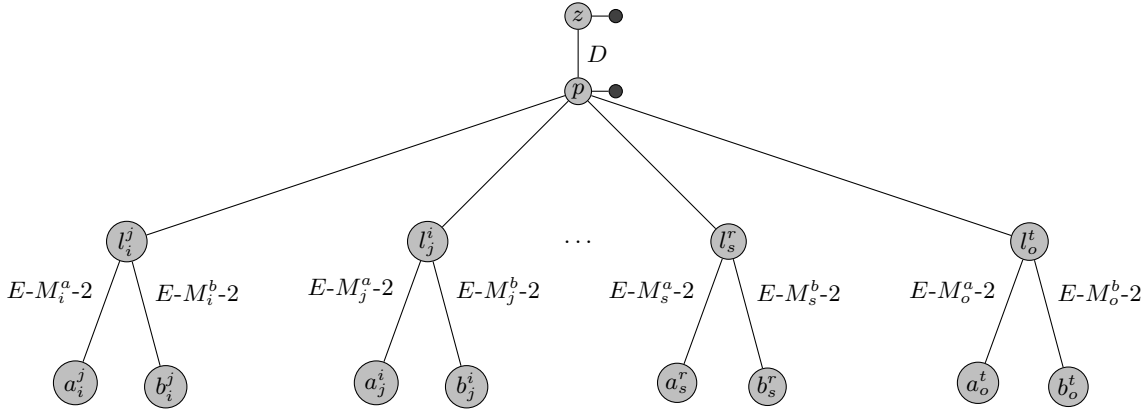
\begin{figure}[ht]
\centering
\begin{tikzpicture}[scale=1,auto]

\def \scale {1}

\tikzstyle{vertex}=[draw,circle,fill=black!25,minimum size=10pt,inner sep=1pt]
\tikzstyle{pendant}=[draw,circle,fill=black!75,minimum size=5pt,inner sep=1pt]

\node[vertex] (lijx) at (-6,-1) {$l_i^j$};
\node[vertex] (lhjx) at (-2,-1) {$l_j^i$};
\node (dots) at (0,-1) {$\hdots$};
\node[vertex] (lsry) at (2,-1) {$l_s^r$};
\node[vertex] (ltto) at (6,-1) {$l_o^t$};
\node[vertex] (p) at (0,1) {$p$};
\node[pendant,shift={(0:0.5)}] (pp) at (p) {};
\node[vertex] (z) at (0,2) {$z$};
\node[pendant,shift={(0:0.5)}] (zp) at (z) {};

\node[vertex,shift={(250:2)}] (aijx) at (lijx) {$a_i^j$};
\node[vertex,shift={(285:2)}] (bijx) at (lijx) {$b_i^j$};

\node[vertex,shift={(250:2)}] (ahjx) at (lhjx) {$a_j^i$};
\node[vertex,shift={(285:2)}] (bhjx) at (lhjx) {$b_j^i$};

\node[vertex,shift={(250:2)}] (asry) at (lsry) {$a_s^r$};
\node[vertex,shift={(285:2)}] (bsry) at (lsry) {$b_s^r$};

\node[vertex,shift={(250:2)}] (atto) at (ltto) {$a_o^t$};
\node[vertex,shift={(285:2)}] (btto) at (ltto) {$b_o^t$};

\draw[-] (lijx) edge node [swap] {$E$-$M_i^a$-2} (aijx);
\draw[-] (lijx) edge node {$E$-$M_i^b$-2} (bijx);

\draw[-] (lhjx) edge node [swap] {$E$-$M_j^a$-2} (ahjx);
\draw[-] (lhjx) edge node {$E$-$M_j^b$-2} (bhjx);

\draw[-] (lsry) edge node [swap] {$E$-$M_s^a$-2} (asry);
\draw[-] (lsry) edge node {$E$-$M_s^b$-2} (bsry);

\draw[-] (ltto) edge node [swap] {$E$-$M_o^a$-2} (atto);
\draw[-] (ltto) edge node {$E$-$M_o^b$-2} (btto);

\draw[-] (lijx) to (p);
\draw[-] (lhjx) to (p);
\draw[-] (lsry) to (p);
\draw[-] (ltto) to (p);

\draw[-] (p) to (pp);
\draw[-] (z) to (zp);

\draw[-] (p) edge node [swap] {$D$} (z);

\end{tikzpicture}
\caption{\label{linksFinale}Gadget to link all vertices in $L$}
\end{figure}

\begin{figure}[ht]
\centering
\begin{tikzpicture}[scale=1,auto]

\tikzstyle{vertex}=[draw,circle,fill=black!25,minimum size=10pt,inner sep=1pt]

\node[vertex] (v1) at (90:1.5) {$v_1$};
\node[shift={(270:0.5)}] (v1c) at (v1) {1};
\node[vertex] (v2) at (162:1.5) {$v_2$};
\node[shift={(180:0.5)}] (v2c) at (v2) {3};
\node[vertex] (v3) at (234:1.5) {$v_3$};
\node[shift={(180:0.5)}] (v3c) at (v3) {3};
\node[vertex] (v4) at (306:1.5) {$v_4$};
\node[shift={(0:0.5)}] (v4c) at (v4) {2};
\node[vertex] (v5) at (18:1.5) {$v_5$};
\node[shift={(0:0.5)}] (v5c) at (v5) {2};

\draw (v5) -- (v1) -- (v2) -- (v3) -- (v4) -- (v5) -- (v2);

\end{tikzpicture}
\caption{\label{GraphG}Graph $G$}
\end{figure}

In Figure \ref{GraphG} there is an example of Instance for the Multicolor Clique Problem for 3 partitions. The Figures \ref{GraphGl1} and \ref{GraphGl2} are the graph $G'$, which is the result of our construction applied to the graph in the Figure \ref{GraphG}.

\begin{figure}[ht]

\begin{tikzpicture}[scale=0.98,auto]

\def \vua {-23}
\def \vda {-8}
\def \vta {7}
\def \vqa {22}

\def \dab {5}

\tikzstyle{vertex}=[draw,circle,fill=black!25,minimum size=10pt,inner sep=1pt]
\tikzstyle{pendant}=[draw,circle,fill=black!75,minimum size=5pt,inner sep=1pt]

\node[vertex] (v1) at (90:9) {$v_1$};
\node[pendant] (v1o) at (90:8.5) {};
\node[pendant,shift={(30:0.5)}] (v1op) at (v1o) {};

\node[vertex] (a1) at (90-\dab-2:7.25) {$a^1$};
\node[pendant,shift={(45:0.5)}] (a1p) at (a1) {};
\node[vertex] (b1) at (90+\dab+2:7.25) {$b^1$};
\node[pendant,shift={(135:0.5)}] (b1p) at (b1) {};

\node[vertex] (v4) at (190:8.75) {$v_4$};
\node[pendant,shift={(55:0.5)}] (v4o) at (v4) {};
\node[pendant,shift={(150:0.5)}] (v4op) at (v4o) {};
\node[vertex] (v5) at (205:7.6) {$v_5$};
\node[pendant,shift={(55:0.5)}] (v5o) at (v5) {};
\node[pendant,shift={(305:0.5)}] (v5op) at (v5o) {};

\node[vertex] (a2) at (190-\dab:7) {$a^2$};
\node[pendant,shift={(180:0.5)}] (a2p) at (a2) {};
\node[vertex] (b2) at (190+\dab:6.25) {$b^2$};
\node[pendant,shift={(270:0.5)}] (b2p) at (b2) {};

\node[vertex] (v2) at (350:8.75) {$v_2$};
\node[pendant,shift={(125:0.5)}] (v2o) at (v2) {};
\node[pendant,shift={(25:0.5)}] (v2op) at (v2o) {};
\node[vertex] (v3) at (334:7.6) {$v_3$};
\node[pendant,shift={(125:0.5)}] (v3o) at (v3) {};
\node[pendant,shift={(235:0.5)}] (v3op) at (v3o) {};

\node[vertex] (b3) at (350-\dab:6.25) {$b^3$};
\node[pendant,shift={(270:0.5)}] (b3p) at (b3) {};
\node[vertex] (a3) at (350+\dab:7) {$a^3$};
\node[pendant,shift={(0:0.5)}] (a3p) at (a3) {};
%-------------------------1-2------------------------------------

\node[vertex] (vc14) at (140:6) {$vc_1^4$};
\node[vertex] (vc41) at (150:6) {$vc_4^1$};
\node[vertex] (vco14) at (135:7) {};
\node[vertex] (vco41) at (155:7) {};
\node[pendant,shift={(260:0.5)}] (vco14p) at (vco14) {};
\node[pendant,shift={(30:0.5)}] (vco41p) at (vco41) {};

\node[vertex] (a14) at (120:7) {$a_1^4$};
\node[vertex] (b14) at (120:6) {$b_1^4$};
\node[vertex] (a41) at (170:7) {$a_4^1$};
\node[vertex] (b41) at (170:6) {$b_4^1$};

%------------------------1-3-------------------------------------

\node[vertex] (vc13) at (40:6) {$vc_1^3$};
\node[vertex] (vc31) at (30:6) {$vc_3^1$};
\node[vertex] (vco13) at (45:7) {};
\node[vertex] (vco31) at (25:7) {};
\node[pendant,shift={(280:0.5)}] (vco13p) at (vco13) {};
\node[pendant,shift={(150:0.5)}] (vco31p) at (vco31) {};

\node[vertex] (a13) at (60:7) {$a_1^3$};
\node[vertex] (b13) at (60:6) {$b_1^3$};
\node[vertex] (a31) at (10:7) {$a_3^1$};
\node[vertex] (b31) at (10:6) {$b_3^1$};

%------------------------2-3-------------------------------------

\node[vertex] (vc42) at (-0.75,-0.5) {$vc_4^2$};
\node[vertex] (vc24) at (0.75,-0.5) {$vc_2^4$};
\node[vertex] (vco42) at (-1.25,0.5) {};
\node[vertex] (vco24) at (1.25,0.5) {};
\node[pendant,shift={(325:0.5)}] (vco42p) at (vco42) {};
\node[pendant,shift={(215:0.5)}] (vco24p) at (vco24) {};

\node[vertex] (a42) at (-3.5,0.5) {$a_4^2$};
\node[vertex] (b42) at (-3.5,-0.5) {$b_4^2$};
\node[vertex] (a24) at (3.5,0.5) {$a_2^4$};
\node[vertex] (b24) at (3.5,-0.5) {$b_2^4$};

\node[vertex] (vc53) at (-0.75,-3) {$vc_5^3$};
\node[vertex] (vc35) at (0.75,-3) {$vc_3^5$};
\node[vertex] (vco53) at (-1.25,-2) {};
\node[vertex] (vco35) at (1.25,-2) {};
\node[pendant,shift={(325:0.5)}] (vco53p) at (vco53) {};
\node[pendant,shift={(215:0.5)}] (vco35p) at (vco35) {};

\node[vertex] (a53) at (-3.5,-2) {$a_5^3$};
\node[vertex] (b53) at (-3.5,-3) {$b_5^3$};
\node[vertex] (a35) at (3.5,-2) {$a_3^5$};
\node[vertex] (b35) at (3.5,-3) {$b_3^5$};

%----------------------------------------------------------------

\draw[-] (v1) to (v1o);
\draw[-] (v1o) to (v1op);
\draw[-] (v1o) edge node {$M_1^a$-1} (a1);
\draw[-] (v1o) edge node [swap]{$M_1^b$-1} (b1);
\draw[-] (a1) to (a1p);
\draw[-] (b1) to (b1p);

\draw[-] (v4) to (v4o);
\draw[-] (v4o) to (v4op);
\draw[-] (v4o) edge node {$M_4^a$-1} (a2);
\draw[-] (v4o) edge node [swap]{$M_4^b$-1} (b2);
\draw[-] (v5) to (v5o);
\draw[-] (v5o) to (v5op);
\draw[-] (v5o) edge node {$M_5^a$-1} (a2);
\draw[-] (v5o) edge node [swap]{$M_5^b$-1} (b2);
\draw[-] (a2) to (a2p);
\draw[-] (b2) to (b2p);
\draw[-] (v4) to (v5);

\draw[-] (v2) to (v2o);
\draw[-] (v2o) to (v2op);
\draw[-] (v2o) edge node [swap]{$M_2^a$-1} (a3);
\draw[-] (v2o) edge node {$M_2^b$-1} (b3);
\draw[-] (v3) to (v3o);
\draw[-] (v3o) to (v3op);
\draw[-] (v3o) edge node [swap]{$M_3^a$-1} (a3);
\draw[-] (v3o) edge node {$M_3^b$-1} (b3);
\draw[-] (a3) to (a3p);
\draw[-] (b3) to (b3p);
\draw[-] (v2) to (v3);

\draw[-] (a1) to (a13);
\draw[-] (a1) to (a14);
\draw[-] (b1) to (b13);
\draw[-] (b1) to (b14);

\draw[-] (a2) to (a41);
\draw[-] (a2) to (a42);
\draw[-] (a2) to (a53);
\draw[-] (b2) to (b41);
\draw[-] (b2) to (b42);
\draw[-] (b2) to (b53);

\draw[-] (a3) to (a31);
\draw[-] (a3) to (a24);
\draw[-] (a3) to (a35);
\draw[-] (b3) to (b31);
\draw[-] (b3) to (b24);
\draw[-] (b3) to (b35);

\draw[-] (vc14) to (vc41);
\draw[-] (vc14) to (vco14);
\draw[-] (vc41) to (vco41);
\draw[-] (vco14) to (vco14p);
\draw[-] (vco41) to (vco41p);
\draw[-] (vco14) to (vco41);

\draw[-] (vc13) to (vc31);
\draw[-] (vc13) to (vco13);
\draw[-] (vc31) to (vco31);
\draw[-] (vco13) to (vco13p);
\draw[-] (vco31) to (vco31p);
\draw[-] (vco13) to (vco31);

\draw[-] (vc24) to (vc42);
\draw[-] (vc24) to (vco24);
\draw[-] (vc42) to (vco42);
\draw[-] (vco24) to (vco24p);
\draw[-] (vco42) to (vco42p);
\draw[-] (vco24) to (vco42);

\draw[-] (vc53) to (vc35);
\draw[-] (vc53) to (vco53);
\draw[-] (vc35) to (vco35);
\draw[-] (vco53) to (vco53p);
\draw[-] (vco35) to (vco35p);
\draw[-] (vco53) to (vco35);

\draw[-] (vco14) edge node {$M_1^a$} (a14);
\draw[-] (vco14) edge node [swap]{$M_1^b$} (b14);
\draw[-] (vco41) edge node [swap]{$M_4^a$} (a41);
\draw[-] (vco41) edge node {$M_4^b$} (b41);

\draw[-] (vco13) edge node [swap]{$M_1^a$} (a13);
\draw[-] (vco13) edge node {$M_1^b$} (b13);
\draw[-] (vco31) edge node {$M_3^a$} (a31);
\draw[-] (vco31) edge node [swap]{$M_3^b$} (b31);

\draw[-] (vco24) edge node {$M_2^a$} (a24);
\draw[-] (vco24) edge node [swap]{$M_2^b$} (b24);
\draw[-] (vco42) edge node [swap]{$M_4^a$} (a42);
\draw[-] (vco42) edge node {$M_4^b$} (b42);

\draw[-] (vco53) edge node [swap]{$M_5^a$} (a53);
\draw[-] (vco53) edge node {$M_5^b$} (b53);
\draw[-] (vco35) edge node {$M_3^a$} (a35);
\draw[-] (vco35) edge node [swap]{$M_3^b$} (b35);

\end{tikzpicture}
\caption{\label{GraphGl1}Part one of the graph $G'$ resulting from the reduction applied to graph $G$ of the Figure \ref{GraphG} regarding the 3-Multicolor Clique Problem}
\end{figure}

\begin{figure}[ht]
\centering
\begin{tikzpicture}[scale=0.9,auto]

\def \angle {40}
\def \angledes {130}
\def \angled {25}
\def \dist {5}
\def \distd {2}

\tikzstyle{vertex}=[draw,circle,fill=black!25,minimum size=10pt,inner sep=1pt]
\tikzstyle{pendant}=[draw,circle,fill=black!75,minimum size=5pt,inner sep=1pt]

\node[vertex] (l14) at (0*\angle+\angledes:\dist) {$l_1^4$};
\node[vertex,shift={(0*\angle+\angledes+\angled:\distd)}] (a14) at (l14) {$a_1^4$};
\node[vertex,shift={(0*\angle+\angledes-\angled:\distd)}] (b14) at (l14) {$b_1^4$};

\node[vertex] (l41) at (1*\angle+\angledes:\dist) {$l_4^1$};
\node[vertex,shift={(1*\angle+\angledes+\angled:\distd)}] (a41) at (l41) {$a_4^1$};
\node[vertex,shift={(1*\angle+\angledes-\angled:\distd)}] (b41) at (l41) {$b_4^1$};

\node[vertex] (l13) at (2*\angle+\angledes:\dist) {$l_1^3$};
\node[vertex,shift={(2*\angle+\angledes+\angled:\distd)}] (a13) at (l13) {$a_1^3$};
\node[vertex,shift={(2*\angle+\angledes-\angled:\distd)}] (b13) at (l13) {$b_1^3$};

\node[vertex] (l31) at (3*\angle+\angledes:\dist) {$l_3^1$};
\node[vertex,shift={(3*\angle+\angledes+\angled:\distd)}] (a31) at (l31) {$a_3^1$};
\node[vertex,shift={(3*\angle+\angledes-\angled:\distd)}] (b31) at (l31) {$b_3^1$};

\node[vertex] (l24) at (4*\angle+\angledes:\dist) {$l_2^4$};
\node[vertex,shift={(4*\angle+\angledes+\angled:\distd)}] (a24) at (l24) {$a_2^4$};
\node[vertex,shift={(4*\angle+\angledes-\angled:\distd)}] (b24) at (l24) {$b_2^4$};

\node[vertex] (l42) at (5*\angle+\angledes:\dist) {$l_4^2$};
\node[vertex,shift={(5*\angle+\angledes+\angled+5:\distd)}] (a42) at (l42) {$a_4^2$};
\node[vertex,shift={(5*\angle+\angledes-\angled:\distd)}] (b42) at (l42) {$b_4^2$};

\node[vertex] (l53) at (6*\angle+\angledes:\dist) {$l_5^3$};
\node[vertex,shift={(6*\angle+\angledes+\angled:\distd)}] (a53) at (l53) {$a_5^3$};
\node[vertex,shift={(6*\angle+\angledes-\angled:\distd)}] (b53) at (l53) {$b_5^3$};

\node[vertex] (l35) at (7*\angle+\angledes:\dist) {$l_3^5$};
\node[vertex,shift={(7*\angle+\angledes+\angled:\distd)}] (a35) at (l35) {$a_3^5$};
\node[vertex,shift={(7*\angle+\angledes-\angled:\distd)}] (b35) at (l35) {$b_3^5$};

\node[vertex] (p) at (0,0) {$p$};
\node[pendant,shift={(270:0.75)}] (pp) at (p) {};
\node[vertex,shift={(90:3)}] (z) at (p) {$z$};
\node[pendant,shift={(0:0.5)}] (zp) at (z) {};

\draw[-] (l14) edge node {$E$-$M_1^a$-2} (a14);
\draw[-] (l14) edge node [swap] {$E$-$M_1^b$-2} (b14);

\draw[-] (l41) edge node {$E$-$M_4^a$-2} (a41);
\draw[-] (l41) edge node [swap] {$E$-$M_4^b$-2} (b41);

\draw[-] (l13) edge node {$E$-$M_1^a$-2} (a13);
\draw[-] (l13) edge node [swap] {$E$-$M_1^b$-2} (b13);

\draw[-] (l31) edge node {$E$-$M_3^a$-2} (a31);
\draw[-] (l31) edge node [swap] {$E$-$M_3^b$-2} (b31);

\draw[-] (l24) edge node {$E$-$M_2^a$-2} (a24);
\draw[-] (l24) edge node [swap] {$E$-$M_2^b$-2} (b24);

\draw[-] (l42) edge node {$E$-$M_4^a$-2} (a42);
\draw[-] (l42) edge node [swap] {$E$-$M_4^b$-2} (b42);

\draw[-] (l53) edge node {$E$-$M_5^a$-2} (a53);
\draw[-] (l53) edge node [swap] {$E$-$M_5^b$-2} (b53);

\draw[-] (l35) edge node {$E$-$M_3^a$-2} (a35);
\draw[-] (l35) edge node [swap] {$E$-$M_3^b$-2} (b35);

\draw[-] (l14) to (p);
\draw[-] (l41) to (p);
\draw[-] (l13) to (p);
\draw[-] (l31) to (p);
\draw[-] (l24) to (p);
\draw[-] (l42) to (p);
\draw[-] (l53) to (p);
\draw[-] (l35) to (p);

\draw[-] (p) to (pp);
\draw[-] (z) to (zp);

\draw[-] (p) edge node {$D$} (z);

\end{tikzpicture}
\caption{\label{GraphGl2}Part two of the graph $G'$ resulting from the reduction applied to graph $G$ of the Figure \ref{GraphG} regarding the 3-Multicolor Clique Problem}
\end{figure}

Since the set $S = T \cup \{p\}$ is a separator set of $G'$ and each connected component of $G' - S$ have treewidth at most 2, we have that $tw(G') \leq |S| + (2+1) - 1 = 2k+3$.

Now, let us prove that there is a $k$-clique in $G$ such that all vertices are in different partitions if and only if $t(G') \geq D + E + 1$.

First, assume that there is a $k$-clique in $G$ such that all vertices are in different partitions. Let us build a percolating set $S$ such that $t(S) \geq D+E+1$.

Let $C = \{v_{c_1},v_{c_2},\hdots,v_{c_k}\}$ be the set of vertices in this $k$-clique. Initially, set $S = C$. For each set $VC_i^j$ such that, for some $1 \leq s \leq k$, either $i = c_s$ or $j = c_s$, add to $S$ the vertex $vc_i^j$, if $j = c_s$, or add $vc_j^i$ to $S$, if $i = c_s$. Since $C$ induce a $k$-clique on $G$, we do not have a set $VC_i^j$ such that, for some $1 \leq s,r \leq k$, $i = c_s$ and $j = c_r$. Finally, for each set $VC_i^j$ such that, for all $1 \leq s \leq k$, neither $i = c_s$ nor $j = c_s$, add, arbitrarily, either $vc_i^j$ or $vc_j^i$ to $S$.

We have that $S$ infects each vertex $a^x$ at time $M_{v_{c_x}}^a$ and each vertex $b^x$ at time $M_{v_{c_x}}^b$. Since, for all $1 \leq i \neq j \leq n$, either $M_i^a \geq M_j^a + 2$ and $M_i^b \leq M_j^b - 2$ or $M_i^a \leq M_j^a - 2$ and $M_i^b \geq M_j^b + 2$, then for all vertices $a_i^j$ and $b_i^j$, either $a_i^j$ is infected by $S$ at time at least $M_i^a + 2$ or $b_i^j$ is infected by $S$ at time at least $M_i^b + 2$ and, thus, all vertices $l_i^j \in L$ are infected by $S$ at time at least $E$. Then, we can conclude that $p$ is infected by $S$ at time at least $E+1$ and $z$ is infected by $S$ at time at least $D+E+1$. Therefore, we have that $t(G) \geq D + E + 1$.

Now, assume that there is no $k$-clique in $G$ such that all vertices are in different partitions.

Since is necessary and sufficient to choose one vertex, and only one, from each set $V^x$ and each set $VC_i^j$ so that the resulting set infects the whole graph $G'$, we can focus on proving that the percolating sets built in that way infect $G'$ at time less than $D+E+1$. Let $SS$ be the family of such percolating sets.

By our construction and the values of $D$ and $E$, we have that only $z$ can possible be infected at time at least $D+E+1$ by some percolating set in $SS$ because $z$ is the only vertex in $G'$ such that there is a simple path from some vertex in $S$ to $z$ with length greater than $D+E$, for any $S \in SS$. Let $S$ be an arbitrary percolating set in $SS$. Since there is no $k$-clique in $G$ such that all vertices are in different partitions, there will be vertices $a^x,b^x$ and $vo_i^j$ such that $S$ infects $a^x$ at time $M_i^a$, $b^x$ at time $M_i^b$ and $vo_i^j$ at time 1. Thus, we have that there will be vertices $a_i^j$ and $b_i^j$ that are infected by $S$ at times $M_i^a + 1$ and $M_i^b + 1$, respectively. Then, we have that there will be some vertex $l_i^j$ infected by $S$ at time $E-1$. We can, then, conclude that $p$ is infected by $S$ at time at most $E$ and $z$ at time at most $D+E$. Since $z$ is the only vertex that could be possibly infected at time at least $D+E+1$ by $S$, we have that $t(S) < D+E+1$ and, since $S$ is a arbitrary set of $SS$, therefore, $t(G) < D+E+1$.
\end{proof}

% ----------------------------------------------------------------------------------
% ----------------------------------------------------------------------------------
% ----------------------------------------------------------------------------------
% ----------------------------------------------------------------------------------
The next theorem proves that the Percolation Time Problem is polynomial time solvable for fixed $tw(G)$ and it is linear time solvable for fixed $tw(G)+k$.

\begin{theorem}\label{fpt-tw3}
Let $w$ be an integer and let $G$ be a graph with treewidth $tw(G)=w$.
Then $t(G)$ can be computed in time $O(50^{w+1} n^{w+2})$ and we can decide whether $t(G)\geq k$ in time $O((50k)^{w+1}n)$.
\end{theorem}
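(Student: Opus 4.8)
The plan is to recast the Percolation Time Problem as a labelling problem and then solve that by dynamic programming over a tree decomposition. Call a function $\tau:V(G)\to\{0,1,\ldots\}$ \emph{consistent} if for every $v$ with $\tau(v)\geq 1$ we have $|\{u\in N(v):\tau(u)\leq\tau(v)-1\}|\geq 2$ and $|\{u\in N(v):\tau(u)\leq\tau(v)-2\}|\leq 1$. A short induction on the round number shows that, writing $S=\tau^{-1}(0)$, a total consistent $\tau$ is exactly the function $v\mapsto t(G,S,v)$ of the $2$-neighbour process started from $S$; since percolation time never exceeds $n-1$, it follows that $t(G)=\max\{\max_v\tau(v):\tau\ \text{total consistent},\ \tau:V(G)\to\{0,\ldots,n-1\}\}$. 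For the decision version I would \emph{truncate} the labels to $\{0,\ldots,k\}$, let the label $k$ stand for ``time $\geq k$'', keep the consistency conditions at labels $1,\ldots,k-1$ and replace the condition at label-$k$ vertices by ``$|\{u\in N(v):\tau(u)\leq k-2\}|\leq 1$''; using the monotonicity fact quoted from \cite{wg2014} in the proof of Theorem~\ref{teo-fpt-delta} (adding vertices to a set does not decrease the infection time of distant vertices), a consistent truncated labelling with some vertex labelled $k$ exists if and only if $t(G)\geq k$. I expect this first step --- pinning down exactly which labellings arise from percolating sets, and checking soundness of the truncation --- to be the main conceptual obstacle; the remaining work is standard tree-decomposition bookkeeping.

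Next I would fix a nice tree decomposition of $G$ of width $w$ with the usual node types, including \emph{introduce-edge} nodes, arranged so that every edge of $G$ is introduced exactly once at a node whose bag contains both endpoints; starting from any width-$w$ decomposition of $G$ (which exists since $tw(G)=w$, and is assumed given or computed in time depending only on $w$), such a nice decomposition with $O(wn)$ nodes is obtained in linear time. The DP keeps, for a node with bag $X$, a table indexed by \emph{partial states}: to each $v\in X$ a label $\tau(v)$ (in $\{0,\ldots,n-1\}$, resp. $\{0,\ldots,k\}$) and a \emph{status} $(a_v,b_v)$, where $a_v\in\{0,1,2\}$ counts (capped at $2$) the already-processed neighbours $u$ of $v$ with $\tau(u)\leq\tau(v)-1$ and $b_v\in\{0,1\}$ those with $\tau(u)\leq\tau(v)-2$ (a state in which $b_v$ would reach $2$ is discarded). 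Since $b_v\leq a_v$, only the five pairs $(0,0),(1,0),(1,1),(2,0),(2,1)$ occur. For each state the table stores the maximum, over all labellings of the vertices already forgotten in the subtree that are consistent with the state, of the largest label among those forgotten vertices (for the decision version one instead propagates the Boolean ``some already-forgotten vertex carries label $k$'').

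The transitions are routine: introduce-vertex adds a vertex with status $(0,0)$ under every admissible label; introduce-edge $uv$ updates $a_u,b_u$ according to $\tau(u)$ and $\tau(v)$, and symmetrically $a_v,b_v$; forget-vertex $v$ is where I check that $v$'s status is already final (namely $a_v=2$ and $b_v\leq 1$ if $1\leq\tau(v)\leq n-2$; only $b_v\leq 1$ if $\tau(v)$ is $0$ or the top label $k$), discarding the state otherwise, and pushes $\tau(v)$ into the running maximum; join combines two child states over the same bag that agree on all labels by setting, per vertex, $a_v\mapsto\min(2,a_v^{(1)}+a_v^{(2)})$ and $b_v\mapsto b_v^{(1)}+b_v^{(2)}$ (discarding if this exceeds $1$), with no edge double-counted because each is introduced only once. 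The answer is read off the empty root bag.

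For the running time, a bag has at most $w+1$ vertices, each contributing at most $n$ (resp. $k+1$) labels and $5$ statuses, so a table has at most $(5n)^{w+1}$ (resp. $(5(k+1))^{w+1}$) entries. The bottleneck is the join node: for a fixed parent state one ranges over the $5^{w+1}$ status-vectors of one child (the labels, and the other child's statuses, being then forced), costing $(25n)^{w+1}\leq 50^{w+1}n^{w+1}$, resp. $(25(k+1))^{w+1}\leq(50k)^{w+1}$ for $k\geq 1$, per node. Multiplying by the $O(n)$ nodes (and absorbing the $w$ from the node count into the exponential factor) gives the claimed bounds $O(50^{w+1}n^{w+2})$ for computing $t(G)$ and $O((50k)^{w+1}n)$ for deciding $t(G)\geq k$.
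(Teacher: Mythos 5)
Your scheme for computing $t(G)$ exactly is essentially the paper's own: your ``consistent labelling'' characterization is the paper's notion of a quasi-infection time function combined with the requirement that every forgotten positively-labelled vertex have two earlier-labelled neighbours, your five status pairs $(a_v,b_v)$ are in bijection with the paper's statuses $\{z,o_1,o_2,t_1,t_2\}$, and using introduce-edge nodes rather than handling adjacencies at introduce and join nodes is cosmetic. The induction $S_i=\{v:\tau(v)\le i\}$ does establish the exact-version characterization, and the complexity accounting matches the claimed bounds.

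The genuine gap is in the decision version. You truncate labels to $\{0,\ldots,k\}$, let label $k$ mean ``time $\ge k$'', and at label-$k$ vertices you \emph{drop} the requirement of having at least two neighbours with smaller labels, keeping only $|\{u\in N(v):\tau(u)\le k-2\}|\le 1$. The ``only if'' direction of your claimed equivalence then fails: a consistent truncated labelling with a vertex labelled $k$ need not certify $t(G)\ge k$. Concretely, take $G=P_4$ with vertices $u_1,v_1,v_2,u_2$ and $k=2$: the labelling $\tau(u_1)=\tau(u_2)=0$, $\tau(v_1)=\tau(v_2)=2$ satisfies all of your conditions (each $v_i$ has exactly one neighbour labelled $\le 0$), yet $t(P_4)=1$. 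The monotonicity fact from \cite{wg2014} cannot repair this, because it only protects vertices at distance at least $k$ from the vertices being added to the seed set, whereas here the two ``late'' vertices are adjacent. The same relaxation appears in your forget-node test (``only $b_v\le 1$ if $\tau(v)$ is $0$ or the top label''), and if it is also applied to label $n-1$ in the exact version it is unsound there too: on $K_3$ the labelling $0,2,2$ would be accepted although $t(K_3)=1$. The fix is what the paper does: keep \emph{both} consistency conditions at every positively labelled vertex, including the top label, and simply restrict the label range to $\{0,\ldots,k\}$. This is sound because if $t(G,S)=T\ge k$ then $S_{T-k}$ is a percolating set with percolation time exactly $k$, so a witnessing labelling with values in $\{0,\ldots,k\}$ attaining $k$ always exists; with that correction the rest of your argument and the running-time analysis go through.
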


In order to prove this theorem, we have to give some definitions first.
A \emph{tree decomposition} \cite{rob-seymour84} of a graph $G$ is a tuple $(\mathcal{T},\mathcal{B})$, where $\mathcal{T}$ is a tree, $\mathcal{B}$ contains a subset $B_t\subseteq V(G)$ for each node $t\in \mathcal{T}$ and:

\begin{enumerate}
\item[(i)] for each vertex $u$ of $G$, there exists some $B_t\in \mathcal{B}$ containing $u$;
\item[(ii)] for each edge $uv$ of $G$, there exists some $B_t\in \mathcal{B}$ containing $u$ and $v$; and 
\item[(iii)] if $B_i$ and $B_j$ both contain a vertex $v$, then, for every node $t$ of the tree in the (unique) path between $i$ and $j$, $B_t$ also contains $v$.
\end{enumerate}

Observe from (iii) that $\mathcal{T}[\{t:v\in B_t\}]$ is connected, for all $v\in V(G)$. That is, the nodes associated with vertex $v$ form a connected component of $\mathcal{T}$. In other words, if $t$ is on the unique path of $\mathcal{T}$ between $i$ to $j$, then $B_i\cap B_j\subseteq B_t$.
The \emph{width of $(\mathcal{T},\mathcal{B})$} is $\max\{|B_t|-1:t\in T, B_t\in\mathcal{B}\}$ and the \emph{treewidth $tw(G)$ of $G$} is the minimum width over all tree decompositions of $G$ \cite{rob-seymour84}.

Consider $\mathcal{T}$ to be rooted in $r$. We say that $(\mathcal{T},\mathcal{B})$ is a \emph{nice tree decomposition} \cite{kloks91} if each node $t$ of $\mathcal{T}$ is either a leaf, or $t$ has exactly two children $t_1$ and $t_2$ with $B_{t}=B_{t_1}=B_{t_2}$ (called \emph{join node}), or $t$ has exactly one child $t'$ and either $B_t=B_{t'}\setminus\{x\}$ (called \emph{forget node}) or $B_t=B_{t'}\cup\{x\}$ (called \emph{introduce node}), for some $x\in V(G)$. It is known that, for a fixed $k$, we can determine if the treewidth of a given graph $G$ is at most $k$, and if so, find a nice tree decomposition of $G$ with $O(n)$ nodes and width at most $w$ in linear time \cite{bodlaender96}.
Given a node $t$ of $\mathcal{T}$, we denote by $G_t$ the subgraph of $G$ induced by $\bigcup_{t'\in V(\mathcal{T}_t)}B_{t'}$, where $\mathcal{T}_t$ is the subtree of $\mathcal{T}$ rooted at $t$.

For each node $t \in \mathcal{T}$, we will compute a table $W_t$ with integer values for every pair $(p,f)$, where $p$ is a function $p: B_t \rightarrow \{0,1,\hdots,n-1\}$ that is an assignment of times for the vertices in $B_t$ and $f$ is a function $f : B_t \rightarrow \{z,o_1,o_2,t_1,t_2\}$ which associate a status to each vertex in $B_t$. Let $v \in B_t$:
\begin{itemize}
\item $f(v) = z$ indicates that $v$ has does not have any neighbor in $G_t$ infected at time less than $p(v)$;
\item $f(v) = o_1$ indicates that $v$ has exactly one neighbor in $G_t$ infected at time less than $p(v)$ and this neighbor is infected at time less than $p(v)-1$;
\item $f(v) = o_2$ indicates that $v$ has exactly one neighbor in $G_t$ infected at time less than $p(v)$ and this neighbor is infected at time equal to $p(v)-1$;
\item $f(v) = t_1$ indicates that $v$ has two or more neighbors in $G_t$ infected at time less than $p(v)$ and exactly one of them is infected at time less than $p(v)-1$;
\item $f(v) = t_2$ indicates that $v$ has two or more neighbors in $G_t$ infected at time less than $p(v)$ and all of them are infected at time equal to $p(v)-1$.
\end{itemize}

Since $|B_t| \leq w+1$, we have that the number of functions $p$ and $f$ are bounded by $n^{w+1}$ and $5^{w+1}$, respectively.

Let $f$ be a function $V(G) \rightarrow \{0,1,\hdots,n-1\}$. We say that $f$ is a \emph{infection time function} of $G$ if there is a percolating set $S$ such that $t(S,v) = f(v)$ for all $v \in V(G)$. We say that $f$ is a \emph{quasi-infection time function} of $G$ if, for each $v \in V(G)$, there is at most one vertex $u \in N(v)$ such that $f(u) < f(v) - 1$.

Let $W_t(p,f) = \max_g \max_{v \in V(G_t)} g(v)$ where $g$ iterates over all quasi-infection time functions of $G_t$ such that: each vertex $v\in V(G_t)\setminus B_t$, where $g(v) > 0$, has at least two neighbors $u_1$ and $u_2$ such that $g(u_1) < g(v)$ and $g(u_2) < g(v)$; $p$ is $g$ restricted to $B_t$ and $g$ does not contradict $f$ regarding the vertices in $B_t$. We say that $g$ contradicts $f$ if, for some $v \in B_t$, $f(v)$ does not correspond to the reality of $g$ on $v$ and its neighbors in $G_t$. Set $W_t(p,f) = -1$ if there is no quasi-infection time function $g$ satisfying these conditions.

Our algorithm computes $W_t(p,f)$ for every $t\in T$ and every pair $(p,f)$ of functions. Clearly, $t(G)$ is equal to the highest $W_r(p,f)$ (recall that $r$ is the root of $\mathcal{T}$) such that, for all vertices $v \in B_r$, either $f(v) = t_1$ or $f(v) = t_2$ or $p(v) = 0$.

For some indexes $h=(p,f)$, we can conclude directly that $W_t(h) = -1$ for some $t \in T$. We will say that such indexes are invalid. We are only interested in valid indexes, defined as below. Let $t \in T$ and $h = (p,f)$ be an index of $W_t$. We say that $h$ is valid if and only if, for each vertex $v \in B_t$, we have that:
\begin{itemize}
\item If $f(v) = z$ then $p(z) \geq p(v)$ for all $z \in N(v) \cap B_t$;
\item If $f(v) = o_1$ then there is at most one vertex $z \in N(v) \cap B_t$ such that $p(z) < p(v)$ and, if there is one, we have that $p(z) < p(v)-1$;
\item If $f(v) = o_2$ then there is at most one vertex $z \in N(v) \cap B_t$ such that $p(z) < p(v)$ and, if there is one, we have that $p(z) = p(v)-1$;
\item If $f(v) = t_1$ then there is at most one vertex $z \in N(v) \cap B_t$ such that $p(z) < p(v) - 1$;
\item If $f(v) = t_2$ then all vertices $z \in N(v) \cap B_t$ where $p(z) < p(v)$, are such that $p(z) = p(v)-1$.
\end{itemize}

Clearly, if $h=(p,f)$ is an invalid index then $W_t(h) = -1$ because any extension of $p$ for $G_t$ would contradict $f$. Now, we will determine the value of $W_t(h)$ depending on the type of the node $t$.

\begin{lemma}
If $t$ is a leaf node and $h = (p,f)$ is an index of $W_t$, then we can compute $W_t(h)$ in $O(w^2)$ time.
\end{lemma}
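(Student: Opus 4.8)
The plan is to exploit that a leaf node $t$ has $G_t=G[B_t]$, so $V(G_t)\setminus B_t=\emptyset$ and the only quasi-infection time function $g$ of $G_t$ with $g|_{B_t}=p$ is $g=p$ itself. Consequently $W_t(h)$ is either $\max_{v\in B_t}p(v)$, when $p$ (viewed as a time assignment on $G[B_t]$) is faithfully described by $f$, or $-1$ otherwise; the whole computation therefore amounts to testing this ``faithfully described'' condition.

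First I would, for each vertex $v\in B_t$, make a single scan over $N(v)\cap B_t$ and compute the two counters $d_<(v)=|\{u\in N(v)\cap B_t: p(u)<p(v)\}|$ and $d_{\ll}(v)=|\{u\in N(v)\cap B_t: p(u)<p(v)-1\}|$. Then I verify that $f(v)$ matches exactly: $f(v)=z$ iff $d_<(v)=0$; $f(v)=o_1$ iff $d_<(v)=1$ and $d_{\ll}(v)=1$; $f(v)=o_2$ iff $d_<(v)=1$ and $d_{\ll}(v)=0$; $f(v)=t_1$ iff $d_<(v)\ge 2$ and $d_{\ll}(v)=1$; $f(v)=t_2$ iff $d_<(v)\ge 2$ and $d_{\ll}(v)=0$. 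This check subsumes the validity conditions stated before the lemma, and it also guarantees that $p$ is a quasi-infection time function of $G_t$ (each value of $f(v)$ forces $d_{\ll}(v)\le 1$), so no separate test is needed. Each vertex costs $O(w)$ and there are at most $w+1$ of them, for a total of $O(w^2)$.

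If some vertex fails the test, then no time assignment on $G_t$ realizes the statuses prescribed by $f$ on $B_t$, so $W_t(h)=-1$; otherwise $g=p$ is the unique admissible witness (the clause ``every $v\in V(G_t)\setminus B_t$ with $g(v)>0$ has two smaller neighbours'' being vacuous here), and we return $W_t(h)=\max_{v\in B_t}p(v)$. I do not expect any real obstacle in this base case; the only subtlety worth flagging is that the match with $f$ must be \emph{exact} (equality of status, not mere compatibility), which is precisely where leaf nodes require more than plain validity, and that bag vertices carry no percolating-set membership obligation because any missing infection support can still arrive from vertices outside $G_t$ at ancestor nodes.
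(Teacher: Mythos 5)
Your proof is correct and follows the same basic route as the paper's one-line argument: for a leaf, $G_t=G[B_t]$, the only admissible witness is $g=p$ (the outside-the-bag condition is vacuous), and everything reduces to an $O(w^2)$ local check over the bag. The one place you diverge is worth noting: the paper states the answer as ``$W_t(h)=\max_{v\in B_t}p(v)$ iff $h$ is valid,'' where validity is the one-sided compatibility condition defined before the lemmas (for instance, $f(v)=t_1$ is ``valid'' even when $v$ has no in-bag neighbour with smaller $p$-value), whereas the definition of $W_t$ requires that $g$ not contradict $f$, i.e.\ that $f(v)$ equal the \emph{actual} status of $v$ under $p$ in $G[B_t]$; your exact-match test via the counters $d_<(v)$ and $d_{\ll}(v)$ is therefore the condition the definition actually calls for, and you correctly observe that it subsumes both validity and the quasi-infection-time requirement ($d_{\ll}(v)\le 1$). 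Same complexity and structure, with a slightly more faithful treatment of the base case.
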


\begin{proof}
We have that $W_t(h) = \max_{v \in B_t} p(v)$ if and only if $h$ is valid, which can be checked in $O(w^2)$ time.
\end{proof}

\begin{lemma}
Let $t$ be a forget node with child $t'$ such that $B_t = B_{t'} \setminus \{v\}$. Let $h = (p,f)$ be a valid index of $W_t$. Then $W_t(h) = \max_{h'} W_{t'}(h')$ where $h' = (p',f')$ iterates over all $p'$ and $f'$ such that $p'$ and $f'$ are extensions of, respectively, $p$ and $f$ for $B_{t'}$ and either $f'(v) = t_1$ or $f'(v) = t_2$ or $p'(v) = 0$.
\end{lemma}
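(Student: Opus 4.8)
The key observation is that a forget node does not change the graph being analysed: since the subtrees $\mathcal{T}_t$ and $\mathcal{T}_{t'}$ have the same node set, $G_t=G_{t'}$. Consequently the quasi-infection time functions that can witness $W_t(p,f)$ and those that can witness $W_{t'}(p',f')$ all live on this common graph, and the proof reduces to matching these witnesses up. The plan is to show that a quasi-infection time function $g$ on $G_t=G_{t'}$ witnesses $W_t(h)$ if and only if it witnesses $W_{t'}(h')$ for the unique pair $h'=(p',f')$ it induces on $B_{t'}$, and that such an $h'$ always satisfies the side condition $f'(v)\in\{t_1,t_2\}$ or $p'(v)=0$; the claimed equality then follows by maximizing over witnesses on each side. (Pairs $h'$ in the right-hand maximum that are not valid indices contribute $-1$ and are harmless.)

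For $W_t(h)\le\max_{h'}W_{t'}(h')$, I would take a quasi-infection time function $g$ realizing $W_t(h)$: $p$ is $g|_{B_t}$, $g$ does not contradict $f$, and every vertex of $V(G_t)\setminus B_t$ with positive $g$-value has two neighbors infected strictly earlier. Put $p'=g|_{B_{t'}}$ and let $f'$ be the status function that $g$ induces on $B_{t'}$ (well defined: the quasi-infection property guarantees that the situation at each bag vertex falls into exactly one of the five categories $z,o_1,o_2,t_1,t_2$). Then $p',f'$ extend $p,f$, and $g$ witnesses $W_{t'}(p',f')$, so $W_{t'}(h')\ge W_t(h)$. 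What must be checked is the side condition at the forgotten vertex $v$: since $v\in V(G_t)\setminus B_t$, either $g(v)=0$, giving $p'(v)=0$; or $g(v)>0$ and $v$ has at least two neighbors infected before time $g(v)$, in which case the quasi-infection property at $v$ forces all but at most one of those to be infected exactly at time $g(v)-1$, which is precisely the definition of status $t_1$ or $t_2$. I expect this last deduction to be the only delicate point, as it is where the quasi-infection hypothesis is used essentially and where one must rule out the statuses $z,o_1,o_2$ for $v$.

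For the reverse inequality, fix an admissible pair $h'=(p',f')$ (extending $h$, with $f'(v)\in\{t_1,t_2\}$ or $p'(v)=0$) together with a witness $g$ of $W_{t'}(h')$. Since $B_t\subseteq B_{t'}$ and $f$ is the restriction of $f'$, and since $G_t=G_{t'}$ makes the neighbourhoods (hence the ``reality'' of $g$) identical, $g|_{B_t}=p$ and $g$ does not contradict $f$ on $B_t$. The only internal-vertex condition not already guaranteed is the one for $v$: if $p'(v)=0$ it is vacuous, since that condition constrains only vertices with positive value; and if $f'(v)\in\{t_1,t_2\}$ then, by definition of those statuses, $v$ has at least two neighbors in $G_t$ infected before time $g(v)$, as needed. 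Hence $g$ witnesses $W_t(h)$, so $W_t(h)\ge W_{t'}(h')$ for every admissible $h'$. Finally, the degenerate case $W_t(h)=-1$ is covered by the same correspondence: a witness on either side produces one on the other, so the absence of any witness on one side forces $-1$ on both.
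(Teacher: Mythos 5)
Your proposal is correct and follows essentially the same route as the paper's proof: both exploit that $G_t=G_{t'}$ so the same quasi-infection time functions witness both tables, and both reduce the argument to checking the forgotten vertex $v$, where the side condition ($f'(v)\in\{t_1,t_2\}$ or $p'(v)=0$) corresponds exactly to the requirement on vertices of $V(G_t)\setminus B_t$. Your explicit use of the quasi-infection property to rule out statuses $z,o_1,o_2$ for $v$ in the forward direction is a point the paper leaves implicit, but the argument is the same.
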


\begin{proof}

First suppose that $\max_{h'} W_{t'}(h') = -1$, i.e., for each valid index $h'=(p',f')$ as stated in the Lemma, $W_{t'}(h') = -1$. Then, since $G_t = G_{t'}$, $B_t \cup \{v\} = B_{t'}$ and each $p'$ and $f'$ are extensions of, respectively, $p$ and $f$ for $B_{t'}$, we have that there is no quasi-infection time function $g$ of $G_t$ such that each vertex $u\in V(G_t)\setminus B_t$, where $g(u) > 0$, has at least two neighbors $u_1$ and $u_2$ such that $g(u_1) < g(u)$ and $g(u_2) < g(u)$; $p$ is $g$ restricted to $B_t$ and $g$ does not contradict $f$ regarding the vertices in $B_t$. Therefore, $W_t(h) = -1 = \max_{h'} W_{t'}(h')$.

Now, suppose that $\max_{h'} W_{t'}(h') > -1$. Let us prove that $W_t(h) = \max_{h'} W_{t'}(h')$.

First, we are going to prove that $W_t(h) \geq \max_{h'} W_{t'}(h')$. Let $i = (x,y)$ be an index that realizes the previous maximum and let $g$ be the quasi-infection time function of $G_{t'}$ such that $W_{t'}(i) = \max_{r \in V(G_{t'})} g(r)$ and each vertex $r\in V(G_t)\setminus B_t$, where $g(r) > 0$, has at least two neighbors $u_1$ and $u_2$ such that $g(u_1) < g(r)$ and $g(u_2) < g(r)$; $x$ is $g$ restricted to $B_t$ and $g$ does not contradict $y$ regarding the vertices in $B_t$.

Since $g$ does not contradict $y$ and $G_t = G_{t'}$, we have that each vertex $r \in \{v\} \cup (V(G_t') \setminus B_t') = V(G_t) \setminus B_t$ either has at least two neighbors $z_1$ and $z_2$ such that $g(z_1) < g(r)$ and $g(z_2) < g(r)$ or $g(r) = 0$. 

Also, since $x$ is an extension of $p$ for $B_{t'}$ and $x$ is $g$ restricted to $B_{t'}$ and $B_t \subseteq B_{t'}$, we have that $p$ is $g$ restricted to $B_t$.

Additionally, since $g$ does not contradict $y$ regarding the vertices in $B_{t'}$ and $y$ is an extension of $f$ for $B_{t'}$, $g$ does not contradict $f$ regarding the vertices in $B_t$.

Therefore, we have that $W_t(h) = \max_{g'} max_{r \in V(G_t)} g'(r) \geq \max_{r \in V(G_t)} g(r) = W_{t'}(i) = \max_{h'} W_{t'}(h')$.

Now, let us prove that $W_t(h) \leq \max_{h'} W_{t'}(h')$. We have that $W_t(h) = \max_g' \max_{v \in V(G_t)} g'(v)$. Let $g$ be a quasi-infection time function of $G_t$ that realizes this maximum. Let $p'$ be $g$ restricted to $B_{t'}$, which is an extension of $p$ for $B_{t'}$, and $f'$ be set accordingly to $g$, which is an extension of $f$ for $B_{t'}$.

Since $v \in G_t \setminus B_t$ and $f'$ was set accordingly to $g$, we have that either $f'(v)=t_1$ or $f'(v)=t_2$ or $p'(v) = 0$.

Thus, $g$ is also a quasi-time infection function of $G_{t'}$ that realizes $\max_{g'} max_{r \in V(G_{t'})} g'(r)$ and, since also each vertex $r \in V(G_{t'}) \setminus B_{t'}$, where $p'(r) > 0$, has at least two neighbors $z_1$ and $z_2$ such that $g(z_1) < g(r)$ and $g(z_2) < g(r)$; $p'$ is $g$ restricted to $B_{t'}$ and $g$ does not contradict $f'$ regarding the vertices in $B_{t'}$, then $W_{t'}(p',f') = \max_{r \in V(G_t)} g(r)$.

Therefore, we have that $W_t(h) = \max_{r \in V(G_t)} g(r) = W_{t'}(p',f') \leq \max_{h'} W_{t'}(h')$.
\end{proof}

\begin{lemma}
Let $t$ be a introduce node with child $t'$ such that $B_t = B_{t'} \cup \{v\}$. Let $h = (p,f)$ be a valid index of $W_t$. Also, let $p'$ be $p$ restricted to $B_{t'}.$ Then, if $\max_{f'} (W_{t'}(p',f')) > -1$, $W_t(h) = \max (p(v), \max_{f'} (W_{t'}(p',f'))$, where $f'$ iterates over all functions $f' : B_{t'} \rightarrow \{z,o_1,o_2,t_1,t_2\}$ such that, for all $r \in B_{t'} \cap N(v)$ where $p(r) > p(v)$, we have:
\begin{enumerate}
\item If $p(v) < p(r)-1$ and $f(r) = o_1$ then $f'(r) = z$;
\item If $p(v) = p(r)-1$ and $f(r) = o_2$ then $f'(r) = z$;
\item If $p(v) = p(r)-1$ and $f(r) = t_1$ then either $f'(r) = o_1$ or $f'(r) = t_1$;
\item If $p(v) < p(r)-1$ and $f(r) = t_1$ then either $f'(r) = o_2$ or $f'(r) = t_2$;
\item If $p(v) = p(r)-1$ and $f(r) = t_2$ then either $f'(r) = o_2$ or $f'(r) = t_2$;
\end{enumerate}

And for all other vertices $r$, $f'(r) = f(r)$. If, for all such $f'$, $\max_{f'} (W_{t'}(p',f')) = -1$, then $W_t(h) = -1$.
\end{lemma}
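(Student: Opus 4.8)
The plan is to exploit the structure of an introduce node. Since $v$ is introduced at $t$, it occurs in no bag of $\mathcal{T}_{t'}$, so by property (iii) of tree decompositions $V(G_t)=V(G_{t'})\cup\{v\}$ and $N_{G_t}(v)\subseteq B_{t'}$; consequently $G_t$ is $G_{t'}$ together with the single vertex $v$ and the edges from $v$ to $N(v)\cap B_{t'}$. Every quasi-infection time function $g$ of $G_t$ restricts to a quasi-infection time function $g'=g|_{V(G_{t'})}$ of $G_{t'}$, and conversely any $g'$ of $G_{t'}$ extends to one of $G_t$ by setting $g(v)=p(v)$; in either direction $\max_{u\in V(G_t)}g(u)=\max\!\big(p(v),\max_{u\in V(G_{t'})}g'(u)\big)$. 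Moreover every $u\in V(G_{t'})\setminus B_{t'}$ has the same neighbourhood in $G_t$ as in $G_{t'}$ (it cannot be adjacent to $v$), so the side condition "each $u\notin$ the bag with $g(u)>0$ has two neighbours infected earlier" is preserved by restriction and by extension. Hence the whole content of the lemma is that, as $g$ ranges over the functions admissible for $W_t(p,f)$, its restriction $g'$ ranges exactly over the functions admissible for $W_{t'}(p',f')$ with $f'$ running over the described family.

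The key step is to track how the status of a vertex $r\in B_{t'}$ changes when the edge $rv$ is added. I would encode the status of $r$ (relative to $g$) by the pair $(a,b)$, where $a$ is the number of neighbours of $r$ infected at time $<g(r)-1$ and $b$ the number infected at time exactly $g(r)-1$; the quasi-infection property forces $a\in\{0,1\}$, and the five labels then correspond bijectively to $(0,0)=z$, $(1,0)=o_1$, $(0,1)=o_2$, $(0,\ge 2)=t_2$, $(1,\ge 1)=t_1$. Adding $v$ with $g(v)=p(v)$ changes $r$'s pair only when $v\in N(r)$ and $p(v)<p(r)$: it increments $a$ when $p(v)<p(r)-1$ and increments $b$ when $p(v)=p(r)-1$. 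Reading this increment backwards, from the status $f(r)$ in $G_t$ to the status $f'(r)$ in $G_{t'}$, yields precisely conditions (1)--(5); the two-valued alternatives in (3)--(5) record whether $r$ already has one or at least two neighbours at the critical time inside $G_{t'}$, and taking the maximum over $f'$ covers both. One then checks that validity of $h$ rules out every remaining combination: if $v\in N(r)$ and $p(v)<p(r)$ with $f(r)\in\{z,o_2,t_2\}$ (resp. $f(r)\in\{z,o_1\}$ in the case $p(v)=p(r)-1$), the corresponding bullet of the definition of a valid index fails; and for all other $r$ the neighbourhood of $r$ is unchanged, so $f'(r)=f(r)$. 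Finally, because $N_{G_t}(v)\subseteq B_{t'}$, the status of $v$ itself is determined by $p$ on $N(v)\cap B_{t'}$, and validity of $h$ forces it to equal $f(v)$ (the degenerate cases, e.g.\ $p(v)=0$, are absorbed into this realizability check).

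With this correspondence the proof assembles as in the forget-node case. If $W_{t'}(p',f')=-1$ for every $f'$ in the family, then no admissible $g'$ exists for any of these $(p',f')$, hence no admissible $g$ exists for $(p,f)$ (its restriction would be such a $g'$), so $W_t(h)=-1$. Otherwise, for $W_t(h)\le\max(p(v),\max_{f'}W_{t'}(p',f'))$ I take a $g$ realizing $W_t(h)$, restrict to $g'$, read off the induced $f'$ (which by the case analysis lies in the family), and use $\max_{V(G_t)}g=\max(p(v),\max_{V(G_{t'})}g')\le\max(p(v),W_{t'}(p',f'))$; for the reverse inequality I pick $f'$ attaining $\max_{f'}W_{t'}(p',f')$ and a $g'$ realizing it, extend by $g(v)=p(v)$, and verify (via the bullets above and validity of $h$) that $g$ is a quasi-infection time function, still satisfies the outside-bag side condition, has $g|_{B_t}=p$, and is consistent with $f$, so it is admissible for $W_t(h)$. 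The main obstacle is exactly the bookkeeping of these status changes --- pinning down the five conditions, the correct direction of each implication, and the boundary cases --- which is routine once the $(a,b)$-encoding is in place. As a side remark, $f'$ agrees with $f$ outside $\{r\in B_{t'}\cap N(v):p(r)>p(v)\}$ and branches binarily there, so at most $2^{w}$ functions $f'$ need to be examined and $W_t(h)$ is obtained from the child table in $O(2^{w}\cdot\mathrm{poly}(w))$ time.
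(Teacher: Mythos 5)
Your proposal is correct and follows essentially the same route as the paper's proof: the same structural observation that $N_{G_t}(v)\subseteq B_{t'}$, the same bijection between admissible quasi-infection time functions of $G_t$ and those of $G_{t'}$ with $f'$ ranging over the listed family, and the same two inequalities plus the $-1$ case. Your explicit $(a,b)$-encoding of the statuses merely makes precise the ``simple case analysis'' that the paper invokes without detail, so it is a welcome refinement rather than a different argument.
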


\begin{proof}
First suppose that $\max_{f'} W_{t'}(p',f') = -1$, then we have that there is no quasi-infection time function $g$ of $G_{t'}$ that extends $p'$ and respects some $f'$ where $f'$ is as in the Lemma.

Since $V(G_t) = V(G_{t'}) \cup \{v\}$, $B_t = B_{t'} \cup \{v\}$ and $p'$ is $p$ restricted to $B_{t'}$, by a simple case analysis on $f'$, we can conclude that there is no quasi-infection time function $g$ of $G_t$ that extends $p$ and does not contradict $f$. Therefore, $W_t(h) = -1$.

Now, suppose that $\max_{f'} W_{t'}(p',f') > -1$. Let us prove that $W_t(h) = \max (p(v), \max_{f'} (W_{t'}(p',f'))$.

First, we are going to prove that $W_t(h) \geq \max (p(v), \max_{f'} (W_{t'}(p',f'))$. Let $y : B_{t'} \rightarrow \{z,o_1,o_2,t_1,t_2\}$ be a function that realizes the previous maximum ($\max_{f'} (W_{t'}(p',f')$) and let $g'$ be the quasi-infection time function of $G_{t'}$ such that $W_{t'}(p',y) = max_{r \in V(G_{t'})} g'(r)$ where each vertex $r \in V(G_{t'}) \setminus B_{t'}$, where $g'(r) > 0$, has at least two neighbors $z_1$ and $z_2$ such that $g'(z_1) < g'(r)$ and $g'(z_2) < g'(r)$; $p'$ is $g'$ restricted to $B_{t'}$ and $g'$ does not contradict $y$ regarding the vertices in $B_{t'}$.

Let $g$ be an extension of $g'$ for $B_t$ such that $g(v) = p(v)$. So, we have that $g$ is a quasi-infection time function of $G_t$ where each vertex $r \in V(G_t) \setminus B_t$, where $g(r) > 0$, has at least two neighbors $z_1$ and $z_2$ such that $g(z_1) < g(r)$ and $g(z_2) < g(r)$; $p$ is $g$ restricted to $B_t$ and, by a case analysis on $y$, $g$ does not contradict $f$ regarding the vertices in $B_t$.

Therefore, we have that $W_t(h) = \max_{g'} max_{r \in V(G_t)} g'(r) \geq \max_{r \in V(G_t)} g(r) = \max (p(v),W_{t'}(p',y)) = \max (p(v),\max_{f'} W_{t'}(p',f'))$.

Now, let us prove that $W_t(h) \leq \max (p(v),\max_{f'} W_{t'}(p',f'))$. We have that $W_t(h) = \max_g' \max_{r \in V(G_t)} g'(r)$. Let $g$ be a quasi-infection time function of $G_t$ that realizes this maximum. Let $p'$ be $g$ restricted to $B_{t'}$, which implies that $p'$ is $p$ restricted to $B_{t'}$, and $y : B_{t'} \rightarrow \{z,o_1,o_2,t_1,t_2\}$ be set accordingly to $g$ restricted to $B_{t'}$, which, by a simple case analysis on $f$ and $p$, will fall in one of the cases in the Lemma.

Thus, at least one of the two cases occur:
\begin{enumerate}
\item $\max_{r \in V(G_t)} g(r) = g(v)$
\item $\max_{r \in V(G_t)} g(r) = max_{r \in V(G_{t'})} g(r)$.
\end{enumerate}

If $g(v) = \max_{r \in V(G_t)} g(r)$ then we have that $W_t(h) = \max_{r \in V(G_t)} g(r) = g(v) = p(v) \leq \max (p(v),\max_{f'} W_{t'}(p',f'))$.

On the other hand, if $\max_{r \in V(G_t)} g(r) = max_{r \in V(G_{t'})} g(r)$, then $g$ is also a quasi-time infection function of $G_{t'}$ that realizes $\max_{g'} max_{r \in V(G_{t'})} g'(r)$ and, since also each vertex $r \in V(G_{t'}) \setminus B_{t'}$, where $g(r) > 0$, has at least two neighbors $z_1$ and $z_2$ such that $g(z_1) < g(r)$ and $g(z_2) < g(r)$; $p'$ is $g$ restricted to $B_{t'}$ and $g$ does not contradict $y$ regarding the vertices in $B_{t'}$, then $W_{t'}(p',y) = \max_{r \in V(G_{t'})} g(r)$ and, therefore, we have that $W_t(h) = \max_{r \in V(G_t)} g(r) = W_{t'}(p',y) \leq \max_{f'} W_{t'}(p',f') \leq \max (p(v),\max_{f'} W_{t'}(p',f'))$.
\end{proof}

\begin{lemma}
Let $t$ be a join node with children $t_1$ and $t_2$ and let $h = (p,f)$ be a valid index of $W_t$. If there is some pair $(f_1,f_2)$ such that $\min(W_{t_1}(p,f_1),W_{t_2}(p,f_2)) > -1$, then $W_t(h) = \max_{(f_1,f_2)} \max(W_{t_1}(p,f_1),W_{t_2}(p,f_2))$ where the pair $(f_1,f_2)$ iterates over all pair of functions $f_1 : B_{t_1} \rightarrow \{z,o_1,o_2,t_1,t_2\}$ and $f_2 : B_{t_2} \rightarrow \{z,o_1,o_2,t_1,t_2\}$ such that, for each $r \in B_t$, there are $i \neq j \in \{1,2\}$ where:
\begin{enumerate}
\item If $f(r) = z$ then $f_i(r) = z$ and $f_j(r) = z$;
\item If $f(r) = o_1$ then $f_i(r) = z$ and $f_j(r) = o_1$;
\item If $f(r) = o_2$ then $f_i(r) = z$ and $f_j(r) = o_2$;
\item If $f(r) = t_1$ then either:
\begin{enumerate}
\item $f_i(r) = z$ and $f_j(r) = t_1$; or
\item $f_i(r) = o_1$ and $f_j(r) = o_2$; or
\item $f_i(r) = o_1$ and $f_j(r) = t_2$; or
\item $f_i(r) = o_2$ and $f_j(r) = t_1$; or
\item $f_i(r) = t_1$ and $f_j(r) = t_2$.
\end{enumerate}
\item If $f(r) = t_2$ then either:
\begin{enumerate}
\item $f_i(r) = z$ and $f_j(r) = t_2$; or
\item $f_i(r) = o_2$ and $f_j(r) = o_2$; or
\item $f_i(r) = o_2$ and $f_j(r) = t_2$; or
\item $f_i(r) = t_2$ and $f_j(r) = t_2$.
\end{enumerate}
\end{enumerate}
If there is no pair $(f_1,f_2)$ such that $\min(W_{t_1}(p,f_1),W_{t_2}(p,f_2)) > -1$, then $W_t(h) = -1$
\end{lemma}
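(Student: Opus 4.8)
The plan is to exploit that a join node splits the graph seen so far into two halves overlapping exactly on the bag: writing $B:=B_t=B_{t_1}=B_{t_2}$, we have $G_t=G_{t_1}\cup G_{t_2}$, $V(G_{t_1})\cap V(G_{t_2})=B$, and no edge of $G$ joins $V(G_{t_1})\setminus B$ to $V(G_{t_2})\setminus B$. Hence every non-bag vertex of $G_t$ lies in exactly one $G_{t_i}$ and has all of its $G_t$-neighbours in that same $G_{t_i}$. So a function on $V(G_t)$ is the same data as a pair $g_1,g_2$ defined on $V(G_{t_1}),V(G_{t_2})$ that agree on $B$, and the only place where admissibility of the glued function is not simply inherited side-by-side from $g_1$ and $g_2$ is at the bag vertices; the five cases of the combination rule are precisely the enumeration of how the status of a bag vertex $r$ under the glued function can arise from its statuses under $g_1$ and under $g_2$.

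First I would prove $W_t(h)\ge\max_{(f_1,f_2)}\max(W_{t_1}(p,f_1),W_{t_2}(p,f_2))$. Pick a pair $(f_1,f_2)$ meeting the stated conditions with $\min(W_{t_1}(p,f_1),W_{t_2}(p,f_2))>-1$ that achieves the maximum, and let $g_1,g_2$ be quasi-infection time functions witnessing these two table entries; both restrict to $p$ on $B$, so $g:=g_1\cup g_2$ is well defined on $V(G_t)$ and restricts to $p$ on $B$. I would then verify the three requirements for $g$ to be admissible for $W_t(h)$: (i) $g$ is a quasi-infection time function of $G_t$ --- for a non-bag vertex this is inherited from its own side, and for a bag vertex $r$ the requirement that at most one neighbour be infected strictly before time $g(r)-1$ is exactly what the combination rule is designed to guarantee; (ii) every vertex of $V(G_t)\setminus B$ with positive $g$-value has two neighbours infected strictly earlier --- again inherited side-by-side, since such a vertex keeps all its neighbours on its own side; (iii) $g$ does not contradict $f$ on $B$ --- a case check that pooling the neighbours of $r$ infected before $g(r)$ coming from the two sides yields exactly the status $f(r)$ named by the matching case. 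As $g$ is admissible, $W_t(h)\ge\max_{v\in V(G_t)}g(v)=\max(\max_{v\in V(G_{t_1})}g_1(v),\max_{v\in V(G_{t_2})}g_2(v))=\max(W_{t_1}(p,f_1),W_{t_2}(p,f_2))$.

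For the reverse inequality I would take a quasi-infection time function $g$ realizing $W_t(h)$, set $g_i:=g|_{V(G_{t_i})}$, and let $f_i$ be the status of each bag vertex induced by $g_i$ in $G_{t_i}$. The same three checks, now in the trivial direction, show each $g_i$ is admissible for the entry $W_{t_i}(p,f_i)$, so $W_{t_i}(p,f_i)\ge\max_{v\in V(G_{t_i})}g_i(v)$; and the same case check as in (iii), read in reverse, shows $(f_1,f_2)$ satisfies the compatibility conditions, with the freedom to choose, for each $r$, which side is $i$ and which is $j$ absorbing the symmetry in the statement. Hence $W_t(h)=\max_{v\in V(G_t)}g(v)=\max_i\max_{v\in V(G_{t_i})}g_i(v)\le\max_{(f_1,f_2)}\max(W_{t_1}(p,f_1),W_{t_2}(p,f_2))$. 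The last sentence of the lemma falls out of the same decomposition: if $W_t(h)>-1$ then an admissible $g$ exists, and then the induced pair $(f_1,f_2)$ is among the pairs iterated over and satisfies $\min(W_{t_1}(p,f_1),W_{t_2}(p,f_2))\ge\min_i\max_{v\in V(G_{t_i})}g_i(v)\ge 0$; contrapositively, if no such pair exists then $W_t(h)=-1$.

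The gluing-and-restriction scaffolding above is routine; the main obstacle is the finite case analysis underlying step (iii). For each bag vertex $r$ one must merge the two sets of neighbours of $r$ infected before time $g(r)$ --- recording for each whether it is infected at time exactly $g(r)-1$ or strictly before, and being careful to account consistently for neighbours of $r$ that also lie in $B$ and therefore appear on both sides --- and check that the combined status is exactly the one prescribed, including the branchings in cases 4 and 5 of the rule. This is where all five status symbols $z,o_1,o_2,t_1,t_2$ and every sub-case get used, so it has to be carried out exhaustively; once it is in place, both inequalities and the $-1$ statement are immediate from the argument above.
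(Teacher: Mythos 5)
Your proposal follows essentially the same route as the paper's own proof: both directions are established by gluing the two witnesses $g_1,g_2$ (which agree with $p$ on the common bag) into a single quasi-infection time function for $G_t$, respectively restricting a witness for $W_t(h)$ to the two sides, with the status-combination rule verified by the same finite case analysis at bag vertices, and the $-1$ case handled by the same restriction argument. Your explicit flagging of the case analysis and of bag-neighbours appearing on both sides is, if anything, slightly more careful than the paper's ``by a simple case analysis'' wording, so nothing is missing relative to the published argument.
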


\begin{proof}
First, suppose that there is no pair $(f_1,f_2)$ such that $\min(W_{t_1}(p,f_1),W_{t_2}(p,f_2)) > -1$ where $f_1$ and $f_2$ are as stated in the Lemma. Thus, we have that, for all pairs $(f_1,f_2)$ as stated in the Lemma, there is a $i$ in $\{1,2\}$, such that there is no quasi-infection time function $g_i$ of $G_{t_i}$ such that each vertex $v \in V(G_{t_i}) \setminus B_{t_i}$, where $g_i(v) > 0$, has at least two neighbors $z_1$ and $z_2$ such that $g_i(z_1) < g_i(v)$ and $g_i(z_2) < g_i(v)$; $p$ is $g_i$ restricted to $B_{t_i}$ and $g_i$ does not contradict $f_i$ regarding the vertices in $B_{t_i}$.

Suppose, by contradiction, that there is a quasi-infection time function $g$ of $G_t$ such that each vertex $v \in V(G_t) \setminus B_t$, where $g(v) > 0$, has at least two neighbors $z_1$ and $z_2$ such that $g(z_1) < g(v)$ and $g(z_2) < g(v)$; $p$ is $g$ restricted to $B_t$ and, for each vertex $v \in B_t$, $g$ does not contradict $f$ regarding the vertices in $B_t$. Letting $g_1$ be $g$ restricted to $G_{t_1}$ and $g_2$ be $g$ restricted to $G_{t_2}$ and $(f_1,f_2)$ be as stated in the Lemma, basing on $f$, we have that for all $i \in \{1,2\}$, $g_i$ is a quasi-infection time function of $G_{t_i}$ such that each vertex $v \in V(G_{t_i}) \setminus B_{t_i}$, where $g_i(v) > 0$, has at least two neighbors $z_1$ and $z_2$ such that $g_i(z_1) < g_i(v)$ and $g_i(z_2) < g_i(v)$; $p$ is $g_i$ restricted to $B_{t_i}$ and $g_i$ does not contradict $f_i$ regarding the vertices in $B_{t_i}$. However, this contradicts the former paragraph, and, hence, we have that there is no quasi-infection time function $g$ of $G_t$ such that each vertex $v \in V(G_t) \setminus B_t$, where $g(v) > 0$, has at least two neighbors $z_1$ and $z_2$ such that $g(z_1) < g(v)$ and $g(z_2) < g(v)$; $p$ is $g$ restricted to $B_t$ and, for each vertex $v \in B_t$, $g$ does not contradict $f$ regarding the vertices in $B_t$. Therefore, $W_t(h) = -1$.

Now, suppose that there is some pair $(f_1,f_2)$ such that $\min(W_{t_1}(p,f_1),W_{t_2}(p,f_2)) > -1$.

First, let us prove that $W_t(h) \geq \max_{(f_1,f_2)} \max(W_{t_1}(p,f_1),W_{t_2}(p,f_2))$. Let $(y_1,y_2)$ be a pair of functions that realizes the previous maximum ($\max_{(f_1,f_2)} \max(W_{t_1}(p,f_1),W_{t_2}(p,f_2))$). Also, for all $i \in \{1,2\}$, let $g_i$ be the quasi-infection time function of $G_{t_i}$ such that $W_{t_i}(p,y_i) = max_{r \in V(G_{t_i})} g_i(r)$ where each vertex $r \in V(G_{t_i}) \setminus B_{t_i}$, where $g_i(r) > 0$, has at least two neighbors $z_1$ and $z_2$ such that $g_i(z_1) < g_i(r)$ and $g_i(z_2) < g_i(r)$; $p$ is $g_i$ restricted to $B_{t_i}$ and $g_i$ does not contradict $y_i$ regarding the vertices in $B_{t_i}$.

Let $g$ be a quasi-infection time function of $G_t$ where $g(v) = g_1(v)$, if $v \in V(G_{t_1})$, and $g(v) = g_2(v)$, if $v \in V(G_{t_2})$. Note that, since, for all $v \in V(G_{t_1}) \cap V(G_{t_2}) = B_t$, $g_1(v) = g_2(v) = p(v)$, then $g$ is well defined.

We have that $g$ is a quasi-infection time function of $G_t$ such that each vertex $r \in V(G_t) \setminus B_t$, where $g(r) > 0$, has at least two neighbors $z_1$ and $z_2$ such that $g(z_1) < g(r)$ and $g(z_2) < g(r)$; $p$ is $g$ restricted to $B_t$ and, by a simple case analysis on $y_1$ and $y_2$, we have that $g$ does not contradict $f$ regarding the vertices in $B_t$.

Therefore, $W_t(h) = \max_{g'} \max_{r \in V(G_t)} g'(r) \geq \max_{r \in V(G_t)} g(r) = \max(W_{t_1}(p,y_1),W_{t_2}(p,y_2)) = \\ \max_{(f_1,f_2)} \max(W_{t_1}(p,f_1),W_{t_2}(p,f_2))$.

Now, let us prove that $W_t(h) \leq \max_{(f_1,f_2)} \max(W_{t_1}(p,f_1),W_{t_2}(p,f_2))$. We have that $W_t(h) = \max_g' \max_{r \in V(G_t)} g'(r)$. Let $g$ be a quasi-infection time function of $G_t$ that realizes this maximum. Let $g_1$ be $g$ restricted to $G_{t_1}$ e $g_2$ be $g$ restricted to $G_{t_2}$. Also, let $y_1 : B_{t_1} \rightarrow \{z,o_1,o_2,t_1,t_2\}$ be set accordingly to $g_1$ restricted to the graph $G_{t_1}$ and $y_2 : B_{t_2} \rightarrow \{z,o_1,o_2,t_1,t_2\}$ be set accordingly to $g_2$ restricted to the graph $G_{t_2}$. By case analysis on $f$, it is easy to see that $y_1$ and $y_2$ will fall in one of the cases in the Lemma.

Thus, we have that, for all $i \in \{1,2\}$, $g_i$ is a quasi-infection time functions of $G_{t_i}$ such that each vertex $v \in V(G_{t_i}) \setminus B_{t_i}$, where $g_i(v) > 0$, has at least two neighbors $z_1$ and $z_2$ such that $g_i(z_1) < g_i(v)$ and $g_i(z_2) < g_i(v)$; $p$ is $g_i$ restricted to $B_{t_i}$ and $g_i$ does not contradict $y_i$ regarding the vertices in $B_{t_i}$.

Additionally, note that $\max_{r \in V(G_t)} g(r) = \max_{r \in V(G_{t_1})} g_1(r)$ or $\max_{r \in V(G_t)} g(r) = \max_{r \in V(G_{t_2})} g_2(r)$ or possibly both.

Therefore, we have that $W_t(h)=\max_{r \in V(G_t)} g(r) = \max(\max_{r \in V(G_{t_1})} g_1(r),\max_{r \in V(G_{t_2})} g_2(r)) \leq \\ \max(W_{t_1}(p,y_1),W_{t_2}(p,y_2)) \leq \max_{(f_1,f_2)} \max(W_{t_1}(p,f_1),W_{t_2}(p,f_2))$.
\end{proof}

\begin{proof}[Proof of Theorem \ref{fpt-tw3}]
For each node $t$, we have that $W_t$ has $5^{w+1} \cdot n^{w+1}$ indexes. Each index can be computed by checking, in a join node, $O(10^{w+1})$ indexes of his children, in a introduce node, $O(1)$ indexes of his child, and, in a forget node, $O(n)$ indexes of his child. However, one can reduce the time needed to compute the entire table of the forget node simple by computing its table while computing the table of its child, i.e., if $t$ is the forget node and $t'$ is its child, we can initialize $W_t(h)$ with $-1$ for every index $h$ of $W_t$ and every time we compute the value $W_{t'}(p,f)$, letting $p'$ and $f'$ be, respectively, $p$ and $f$ restricted to $B_t$, if  $W_{t'}(p,f) > W_t(p',f')$ and either $f(v) = t_1$ or $f(v) = t_2$ or $p(v) = 0$, then we update the value of $W_t(p',f')$ with the value $W_{t'}(p,f)$. This would add only $O(1)$ time to compute each index of its child's table. Thus, we can find $t(G)$ in $O(n \cdot 10^{w+1} \cdot (5^{w+1}\cdot n^{w+1})) = O(50^{w+1} \cdot n^{w+2})$ time.

In a very similar way, considering that we only have to check the indexes where $p$ is a function $p : B_t \rightarrow \{0,1,\hdots,k\}$, we can show that, for each node $t$, we have that $W_t$ has $5^{w+1} \cdot k^{w+1}$ indexes and each index can be checked worst case in $O(10^{w+1})$, which happens when $t$ is a join node, and then we can find $t(G)$ in $O(n \cdot 10^{w+1} \cdot (5^{w+1} \cdot k^{w+1})) = O((50k)^{w+1}n)$ time.
\end{proof}

% ----------------------------------------------------------------------------------
% ----------------------------------------------------------------------------------
% ----------------------------------------------------------------------------------
% ----------------------------------------------------------------------------------
\section{Acknowledgments}

The statements of some of the results of this paper appeared in WG-2015 (Workshop on Graph-Theoretic Concepts in Computer Science). This research was partially supported by CNPq (Universal Proc. 478744/2013-7) and FAPESP (Proc. 2013/03447-6).

% ----------------------------------------------------------------------------------
% ----------------------------------------------------------------------------------
% ----------------------------------------------------------------------------------
% ----------------------------------------------------------------------------------

\bibliography{wg15thiagomarcilon.bbl}
\end{document}